\def\argmin{\mathop{\rm arg\,min}}
\newcommand{\R}{\mathbb{R}}
\newcommand{\cP}{\mathcal{P}}
\newcommand{\cW}{\mathcal{W}}
\newcommand{\cK}{\mathcal{K}}  
\newcommand{\cC}{\mathcal{C}}
\DeclareMathOperator{\Bin}{Bin}
\numberwithin{equation}{section}
\theoremstyle{plain}
\newtheorem{theorem}{Theorem} \newtheorem{proposition}{Proposition}[section] \newtheorem{lemma}{Lemma}[section] \newtheorem{corollary}{Corollary} \newtheorem{assumption}{Assumption} \newtheorem{remark}{Remark}[section]  
\theoremstyle{definition}
\newtheorem{example}{Example}
\title{Robust Likelihood Ratio Tests for Incomplete Economic Models\thanks{We are grateful to Tim Armstrong, Federico Bugni, Shuowen Chen, Andrew Chesher, Larry Epstein, Marc Henry, Hidehiko Ichimura, Toru Kitagawa, Michal Kolesar, Xun Lu, Francesca Molinari, Ulrich M\"{u}ller,  Whitney Newey, and Adam Rosen for their helpful comments and discussions. We thank, for their comments, the seminar, lecture, and conference participants at the following: BU, Cornell, Columbia, Duke, Hitotsubashi, HKUST, Kobe, Kyoto, Michigan State, Princeton, Rutgers, Texas A\&M, UCSD, U. Tokyo (CREPE Lectures), UWO, CEMMAP Workshops: Advances in Microeconometrics (Doshisha U., 2016, Vanderbilt, 2019), CEME, JEA Fall Meeting 2017, BU-BC joint econometrics workshop, ASSA Meeting 2018, NASMES 2018, New Frontiers in Econometrics (UConn), and Conference on Incomplete Models (Northwestern/Cemmap). We are grateful to Undral Byambadalai and Junwen Lu for their excellent research assistance.}}
\author{Hiroaki Kaido\thanks{Boston University, Email: hkaido@bu.edu. Financial support from NSF grants SES-1357643 and SES-1824344 is gratefully acknowledged. The author thanks Microsoft Research New England at which part of the research was conducted.
} \qquad Yi Zhang\thanks{Jinan University, Email: yzhangjnu@outlook.com. Financial support from the Young Scientists Fund of the National Natural Science Foundation of China (NSFC) grant No.72203077 is gratefully acknowledged.}}
\date{}
\begin{document} 
	\maketitle
\begin{abstract}
Economic models with multiple equilibria, self-selection, or weak behavioral restrictions often make set-valued predictions and therefore do not imply a unique likelihood. This paper develops robust likelihood-ratio tests for structural hypotheses in such incomplete models. We evaluate tests by their power guarantee, defined as the smallest rejection probability over all selection mechanisms compatible with the alternative, while requiring uniform size control over all null selections. Using the Huber--Strassen theory of least favorable pairs, we construct finite-sample minimax likelihood-ratio tests. The main result shows that, in repeated experiments, the least favorable pair is the product of the single-experiment least favorable pairs whenever the latent variables are independent across experiments. This product structure holds even though unrestricted selection may induce arbitrary heterogeneity and dependence in the observed outcomes. It reduces a high-dimensional robust testing problem to single-experiment calculations and yields exact finite-sample critical values and Gaussian approximations. For directed one-sided alternatives, we provide conditions under which conditioning on a selection-invariant statistic delivers exact conditional uniformly most powerful (UMP) tests with respect to the power guarantee. In the examples, these optimal tests are simple and interpretable, using selection-invariant features of the data that are directly tied to the hypothesis of interest. Monte Carlo experiments in entry-game and Roy-model designs illustrate size control, power, and the role of robust testability.

\vspace{0.3in}
\noindent\textbf{Keywords:} Incomplete models, Robust likelihood-ratio tests, Least favorable pairs

\end{abstract}

\clearpage
\onehalfspacing
\section{Introduction}
Economic models often make set-valued predictions. In a market entry game, for example, the primitives of the model may imply several equilibria \citep{bresnahan1990entry,bresnahan1991empirical,berry1992estimation,ciliberto2009market}; in a Roy model, the theory may not uniquely determine sector choice when potential outcomes are tied \citep{heckman1990empirical,mourifie2020sharp}; in auction models, weak behavioral restrictions may allow multiple bidding profiles \citep{Haile:2003to,Aradillas-Lopez:2008ab}.\footnote{Other examples include models of voting \citep{kawai2013inferring}, choice of product variety \citep{EIZENBERG:2014aa}, and network formation models \citep{Miyauchi:2016aa}.} A broad class of such \emph{incomplete models} share the structure that, given structural parameter $\theta\in \Theta$ and latent variable $u\in U$, the model predicts the set $G(u|\theta)$ of values for discrete outcome $s$.
In such settings, a structural parameter value does not imply a single distribution of the observable outcome. It implies a set of distributions, one for each admissible selection mechanism. 

This raises several challenges for hypothesis testing. First, without further assumptions, the model permits multiple distributions of the observables even if each hypothesis fully specifies the value of $\theta$. To see this, consider a simple problem in which $H_0:\theta=\theta_0~v.s. ~H_1:\theta=\theta_1$. This problem may appear as testing a simple null hypothesis against a simple alternative. However, under each hypothesis, multiple distributions may be compatible with the theory because any distribution $P$ of the outcome is consistent with $\theta$ as long as one can augment the model by finding a suitable selection mechanism that induces $P$. Therefore, even under the simplest setting, both null and alternative hypotheses can be composite (in terms of permitted distributions).
The problem becomes even more challenging when data are obtained from a sequence of experiments. If one stays agnostic about the selection, the unknown selection mechanism is allowed to be arbitrary across experiments. For example, across experiments, the true selection mechanism may vary with and be correlated through a specific variable; however, the researcher does not even know the identity of this variable.
From the researcher's viewpoint, the resulting outcome sequence is then heterogeneous and dependent in an unknown way. As a result, the usual distributional approximations underlying likelihood-ratio tests, bootstrap procedures, and central-limit arguments may fail when they are applied directly to the observed outcome sequence.  

This paper develops a likelihood-based theory of robust optimal testing for incomplete economic models. To the best of our knowledge, it is the first framework to construct finite-sample optimal tests while allowing the observed outcomes to exhibit arbitrary heterogeneity and dependence induced by unrestricted selection across repeated experiments. We impose independence on the latent variables across experiments, but place no independence, homogeneity, or parametric restrictions on the selection mechanism or on the resulting outcome sequence. We evaluate tests by their \emph{power guarantee}: the smallest rejection probability over all selection mechanisms compatible with the alternative.
An important qualification is that a nontrivial power guarantee is possible only against robustly testable alternatives. If the sets of observable distributions compatible with the null and an alternative overlap, some distribution can be generated under either hypothesis by suitable selection mechanisms. The power guarantee of any level-$\alpha$ test against that alternative can then be no greater than $\alpha$. We call such an alternative not robustly testable relative to the null. Robust testability thus determines when nontrivial power guarantees are attainable.

At the single-experiment level, the theory of \cite{Huber:1973aa} identifies the relevant likelihoods. A \emph{least favorable pair (LFP)} consists of one
distribution compatible with the null that is hardest for size control and one
distribution compatible with the alternative that is hardest for the power
guarantee. The likelihood-ratio test formed from this pair controls size
uniformly over all null selections and maximizes the power guarantee. We make
this construction operational by showing that, for the class of incomplete
models considered here, the LFP can be computed by solving a finite-dimensional
convex program whose constraints are precisely the model's sharp identifying
restrictions. Thus, sharp identifying restrictions are used not only to
characterize identified sets, but also to construct the least favorable
likelihoods that define robust optimal tests. In our leading examples, the LFP
and the resulting minimax test can be derived in closed form.

Our main result extends this construction to repeated experiments. Although
unrestricted selection permits the observed outcome sequence to be
heterogeneous and dependent, the least favorable pair for the repeated
experiment is the product of the least favorable pairs from the individual
experiments whenever the latent variables are independent. Thus, the hardest
null law for size control and the hardest alternative law for power are product
measures even though the true data-generating process need not be. This result
reduces a high-dimensional minimax problem over joint outcome distributions to
a collection of single-experiment problems. In the i.i.d. latent-variable case,
implementation requires solving the single-experiment convex program only once
and multiplying the resulting likelihood ratios across observations. The
product structure also yields exact finite-sample critical values and, under
standard moment conditions, a Gaussian approximation whose validity does not
depend on the unknown selection mechanism.

For directed one-sided alternatives, we obtain an additional uniform optimality
result. Suppose that the alternatives are indexed by a scalar $w$ along a path
$\{\theta_w:w\in\mathcal W_1\}$. If conditioning on a selection-invariant
statistic yields least favorable distributions whose likelihood ratios are
monotone in a common ordering statistic, the resulting conditional LR test is
exact and uniformly most powerful with respect to the power guarantee over the
entire directed family. Thus, while the product-LFP result gives minimax
optimality against each fixed alternative, this conditional result delivers
uniform optimality along a directed path.

This result also gives an interpretable prescription for test construction:
the optimal statistic discards variation generated only by selection and uses
the selection-invariant feature of the model that is most relevant to the
hypothesis of interest. We verify these conditions in a canonical two-player
entry game and in a Roy model. In the entry game, we test no strategic
interaction against a directed family of negative interaction effects. The
relevant feature is the monopoly-versus-duopoly comparison: conditioning
removes the irrelevant split between the two monopoly equilibria and reduces
the problem to a one-sided binomial experiment. The Roy model highlights the
role of robust testability for a directed change in the distribution of
potential outcomes. Along the chosen path, some alternatives can be made
observationally indistinguishable from the null through selection, so no
level-$\alpha$ test can guarantee lower power exceeding $\alpha$; once the
sets of admissible distributions separate, the conditional UMP test delivers
nontrivial guaranteed power.

We illustrate main implications of the theory through Monte Carlo experiments. First, a simulation result demonstrates that the robust LR test controls size even under selection mechanisms that involve unknown cluster dependence or a common shock, whereas a likelihood-ratio test calibrated under i.i.d. selection can substantially overreject. Second, in a binary entry-game, the exact conditional UMP test closely tracks its theoretical power envelope and is insensitive to how the monopoly equilibria are selected. Third, simulations based on the Roy model shows how robust testability may arise only beyond a separation threshold and how, once that threshold is crossed, the conditional UMP test delivers nontrivial guaranteed power.

Finally, we develop two extensions. For models with nuisance parameters, we
formulate a decision problem and construct Bayes--Dempster--Shafer (BDS) LR
tests that combine Bayesian parameter uncertainty with ambiguity arising from
incompleteness. A minimax theorem shows that a level-$\alpha$ test maximizing
a weighted average of power guarantees can be approximated by a sequence of
BDS tests. We also extend the framework to incorporate observable covariates.
We keep these extensions concise because the present paper focuses on optimal
testing, whereas the companion paper \citep{kaido2025universal} uses least
favorable likelihoods and universal inference to construct finite-sample-valid
confidence sets for functions and subvectors of $\theta$ in the presence of
nuisance parameters and covariates.

\subsection{Relation to the Literature}\label{ssec:literature} 
Our study is most closely related to \cite{Epstein:2016qv}, who develop robust confidence regions for repeated incomplete experiments while leaving selection-induced heterogeneity and dependence unrestricted. We adopt their repeated-experiment framework but address a different question: how to construct tests with finite-sample optimality properties. To the best of our knowledge, the present paper is the first to characterize minimax and, under additional structure, exact conditional UMP tests in this setting.

Our study builds on earlier work on incomplete models.\footnote{The analysis of an incomplete system of equations dates back to the early work of \cite{Wald1950}. Here, we focus on reviewing more recent developments in models with multiple equilibria.} In particular, the framework for the single experiment builds on that of \cite{jovanovic1989observable}, who pointed out that models with multiple equilibria lead to incomplete structures and face potential difficulty in identifying structural parameters. \cite{tamer2003incomplete} studied identifying restrictions in an incomplete simultaneous discrete response model with multiple equilibria. Since his seminal work, it has become common to use partially identifying inequality restrictions to bound parameters of interest. \cite{galichon2011set}, \cite{beresteanu2011sharp}, and \cite{Chesher:2014aa} characterized sharp identifying restrictions for a wide range of incomplete models using the theory of random sets. We also use, as a central tool, the capacities associated with random sets. As discussed above, the sharp identifying restrictions play an important role in the construction of tests that achieve robustness and statistical optimality.

Commonly used identifying restrictions take the form of moment inequalities. As such, inference methods developed for moment-inequality models \citep{chernozhukov2007estimation,andrews2010inference,bugni2010bootstrap,andrews2012inference,romano2008inference,Shicanaybugni2015inference,kaido:2019aa} are widely applied. Some procedures \citep[e.g.][]{galichon2021inference,galichon2009test,galichon2013dilation} use capacity-based statistics to construct confidence regions. These approaches generally translate the implications of incomplete models into moment restrictions and combine them with a sampling structure, often i.i.d. sampling. Complementary approaches exploit the model’s set-valued prediction structure directly, working with the induced set of admissible observable distributions. \cite{kaido_molinari_information_based_inference} use a Kullback–Leibler projection onto this set to construct score-based confidence regions that remain well defined and asymptotically valid under misspecification. \cite{li2026finitesampleinferenceincomplete} use the set-valued structure in an optimal-transport formulation to obtain confidence regions with exact finite-sample coverage under unrestricted dependence. Their focus is validity rather than optimality.

Another strand of the literature develops likelihood-based inference for partially identified and incomplete models. \cite{Chen:2011aa} treat the unknown selection mechanism as a nonparametric nuisance parameter and profile it out using a sieve likelihood. \cite{ChenChristensenTamer2018} construct simulation-based confidence sets for identified sets of full parameter vectors and subvectors, using likelihood and related criterion functions with critical values obtained from quasi-posterior draws. In companion work, \cite{kaido2025universal} combine least favorable likelihoods with universal inference to obtain finite-sample-valid confidence sets for functions and subvectors of $\theta$, including counterfactual objects, using split-sample and cross-fit LR statistics. These papers pursue different objectives: the former approaches emphasize likelihood-based confidence sets with asymptotic validity, whereas the companion paper emphasizes finite-sample validity for functional targets rather than optimality. 

Existing optimality results for partially identified models have largely been developed within moment-inequality frameworks. \cite{CANAY2010408} showed that a test based on the empirical likelihood-ratio statistic is optimal under a large-deviations criterion. For moment restrictions that are convex in the parameter, \cite{Kaido:2014aa} established that a test based on a semiparametrically efficient estimator of the identified set achieves the asymptotic power envelope against certain local alternatives. For models characterized by conditional moment inequalities, \cite{Armstrong:2014aa,Armstrong:2018aa} compared the local power properties of procedures based on Cramér–von Mises and weighted Kolmogorov–Smirnov statistics.
These studies focus on models represented through moment inequalities and develop optimality results within that framework. In contrast, we use the set-valued prediction structure directly to construct least favorable likelihoods. This produces finite-sample minimax tests and, for directed alternatives satisfying a conditional monotone-likelihood-ratio property, exact conditional UMP tests, while allowing unrestricted selection-induced heterogeneity and dependence across experiments.

Finally, our framework for inference is related to others that use limit theorems based on the theory of random sets.  As mentioned earlier, we use a Gaussian approximation to compute the critical value for the LR statistic, which is similar in spirit to the central limit theorem (CLT) in \cite{Epstein:2016qv}, whereas a different tool is used to obtain this result because of the nontrivial difference between the LR statistic we use here and their Kolmogorov--Smirnov-type statistic. In a different class of models, in which observations are set-valued, \cite{beresteanu2008asymptotic} applied a central limit theorem for random sets to make their inference.

Throughout, for any metric space $A$, we let $\Sigma_A$ denote its Borel $\sigma$-algebra. We then denote the set of Borel probability measures on $A$ by $\Delta(A)$ and equip it with the topology of weak convergence. Let $N(\mu,V)$ denote the law of a normal random vector with mean $\mu\in\mathbb R^k$ and variance-covariance matrix $V\in\mathbb R^{k\times k}$. For any integrable random vector $X$, we let $E_P[X]$ denote its expectation with respect to probability measure $P$. 

The remainder of the paper is organized as follows. Section \ref{sec:setup} introduces the model and provides illustrative examples. Section \ref{sec:theory} develops the main theoretical results. Section  \ref{sec:preliminaries} reviews belief functions, least favorable pairs, and the Huber--Strassen minimax testing result.  Section \ref{sec:repeated} establishes the product least-favorable-pair theorem for repeated experiments and derives the associated Gaussian approximation. Section \ref{ssec:computation} discusses computation of least favorable pairs using sharp identifying restrictions. Section \ref{ssec:one-sided} develops the exact conditional UMP result for directed one-sided alternatives.  Section \ref{sec:montecarlo} then provides simulation evidence. Section \ref{sec:extensions} contains extensions of the baseline framework and Section \ref{sec:conclusion} concludes. The appendices contain proofs and additional details for the examples.

\section{Setup}\label{sec:setup}
Let $S$ be a finite set of observable outcomes and let $u\in U$ denote a variable unobservable to the researcher, where $U$ is assumed to be a Polish space. Let $\Theta$ denote the parameter space. We let $m=\{m_\theta,\theta\in\Theta\}$ denote a family of Borel probability measures on $U$. For each $\theta\in\Theta$, let $G(\cdot|\theta):U\twoheadrightarrow S$ be a weakly measurable correspondence. This map shows how latent variable $u$ is mapped to a set of permissible outcomes.
Observable outcome $s$ is then a measurable selection of random set $G(u|\theta)$. As such, the model does not impose any restrictions on how $s$ is selected.
One may also introduce observable covariates to this model. As the core analysis remains unaffected, we defer the analysis of this case to Section \ref{sec:covariates}.

The incomplete structure above is summarized by tuple $(S,U,m,\Theta;G)$. Such structures arise in various economic models. To fix the ideas, we present several examples based on simplifications of well-known models. The first example is a binary response game, which is commonly used to analyze environments such as firms' entry into markets and households' joint labor supply decisions \citep{bresnahan1990entry,bresnahan1991empirical,berry1992estimation,ciliberto2009market}.
\begin{example}
	[Binary response game]\rm \label{ex:game} Consider a two-player binary response game with the following payoff:
	\begin{center}
		\begin{tabular}
			{ r|c|c| } \multicolumn{1}{r}{} & \multicolumn{1}{c}{out} & \multicolumn{1}{c}{in} \\
			\cline{2-3} out & $0,0$ & $0,u^{(2)}$\\
			\cline{2-3} in & $u^{(1)},0$ & $u^{(1)}+\theta^{(1)}, u^{(2)}+\theta^{(2)}$ \\
			\cline{2-3}
		\end{tabular}
	\end{center}
The effect of the other player's action (e.g., entry) on player $k$'s payoff is represented by $\theta^{(k)}$.
Throughout, we call $\theta=(\theta^{(1)},\theta^{(2)})'\in\Theta\subset \mathbb R^2$ the players' \emph{strategic interaction effects}. Let $U=\mathbb{R}^2$. The latent payoff shifter $u=(u^{(1)},u^{(2)})'$ follows a continuous distribution $m_\theta$. Consider pure strategy Nash equilibria in this game when $\theta^{(1)}\le 0$ and $\theta^{(2)}\le 0$.\footnote{For simplicity, we focus on games with strategic substitutes throughout. Games with strategic complements, in which $\theta^{(1)}>0,\theta^{(2)}>0$, can be analyzed similarly.} There are four possible equilibrium outcomes: $S=\{(0,0),(1,1),(1,0),(0,1)\}$. How $u$ and $\theta$ are mapped to the equilibrium outcomes is summarized by the following correspondence:
	\begin{equation}
		G(u|\theta) = \left\{
		\begin{array}{cl}
			\{(0,0)\} & u^{(1)}<0,~u^{(2)}<0\\
			\{(1,1)\} & u^{(1)}\ge-\theta^{(1)}, ~u^{(2)}\ge-\theta^{(2)}\\
			\{(1,0)\} & u\in U_1,\\
			\{(0,1)\} & u\in U_2,\\
			\{(1,0),(0,1)\} & 0\le u^{(1)}<-\theta^{(1)},~0\le u^{(2)}<-\theta^{(2)},
		\end{array}
		\right.
	\label{eq:defG} \end{equation}
	where $U_1=\{u:u^{(1)}\ge -\theta^{(1)}, u^{(2)}<-\theta^{(2)})\cup\{u:0\le u^{(1)}<-\theta^{(1)}, u^{(2)}<0\}$ and $U_2=\{u:0\le u^{(1)}<-\theta^{(1)}, u^{(2)}\ge -\theta^{(2)}\}\cup \{u:u^{(1)}<0, u^{(2)}\ge 0\}$. The model predicts multiple equilibria when each player's latent payoff shifter is between the two thresholds ($0$ and $-\theta^{(k)}$, $k=1,2$).
\end{example}

The second example is the (binary) Roy model studied in \cite{mourifie2020sharp}. 
\begin{example}[Roy model]\rm\label{ex:roy}
	Consider an individual who chooses a sector of activity $D\in \{0,1\}$ and whether to work $Y\in \{0,1\}$ in the sector. 
The binary outcome is given by $Y=Y_1D+Y_0(1-D)$, where selection indicator $D$ is determined by binary potential outcomes $(Y_0,Y_1)$ through the following structure:
\begin{align}
	D=\begin{cases}
		1&Y_1>Y_0\\
		0 \text{ or }1& Y_1=Y_0\\
		0&Y_1<Y_0.
	\end{cases}
\label{eq:Roy1}
\end{align}
Binary potential outcome $Y_d$ represents whether one has good economic prospects in sector $d\in\{0,1\}$.\footnote{ \cite{mourifie2020sharp} extended their analysis to more general settings in which $Y_d$ is discrete or continuous (or both).
As discussed in Section 2 of their paper, one could also think of the binary Roy model as a consequence of a two-step decision process in which $D$ is determined first by potential wage $Y^*_d$ in sector $d$, and whether to work in section $d$ is determined by whether $Y^*_d$ crosses a threshold.} 
This model can be mapped to the present framework by letting $s=(y,d)\in S=\{(0,0),(0,1),(1,0),(1,1)\}$ be observable outcomes and $u=(Y_0,Y_1)\in U\equiv \{(0,0),(0,1),(1,0),(1,1)\}$ be latent variables. Since $u$ is discrete, we take the probability mass function of $u$ as a parameter vector. For this, let  $\theta=(\theta^{(0,0)},\theta^{(0,1)},\theta^{(1,0)})'\in\Theta$, where $\theta^{(0,0)}=m_\theta((Y_0,Y_1)=(0,0))$, for instance, and $\Theta=\{\theta\in [0,1]^3:\theta^{(0,0)}+\theta^{(0,1)}+\theta^{(1,0)}\le 1\}$. The Roy selection in \eqref{eq:Roy1} then yields the following correspondence:
\begin{align}
	G(u)=\begin{cases}
		\{(0,0),(0,1)\}&u=(0,0)\\
		\{(1,1)\}&u=(0,1)\\
		\{(1,0)\}&u=(1,0)\\
		\{(1,0),(1,1)\}&u=(1,1).
	\end{cases}
\end{align}
The model implies a unique outcome only if the potential outcomes are ordered (e.g., an individual works in sector 1 when $Y_0=0$ and $Y_1=1$). Otherwise, it predicts multiple outcome values. The unknown selection mechanism determines tie breaking.
\end{example}

\subsection{Set of Permitted Distributions and Robustness}
We start by defining the family of probability distributions compatible with the model structure. 
For a probability measure $m$ on $ U$ and a weakly measurable correspondence $K:U\twoheadrightarrow S$, let
$\mathfrak K_m(K)$ be the set of Markov kernels $\eta$ from $U$ to $S$ such that $\eta(K(u)|u)=1$, $m$-a.s. 

For each $\theta\in\Theta$, define
\begin{align}
	\mathcal{P}_{\theta}\equiv \Big\{P\in \Delta(S): P(A)=\int_{U} \eta(A|u) dm_{\theta}(u), \text{ for some }\eta \in \mathfrak K_{m_\theta}(G(\cdot|\theta)),~A\subseteq S \Big\},
\label{eq:P_theta} \end{align}
where $\eta$ represents the unknown \emph{selection mechanism}. This set collects probability distributions $P$, for which one can find a suitable selection mechanism and make it consistent with a given parameter value $\theta$ and the model structure.
Economic theory rarely provides any guidance on selection. The researcher therefore views any distribution in $\mathcal P_\theta$ as consistent with $\theta$.

Consider testing parameter value $\theta_0$ against another value $\theta_1$ on the basis of observed outcome $s\in S.$ This is equivalent to testing the null hypothesis, $P\in\mathcal{P}_{\theta_0}$, against the alternative hypothesis, $P\in\mathcal{P}_{\theta_1}$.
Note that $\mathcal P_{\theta_0}$ and $\mathcal P_{\theta_1}$ may contain multiple (typically infinitely many) elements because the selection is left unspecified. Therefore, even for testing a single value of $\theta$ against another value, the hypotheses are composite in terms of the permitted distributions.\footnote{The composite nature of the hypotheses arises because the unknown selection is a nuisance parameter. It is possible to allow some components of structural parameter $\theta$ to be additional nuisance parameters. We analyze this extension in Section \ref{sec:nuisance}.}
Given this challenge, we pursue a robust approach to inference. That is, we construct tests that (i) control the size uniformly across distributions permitted under the null and (ii) maximize certain measures of power under the alternative.

\section{Robust Tests for Incomplete Models}\label{sec:theory}
We start with preliminaries including an introduction of the key technical tools and the extension of the Neyman--Pearson lemma by \cite{Huber:1973aa}. We then present novel results on minimax tests, LFPs in repeated experiments, and the computational aspects of the proposed tests.

\subsection{Preliminaries}\label{sec:preliminaries}
\subsubsection*{Belief Functions:} For any $\mathcal P\subseteq\Delta(S)$, define the upper and lower probabilities of $\mathcal P$ pointwise by $\nu^{*}(A)\equiv \sup_{P \in \mathcal{P} } P(A)$ and $\nu(A)\equiv\inf_{P \in \mathcal{P}}P(A),A\subset S$, respectively. These functions are conjugate to each other in the sense that $\nu^{*}(A)=1-\nu(A^{c})$ for any $A\subset S$. Under mild restrictions on $\mathcal P$, they define set functions called \emph{capacities}.\footnote{Appendix A provides the details. Some authors distinguish a capacity from its conjugate (co-capacity). For simplicity, we call both of these ``capacities'' throughout.}

For each $\theta\in\Theta$ and $A\subset S$, define $\nu_\theta$ and $\nu_\theta^*$ as the lower and upper probabilities of $\mathcal P_\theta$ defined in \eqref{eq:P_theta}:
\begin{align}
	\nu_\theta(A)\equiv\inf_{P \in \mathcal{P}_\theta}P(A),\hspace{0.4in}\text{ and }\hspace{0.4in}\nu^{*}_\theta(A)\equiv \sup_{P \in \mathcal{P}_\theta} P(A).
\label{eq:defnutheta} \end{align}
The key factor for our analysis is that  $\nu_\theta$ is a \emph{belief function} (or infinitely monotone capacity).\footnote{The infinite monotonicity of $\nu_\theta$ follows from \cite{philippe1999decision} (Theorem 3). The foundations of belief functions are given by \cite{dempster1967upper} and \cite{shafer1982belief}. See \cite{Gul:2014aa} and \cite{epstein2015exchangeable} for the axiomatic foundations of the use of belief functions in incomplete models.} From Choquet's theorem \citep[e.g.,][]{Choquet1954,philippe1999decision,Molchanov:2006aa}, it is related to the probability distribution of random set $G(u|\theta)$ as follows:
\begin{align}
	\nu_\theta(A)=m_\theta(G(u|\theta)\subseteq A),~\text{ for any }A\subset S.
\label{eq:choquet_eq} \end{align}
This representation allows us to obtain $\nu_\theta$ without explicitly solving the minimization (or maximization) in \eqref{eq:defnutheta} by computing the right-hand side of \eqref{eq:choquet_eq} directly. Another key property of the belief function is that $P\in\mathcal P_\theta$ is equivalent to the following statement:
\begin{align}
\nu_\theta(A)\le P(A),~A\subset S.\label{eq:sir}
 \end{align}
\cite{galichon2011set} used the restrictions above to characterize the smallest possible (or ``sharp'') identification region of the parameters.\footnote{\cite{galichon2011set} used the conjugate of $\nu_\theta$, which yields equivalent identifying restrictions.} Following the literature, we call these the \emph{sharp identifying restrictions} \citep[see also][]{beresteanu2011sharp,Chesher:2014aa}.\footnote{While the restrictions play a role in constructing robust tests, the sharp identified set does not play a role as the latter is an object of interest when the sampling processes reveals the unique data generating process in the limit, which is not guaranteed in our setting. See  \cite{Epstein:2016qv} for a discussion.}

\subsubsection*{Theory of \cite{Huber:1973aa}:} Our starting point is an analog of the Neyman--Pearson framework, which builds upon HS. For $\theta_0, \theta_1 \in \Theta$ such that $\mathcal P_{\theta_0}$ and $\mathcal P_{\theta_1}$ are disjoint, consider testing a simple null hypothesis, $H_0: \theta=\theta_0$, against a simple alternative hypothesis, $H_1:\theta=\theta_1$. In \emph{complete} models, in which $G$ is singleton-valued, a well-defined reduced form induces a unique likelihood function \citep{tamer2003incomplete}.
In such settings, an optimal test is an LR test, as is well known from the Neyman--Pearson lemma. In incomplete models, however, the model generally admits a (non-singleton) set $\mathcal P_\theta$ of likelihoods, which prevents us from directly applying the Neyman--Pearson lemma.

In this setting, it is useful to consider \emph{minimax tests} \citep[see][Ch. 8 for the general principles]{Lehmann:2006aa}.
Let $\phi: S \mapsto [0,1]$ denote a (possibly randomized) test. For each $P$ on $(S,\Sigma_S)$, the rejection probability of $\phi$ is
\begin{equation}
	E_{P}[\phi(s)]=\int \phi(s)dP.
\end{equation}
Let $\pi_{\theta_1}(\phi)\equiv\inf_{P_1 \in \mathcal{P}_{\theta_1}}E_P [\phi(s)]$ be the \emph{power guarantee} of $\phi$ under $\theta_1$, which is the power value certain to be obtained regardless of the unknown selection mechanism. We then call test $\phi$ a \emph{level-$\alpha$ minimax test} if it satisfies the following conditions:
\begin{equation}
	\sup_{P \in \mathcal{P}_{\theta_0}}E_P[\phi(s)]\le \alpha~,
\label{eq:size} \end{equation}
and
\begin{equation}
	\pi_{\theta_1}(\phi)\ge \pi_{\theta_1}(\tilde\phi),~\forall \tilde \phi \text{ satisfying }\eqref{eq:size}.
\label{eq:power} \end{equation}
The condition in \eqref{eq:size} imposes uniform size control. In \eqref{eq:power}, tests are ranked by their guaranteed power, reflecting a preference for robust performance across selections.

A belief function (and its conjugate) is a special case of a two-monotone (and two-alternating) capacity, a class central to robust inference \citep{Huber:1981aa}.\footnote{Capacity $\nu$ is said to be \emph{monotone of order $k$} or, for short, \emph{k-monotone} if for any $A_i\subset S,i=1\cdots,k$,
\begin{align}
	\nu\big(\cup_{i=1}^k A_i\big) \ge \sum_{I\subseteq\{1,\cdots,k\}, I\ne \emptyset}(-1)^{|I|+1}\nu\big(\cap_{i\in I}A_i\big).
\end{align}
Conjugate $\nu^*(A)=1-\nu(A^c)$ is then called a \emph{$k$-alternating} capacity.} For a class of models whose lower probabilities are two-monotone, HS showed that the rejection region of a minimax test takes the form $\{s:\Lambda(s)>t\}$ for a measurable function, $\Lambda:S\to\mathbb R$, which they called the Radon--Nikodym derivative of $\nu^*_{\theta_1}$ with respect to $\nu^*_{\theta_0}$.
Further, they showed that there exists an LFP of distributions $(Q_0,Q_1)\in\mathcal{P}_{\theta_0}\times\mathcal{P}_{\theta_1}$ such that for all $t\in\mathbb{R}_+$,
\begin{equation}\label{p:lfp*}
	Q_0(\Lambda>t)=\nu_{\theta_0}^{*}(\Lambda>t),
\end{equation}
and
\begin{equation}\label{p:lfp}
	Q_1(\Lambda>t)=\nu_{\theta_1}(\Lambda>t),
\end{equation}
where $\Lambda$ can be taken to be a version of the Radon--Nikodym derivative:
\begin{align}
	\frac{dQ_1}{dQ_0}=\Big\{\frac{q_1}{q_0}:q_j\in \frac{dQ_j}{d\upsilon},~q_j\ge 0,~j=0,1,~q_0+q_1>0\Big\},
\end{align}
where $\upsilon$ is a measure that dominates $Q_j,j=0,1$. Below, we take $\upsilon$ to be the counting measure.

Heuristically, $Q_0$ is the null-consistent distribution under which the test attains its maximal size, while $Q_1$ is the alternative-consistent distribution that is least favorable for power. The following extension of the Neyman–Pearson lemma, adapted to our setting, then follows from HS.
\begin{lemma}\label{thm:neyman_pearson}
	Let $\mathcal{P}_{\theta_0}$ and $\mathcal{P}_{\theta_1}$ be defined as in \eqref{eq:P_theta} with $\theta=\theta_0$ and $\theta=\theta_1$, respectively.
	Then, there is a level-$\alpha$ minimax test $\phi$: $S \rightarrow [0,1]$ such that
	\begin{align}
		\phi(s) = \left\{
		\begin{array}{ll}
			1 & \text{if} \quad \Lambda(s)>C\\
			\gamma & \text{if} \quad \Lambda(s)=C\\
			0 & \text{if} \quad \Lambda(s)<C,\\
		\end{array}
		\right.
	\end{align}
	where $\Lambda\in dQ_1/dQ_0$ is a version of the Radon--Nikodym derivative of the LFP $(Q_0,Q_1)\in\mathcal P_{\theta_0}\times\mathcal P_{\theta_1}$, and $(C,\gamma)$ solves $E_{Q_0}[\phi(s)]=\alpha$.
\end{lemma}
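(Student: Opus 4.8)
The plan is to obtain the result as a direct consequence of the Huber--Strassen (HS) construction combined with the classical Neyman--Pearson lemma. First I would check that the objects at hand fall within the scope of HS. By \eqref{eq:sir}, $\mathcal P_{\theta_j}=\{P\in\Delta(S):\nu_{\theta_j}(A)\le P(A),\ A\subset S\}$ is exactly the core of $\nu_{\theta_j}$; by the discussion following \eqref{eq:choquet_eq}, each $\nu_{\theta_j}$ is a belief function, hence two-monotone, with conjugate $\nu^*_{\theta_j}$ two-alternating; and since $S$ is finite, $\Delta(S)$ and each $\mathcal P_{\theta_j}$ are (weakly) compact and convex. Thus HS applies and furnishes a measurable $\Lambda:S\to\mathbb R_+$ together with an LFP $(Q_0,Q_1)\in\mathcal P_{\theta_0}\times\mathcal P_{\theta_1}$ such that $\Lambda\in dQ_1/dQ_0$ and \eqref{p:lfp*}--\eqref{p:lfp} hold. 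I would then choose $(C,\gamma)$ by the usual argument: $t\mapsto Q_0(\Lambda>t)$ is nonincreasing and right-continuous, so there is $C$ with $Q_0(\Lambda>C)\le\alpha\le Q_0(\Lambda\ge C)$, and $\gamma\in[0,1]$ is picked so that $E_{Q_0}[\phi(s)]=Q_0(\Lambda>C)+\gamma Q_0(\Lambda=C)=\alpha$. The test $\phi$ then has the stated form and, crucially, is a nondecreasing function of $\Lambda$: $\phi(s)=g(\Lambda(s))$ with $g(x)=\mathbf 1\{x>C\}+\gamma\mathbf 1\{x=C\}\in[0,1]$.

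Next I would establish size control, i.e. \eqref{eq:size}. For any $P\in\mathcal P_{\theta_0}$, \eqref{p:lfp*} gives $Q_0(\Lambda>t)=\nu^*_{\theta_0}(\Lambda>t)=\sup_{P'\in\mathcal P_{\theta_0}}P'(\Lambda>t)\ge P(\Lambda>t)$ for every $t$, so the law of $\Lambda$ under $Q_0$ first-order stochastically dominates the law of $\Lambda$ under $P$. Since $g$ is nondecreasing and bounded, $E_P[\phi(s)]=\int g\,d(\Lambda_\# P)\le\int g\,d(\Lambda_\# Q_0)=E_{Q_0}[\phi(s)]=\alpha$, where $\Lambda_\#$ denotes the pushforward. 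Taking the supremum over $P\in\mathcal P_{\theta_0}$ gives \eqref{eq:size}, with equality attained at $Q_0$.

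Then I would compute the lower power of $\phi$ and prove minimaxity. Symmetrically, \eqref{p:lfp} gives $Q_1(\Lambda>t)=\nu_{\theta_1}(\Lambda>t)=\inf_{P_1\in\mathcal P_{\theta_1}}P_1(\Lambda>t)\le P_1(\Lambda>t)$ for every $t$ and every $P_1\in\mathcal P_{\theta_1}$, so the law of $\Lambda$ under $Q_1$ is stochastically dominated by that under any $P_1$; with $g$ nondecreasing and $Q_1\in\mathcal P_{\theta_1}$ this yields $\pi_{\theta_1}(\phi)=\inf_{P_1\in\mathcal P_{\theta_1}}E_{P_1}[\phi(s)]=E_{Q_1}[\phi(s)]$. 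Now fix any $\tilde\phi$ satisfying \eqref{eq:size}. Since $Q_0\in\mathcal P_{\theta_0}$ we have $E_{Q_0}[\tilde\phi(s)]\le\alpha$. Because $\Lambda\in dQ_1/dQ_0$, $\phi$ is precisely the Neyman--Pearson test of the simple hypothesis $Q_0$ against the simple alternative $Q_1$ at level $\alpha$, hence most powerful among all level-$\alpha$ tests; therefore $E_{Q_1}[\tilde\phi(s)]\le E_{Q_1}[\phi(s)]$. Combining, $\pi_{\theta_1}(\tilde\phi)=\inf_{P_1\in\mathcal P_{\theta_1}}E_{P_1}[\tilde\phi(s)]\le E_{Q_1}[\tilde\phi(s)]\le E_{Q_1}[\phi(s)]=\pi_{\theta_1}(\phi)$, which is \eqref{eq:power}. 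Hence $\phi$ is a level-$\alpha$ minimax test.

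The main obstacle is not this chain of (in)equalities, which is routine once the LFP is in hand, but rather the two structural facts that make HS applicable and usable here: (i) that the $\nu_{\theta_j}$ produced by the incomplete model are genuine two-monotone capacities whose cores are exactly $\mathcal P_{\theta_j}$, which follows from the belief-function representation \eqref{eq:choquet_eq} and the sharp identifying restrictions \eqref{eq:sir}; and (ii) that the function $\Lambda$ delivered by HS's construction may simultaneously be taken as a bona fide Radon--Nikodym derivative $dQ_1/dQ_0$ of the two LFP measures, which is exactly what licenses the appeal to the classical Neyman--Pearson lemma in the last step. The argument is, moreover, insensitive to whether $\mathcal P_{\theta_0}$ and $\mathcal P_{\theta_1}$ overlap: if they do, the construction still returns a (possibly trivial) minimax test.
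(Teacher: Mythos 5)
Your proposal is correct and follows essentially the same route as the paper: verify via Choquet's theorem that each $\nu_{\theta_j}$ is a belief function whose core is exactly $\mathcal P_{\theta_j}$, so that $\nu^*_{\theta_0}$ and $\nu^*_{\theta_1}$ are two-alternating, and then invoke Huber--Strassen. The only difference is that the paper stops at citing HS's Theorem 4.1, which already contains the minimax conclusion, whereas you unpack that citation by re-deriving size control and minimaxity from the LFP properties \eqref{p:lfp*}--\eqref{p:lfp} via stochastic dominance and the classical Neyman--Pearson lemma --- precisely the argument the paper itself deploys later (through Theorem 8.1.1 of Lehmann and Romano) in the proof of Theorem \ref{thm:cs}.
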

Lemma \ref{thm:neyman_pearson} characterizes a level-$\alpha$ minimax test as an LR test based on the LFP. Intuitively, a large value of the likelihood ratio provides evidence against the null, and the lemma shows that rejecting for sufficiently large values of this ratio is minimax optimal.
 This is an existence and characterization result useful for obtaining the more general results below. We discuss the computational aspects in Section \ref{ssec:computation}.

\subsubsection*{Testability of Hypotheses:}
Before proceeding further, we comment on the testability of the hypotheses. The theory of HS requires that $\mathcal P_{\theta_0}$ and $\mathcal P_{\theta_1}$ are disjoint.
Otherwise, any test is vacuous from the minimax viewpoint because probability distribution $P\in \mathcal P_{\theta_0}\cap \mathcal P_{\theta_1}$ is consistent with both hypotheses. If this is the case, we say $\theta_1$ is not \emph{robustly testable} relative to $\theta_0$ because the power guarantee of any level-$\alpha$ test cannot exceed $\alpha$.
One should therefore expect nontrivial power guarantee only if an alternative hypothesis induces set $\mathcal P_{\theta_1}$ that does not intersect with $\mathcal P_{\theta_0}$.
\setcounter{example}{0}
\begin{example}[Binary response game (continued)]
Consider testing $H_0:\theta=0$ against $H_1:\theta=\theta_1$ with $\theta_1<0$. Suppose $u$ is continuously distributed over $\mathbb R^2$ with strictly positive density. Let $\bar A=\{(1,1)\}$. Then, $\nu^*_{\theta_1}(\bar A)<\nu_{0}(\bar A)$, ensuring $\mathcal P_{0}\cap \mathcal P_{\theta_1}=\emptyset$ for any $\theta_1<0$.
\end{example}

The lack of robust testability is also related to the notion of \emph{observational equivalence} \citep[see][and references therein]{Chesher:2014aa}. Let $s$ follow distribution $P$ and suppose $P$ is known. Consider parameter values $\theta,\theta'\in\Theta$ such that $\theta\ne\theta'$, $P\in \mathcal P_{\theta}$, and $P\in \mathcal P_{\theta'}$. In other words, the true distribution can be justified by structure $\theta$ augmented with some selection or by another structure $\theta'$ (again augmented with some selection). When this holds, $\theta$ and $\theta'$ are said to be observationally equivalent with respect to $P$. In incomplete models, $P$ is not in general identifiable, as the sampling process does not necessarily reveal it even asymptotically \citep{Maccheroni:2005aa,Epstein:2016qv}. Following \cite{Chesher:2014aa}, we say that $\theta$ and $\theta'$ are \emph{potentially observationally equivalent} if two structures are observationally equivalent for some $P$.
Clearly, any pair of potentially observationally equivalent parameter values are not robustly testable, as $\mathcal P_{\theta}$ and $\mathcal P_{\theta'}$ share a distribution in common. We examine the practical consequences of this feature in the Monte Carlo experiments below.

\subsection{Minimax Tests in Repeated Experiments}\label{sec:repeated}
Consider a sequence of outcomes $s^n=(s_1,s_2,\cdots,s_n)$ generated from repeated experiments. We present a set of results that characterize a minimax test in such a setting and provide an asymptotic Gaussian approximation to its (upper) rejection probability.

For any set $A$, let $A^n$ denote the $n$-fold Cartesian product of $A$.  For each $n\in\mathbb N$, let $S^n$ and $U^n$ be the sets of outcome sequences $s^n=(s_1,s_2,\cdots,s_n)$ and latent variable sequences $u^n=(u_1,\cdots,u_n)$, respectively. Below, we use the abbreviation $m^n$ to denote the family $\{m^n_{\theta}\}_{\theta \in \Theta}$ of $u^n$'s joint laws permitted by the model. We then make the following assumption on each member of this family.
\begin{assumption}\label{as:csiid}
	For each $\theta\in\Theta$, $m^n_\theta\in\Delta(U^n)$ is a product measure.
\end{assumption}
This assumption requires that $u_i$'s are distributed independently across experiments. A leading case is that $(u_1,\dots,u_n)$ are i.i.d. This can also accommodate heteroskedasticity and other types of heterogeneity across cross-sectional units or clusters of them (e.g., group-specific effects).

The Huber--Strassen characterization provides an associated likelihood-ratio statistic $\Lambda_i$ satisfying
\begin{align}
Q_{0,i}(\Lambda_i>t)
&=
\nu^*_{\theta_0,i}(\Lambda_i>t),
\label{eq:single-null-tail}\\
Q_{1,i}(\Lambda_i>t)
&=
\nu_{\theta_1,i}(\Lambda_i>t),
\label{eq:single-alt-tail}
\end{align}
for all $t\geq0$. In what follows, we assume that $Q_{1,i}\ll Q_{0,i}$ and a finite-valued $\Lambda_i$ is selected.

Without further assumptions, $s^n$ takes values in the Cartesian product of the sets of permissible outcome values:
\begin{align}
	G^n(u^n|\theta)=\prod_{i=1}^n G(u_i|\theta),
\label{eq:ginf} \end{align}
where $G(\cdot|\theta)$ is given as in \eqref{eq:defG}.\footnote{It is possible to allow the functional form of $G(u_i|\theta)$ to vary across $i$ as well. For notational simplicity, we do not explicitly consider this extension here. We introduce the heterogeneity of $G$ due to covariates in Section \ref{sec:covariates}.} This set collects outcome sequences that are compatible with the model and $\theta$. We represent the repeated experiments by the tuple $(S^n,U^n,\Theta, G^n;m^n).$

For each $\theta\in\Theta$ and $n\in \mathbb N$, the set of distributions compatible with the model is
\begin{align}
	\mathcal P^n_\theta=\Big\{P\in \Delta(S^n):P(A)=\int \eta^n(A|u^n) dm^n_\theta(u^n), ~\text{ for some } \eta^n \in \mathfrak K_{m^n_\theta}(G^n(\cdot|\theta)), A\subseteq S^n\Big\}.
\label{eq:P_thetainfty} \end{align}
This set collects all distributions of $s^n$ consistent with $\theta$. $\eta^n$ is unrestricted in the sense that the selection may be heterogeneous and dependent across experiments. Hence, $\mathcal P^n_\theta$ contains a broad range of distributions that can exhibit arbitrary dependence and heterogeneity. In particular, $\mathcal P^n_\theta$ allows measures under which the distributions of sample moments are not well approximated by classical limit theorems---even in large samples (see \cite{Epstein:2016qv}).

Finding the LFP in such a rich set of distributions may be challenging. However, under Assumption \ref{as:csiid} and with the correspondence in \eqref{eq:ginf}, the model has a tractable ``product'' structure, which significantly simplifies the characterization of the LFPs.
Let $\nu_\theta^{n,*}$ and $\nu_\theta^n$ denote the upper and lower probabilities of $\mathcal P_\theta^n$. For each $i\in\{1,\dots,n\}$ and $\theta\in\Theta$, let
\begin{align}
	\mathcal{P}_{\theta,i}\equiv \Big\{P\in \Delta(S): P(A)=\int_{U} \eta_i(A|u) dm_{\theta,i}(u), \text{ for some }\eta_i \in \mathfrak K_{m_{\theta,i}}(G(\cdot|\theta)), A\subseteq S \Big\},
\label{eq:P_theta_i} \end{align}
where $m_{\theta,i}$ is the $i$-th marginal distribution of $m_\theta^n.$  The following theorem shows that the minimax test in repeated experiments is an LR test and that the LFPs are product measures.

\begin{theorem}\label{thm:cs}	
	Suppose Assumption \ref{as:csiid} holds. Suppose, for each $i$, $Q_{1,i}\ll Q_{0,i}$, and a finite-valued $\Lambda_i$ is selected.
    Then, (i) an LFP $(Q_{0}^n,Q_{1}^n)\in\mathcal P_{\theta_0}^n\times\mathcal P_{\theta_1}^n$ exists such that for all $t\in\mathbb R_+$,
	\begin{align}
	\nu^{*,n}_{\theta_0}(\Lambda_n>t)&=Q_0^n(\Lambda_n>t)\\	
	\nu^{n}_{\theta_1}(\Lambda_n>t)&=Q_1^n(\Lambda_n>t),	
	\end{align}
where $\Lambda_n$ is a version of the Radon--Nikodym derivative $dQ_1^n/dQ_0^n$. The LFP consists of the product measures:
\begin{align}
	Q^n_0=\bigotimes_{i=1}^n Q_{0,i},~~\text{ and }~~Q^n_1=\bigotimes_{i=1}^n Q_{1,i},
\end{align}
where, for each $i=1,\dots,n$, $(Q_{0,i},Q_{1,i})\in \mathcal P_{\theta_0,i}\times\mathcal P_{\theta_1,i}$ is the LFP in the $i$-th experiment:

\noindent
(ii) A minimax test $\phi_n$: $S^n \rightarrow [0,1]$ can be constructed as
	\begin{align}
		\phi_n(s^n) = \left\{
		\begin{array}{ll}
			1 & \text{if} \quad \Lambda_n(s^n)>C_n\\
			\gamma_n & \text{if} \quad \Lambda_n(s^n)=C_n\\
			0 & \text{if} \quad \Lambda_n(s^n)<C_n,\\
		\end{array}
		\right. ~~\text{with}~~~ 		\Lambda_n(s^n)=\prod_{i=1}^{n} \Lambda_i,
	\label{eq:pi-n} \end{align}
where $\Lambda_i\in  dQ_{1,i}/dQ_{0,i}$ for all $i$, and $C_n$ and $\gamma_n$ are chosen so that $E_{Q_0^n}[\phi_n(s^n)]=\alpha$.
\end{theorem}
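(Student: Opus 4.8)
The core task is to show that when the latent variables are independent across experiments, the least favorable pair for the product model factorizes into the product of the single-experiment LFPs. I would organize the proof in three stages: (i) reduce the product-model testing problem to the Huber–Strassen framework by verifying that the relevant capacities are two-monotone and that a version of their Radon–Nikodym derivative exists; (ii) show that the candidate product measures $Q_0^n=\bigotimes_i Q_{0,i}$ and $Q_1^n=\bigotimes_i Q_{1,i}$ are \emph{feasible}, i.e.\ lie in $\mathcal P_{\theta_0}^n$ and $\mathcal P_{\theta_1}^n$ respectively; and (iii) verify that this product pair satisfies the defining LFP equations \eqref{p:lfp*}--\eqref{p:lfp} for the product capacities $\nu^{*,n}_{\theta_0}$ and $\nu^n_{\theta_1}$. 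Once the LFP is shown to be this product, part (ii) of the theorem follows immediately from Lemma \ref{thm:neyman_pearson} applied to the product experiment, with $\Lambda_n=dQ_1^n/dQ_0^n=\prod_i (dQ_{1,i}/dQ_{0,i})=\prod_i\Lambda_i$ by the product structure of Radon–Nikodym derivatives.

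\textbf{Feasibility of the product pair.} For stage (ii), fix the single-experiment LFPs $(Q_{0,i},Q_{1,i})$ guaranteed by Lemma \ref{thm:neyman_pearson} (applied with base measure $m_{\theta,i}$, the $i$-th marginal of the product measure $m_\theta^n$). By definition of $\mathcal P_{\theta_j,i}$ there are selection kernels $P^{(j)}_{i,u}\in\Delta(G(u|\theta_j))$ with $Q_{j,i}=\int P^{(j)}_{i,u}\,dm_{\theta_j,i}(u)$. The natural candidate selection kernel for the product experiment is the product kernel $P^{(j)}_{u^n}\equiv\bigotimes_{i=1}^n P^{(j)}_{i,u_i}$, which is supported on $\prod_i G(u_i|\theta_j)=G^n(u^n|\theta_j)$, hence admissible. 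Integrating against $m_{\theta_j}^n=\bigotimes_i m_{\theta_j,i}$ (here Assumption \ref{as:csiid} is used) and applying Fubini gives $\int P^{(j)}_{u^n}\,dm^n_{\theta_j}(u^n)=\bigotimes_i Q_{j,i}=Q_j^n$, so $Q_j^n\in\mathcal P^n_{\theta_j}$. This step is routine once Assumption \ref{as:csiid} is invoked; the only care needed is measurability of the product kernel, which follows from weak measurability of each $G(\cdot|\theta)$ and standard arguments on Polish spaces.

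\textbf{The LFP equations — the main obstacle.} The crux is stage (iii): showing $Q_0^n(\Lambda_n>t)=\nu^{*,n}_{\theta_0}(\Lambda_n>t)$ and $Q_1^n(\Lambda_n>t)=\nu^n_{\theta_1}(\Lambda_n>t)$ for all $t$. The one-experiment analogues hold by construction, but the product capacities $\nu^n_{\theta}$ and $\nu^{*,n}_{\theta}$ are \emph{not} simply products of the marginal capacities (capacities do not multiply like measures), so one cannot just multiply the single-experiment identities. My plan is to exploit the Choquet representation \eqref{eq:choquet_eq}: $\nu^n_{\theta_1}(A)=m^n_{\theta_1}\!\left(G^n(u^n|\theta_1)\subseteq A\right)$, combined with the product structure $G^n(u^n|\theta_1)=\prod_i G(u_i|\theta_1)$ and the product law $m^n_{\theta_1}$. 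The key geometric fact is that the event $\{\Lambda_n>t\}=\{\prod_i\Lambda_i(s_i)>t\}$ is an ``upper set'' with respect to the coordinatewise order induced by the $\Lambda_i$'s, so that a product box $\prod_i G(u_i|\theta_1)$ is contained in $\{\Lambda_n>t\}$ if and only if a certain separable threshold condition holds on the factors; this lets one express $\nu^n_{\theta_1}(\Lambda_n>t)$ as an integral over $u^n$ that factorizes, and then matching it to $Q_1^n(\Lambda_n>t)$ reduces to the single-experiment identity \eqref{p:lfp} on each factor. The argument for the upper-capacity side is dual, using $\nu^{*,n}_{\theta_0}(A)=m^n_{\theta_0}(G^n\cap A\neq\emptyset)$. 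I expect the careful bookkeeping of this ``monotone rearrangement'' — showing the product box intersects (resp.\ is contained in) the superlevel set of $\Lambda_n$ exactly when the factorwise conditions line up, and handling ties where $\Lambda_i$ takes the threshold value with positive probability — to be the genuinely delicate part; it likely requires first establishing the result for $n=2$ by a direct two-monotonicity/Choquet computation and then inducting on $n$, at each step using that $Q_{0,i}$ and $Q_{1,i}$ are mutually related through $\Lambda_i$ as in \eqref{p:lfp*}--\eqref{p:lfp}. Uniqueness of the LFP (up to the Radon–Nikodym derivative), already noted in the HS theory, then guarantees that the product pair we have constructed \emph{is} the LFP, completing part (i); part (ii) is then immediate.
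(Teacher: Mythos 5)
Your proposal is correct and follows essentially the same route as the paper: feasibility of the product pair via product selection kernels, and verification of the LFP equations through the Choquet representation and the product structure of $G^n$, reducing to the single-experiment identities plus independence of the $u_i$. The "delicate bookkeeping" you anticipate in stage (iii) dissolves once one notes that $\prod_i G(u_i|\theta)\subseteq\{\sum_i f_i(s_i)>t\}$ iff $\sum_i\min_{s_i\in G(u_i|\theta)}f_i(s_i)>t$ (with $f_i=\ln\Lambda_i$), so that $\nu^n_{\theta_1}(\Lambda_n>t)=m^n_{\theta_1}(\sum_i Y_i>t)$ with $Y_i\stackrel{d}{=}\ln\Lambda_i(s_i)$ independent, and no induction or tie-handling is needed; this is exactly the paper's Lemma \ref{lemma: prodmeasure}.
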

The LFP consists of product measures.\footnote{This result does not follow from Corollary 4.2 of HS, who assumed that a sample is independently distributed (p. 258).} Heuristically, this means that either for controlling size or maximizing power, the least favorable distribution in $\mathcal P^n_{\theta_0}$ (or $\mathcal P^n_{\theta_1}$) is a law that multiplies up the least favorable distributions in the individual experiments. 

When the $u_i$'s are i.i.d., this characterization has a particularly useful implication for the implementation. To construct a minimax test, it suffices to find the LFP $(Q_0,Q_1)\in\mathcal P_{\theta_0}\times\mathcal P_{\theta_1}$ in a \emph{single} experiment $(S,U,\Theta, G,m)$. One may then obtain the LR statistic by taking the product of their ratios across experiments. We state this result as a corollary.

\begin{corollary}\label{cor:cs}
	Suppose $(u_1,\dots,u_n)$ are identically and independently distributed. Then, a minimax test $\phi_n$: $S^n \rightarrow [0,1]$ can be constructed as
	\begin{align}
		\phi_n(s^n) = \left\{
		\begin{array}{ll}
			1 & \text{if} \quad \Lambda_n(s^n)>C_n\\
			\gamma_n & \text{if} \quad \Lambda_n(s^n)=C_n\\
			0 & \text{if} \quad \Lambda_n(s^n)<C_n,\\
		\end{array}
		\right. ~~\text{with}~~~ 		\Lambda_n(s^n)=\prod_{i=1}^{n}\Lambda_i,
 \end{align}
where $\Lambda_i\in  dQ_{1}/dQ_{0}$, $(Q_0,Q_1)\in\mathcal P_{\theta_0}\times \mathcal P_{\theta_1}$ is the LFP in $(S,U,\Theta, G,m)$, and $C_n$ and $\gamma_n$ are chosen so that $E_{Q_0^n}[\phi_n(s^n)]=\alpha$.
\end{corollary}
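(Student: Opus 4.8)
The plan is to obtain the corollary as an immediate specialization of Theorem \ref{thm:cs}. First I would note that the i.i.d.\ hypothesis on $(u_1,\dots,u_n)$ implies Assumption \ref{as:csiid}: if the $u_i$'s share a common law $m_\theta$, then $m_\theta^n=\bigotimes_{i=1}^n m_\theta$ is in particular a product measure, so Theorem \ref{thm:cs} is in force. It therefore delivers an LFP $(Q_0^n,Q_1^n)\in\mathcal P_{\theta_0}^n\times\mathcal P_{\theta_1}^n$ of the form $Q_0^n=\bigotimes_{i=1}^n Q_{0,i}$, $Q_1^n=\bigotimes_{i=1}^n Q_{1,i}$ with $(Q_{0,i},Q_{1,i})$ the LFP in the $i$-th experiment, together with the minimax LR test in \eqref{eq:pi-n} whose statistic is $\Lambda_n=\prod_{i=1}^n\Lambda_i$.

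Second I would identify each marginal experiment with the single experiment $(S,U,\Theta,G,m)$. Under the i.i.d.\ assumption every $i$-th marginal $m_{\theta,i}$ of $m_\theta^n$ equals $m_\theta$, so the set $\mathcal P_{\theta,i}$ in \eqref{eq:P_theta_i} coincides with $\mathcal P_\theta$ of \eqref{eq:P_theta} for all $i$ and all $\theta$. Hence the LFP problem in experiment $i$ is literally the LFP problem in $(S,U,\Theta,G,m)$; invoking the uniqueness of the LFP up to its Radon--Nikodym derivative from \cite{Huber:1973aa}, one may take $(Q_{0,i},Q_{1,i})=(Q_0,Q_1)$ for every $i$, where $(Q_0,Q_1)$ is the LFP in the single experiment. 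A single version $\Lambda\in dQ_1/dQ_0$ then serves as $\Lambda_i$ for all $i$, so the statistic of Theorem \ref{thm:cs}(ii) reduces to $\Lambda_n(s^n)=\prod_{i=1}^n\Lambda(s_i)$, which is exactly the statistic in the corollary. The calibration of $(C_n,\gamma_n)$ via $E_{Q_0^n}[\phi_n(s^n)]=\alpha$ transfers verbatim, since $Q_0^n=\bigotimes_{i=1}^n Q_0$ depends only on $Q_0$, and the measurability of $\Lambda_n$ is inherited because $S$ is finite.

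I do not expect a substantive obstacle: the corollary is the i.i.d.\ case of Theorem \ref{thm:cs}. The only step that warrants care is the passage from ``$(Q_{0,i},Q_{1,i})$ is the LFP in experiment $i$'' to ``$(Q_{0,i},Q_{1,i})=(Q_0,Q_1)$,'' which rests on recognizing that under i.i.d.\ sampling the marginal experiments are exact copies of $(S,U,\Theta,G,m)$ and on the \citeauthor{Huber:1973aa} uniqueness result that pins down the LFP up to the Radon--Nikodym derivative. With that identification made, the minimax property and the stated form of the test follow directly from Theorem \ref{thm:cs}.
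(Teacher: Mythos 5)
Your proposal is correct and follows essentially the same route as the paper: Corollary \ref{cor:cs} is obtained by specializing Theorem \ref{thm:cs} to the i.i.d.\ case and observing that, because the marginal law of $u_i$ (hence $\nu_\theta$ and the marginal experiment) is identical across $i$, the per-experiment LFPs can all be taken equal to the single-experiment LFP $(Q_0,Q_1)$. The paper states this as the minimizing and maximizing measures in Lemma \ref{lemma: prodmeasure} being identical across $i$, whereas you invoke the Huber--Strassen uniqueness result to justify the common choice; this is a harmless difference of emphasis, not of substance.
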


The LFP consisting of the product measures in Theorem \ref{thm:cs} provides an important link through which we may connect the incomplete model to standard frameworks. Below, we demonstrate this by studying the large-sample approximations. The Gaussian approximation can also be analyzed without identical latent distributions, but we maintain the i.i.d. case to simplify notation.\footnote{If $u_i$ is not identically distributed, one should invoke a central limit theorem for independent and not identically distributed (i.n.i.d.) sequences under $Q_0^n$ \citep[e.g.,][]{White:2001aa} to obtain a Gaussian approximation. 
}

\subsubsection*{Gaussian Approximation}
One of the consequences of the product structure is that the upper rejection probability of $\phi_n$ admits a Gaussian approximation in large samples. For ease of exposition, we assume that the $u_i$'s are i.i.d., which in turn implies that $Q_0^n$ is an i.i.d. law from Corollary \ref{cor:cs}. Hence, properly normalized sample averages follow classical limit theorems under this law. 
We use this insight to obtain an asymptotically valid critical value. Since $Q_0^n$ is the least favorable, the asymptotic size is controlled under any distribution under the null hypothesis.

Let $z_\alpha$ be the $1-\alpha$ quantile of the standard normal distribution and let $\Lambda\in dQ_1/dQ_0$. Throughout this subsection, we assume that the least favorable pair is mutually absolutely continuous and use a likelihood-ratio version such that $0<\underline\lambda
    \leq \Lambda(s)
    \leq \bar\lambda<\infty, Q_0-a.s.$ for some $\underline\lambda, \bar \lambda$. 
Since $S$ is finite, this implies $E_{Q_0}[(\ln\Lambda)^2]<\infty$.

For each $n$, let
\begin{align}
	C^*_n\equiv\exp\big(n \mu_{Q_0}+\sqrt n z_\alpha\sigma_{Q_0}\big),\label{eq:gauss_cv}
\end{align}
where $\mu_{Q_0}\equiv E_{Q_0}[\ln \Lambda(s)],$ and $\sigma^2_{Q_0}\equiv\text{Var}_{Q_0}(\ln\Lambda(s))$. Observe that $\mu_{Q_0}$ and $\sigma_{Q_0}$ depend only on the LFP but not on the unknown DGP.
Once the LFP is found, computing $\mu_{Q_0}$ and $\sigma_{Q_0}$ is straightforward because $Q_0$ is a discrete distribution and $\Lambda$ is known.
The critical value in \eqref{eq:gauss_cv} is constructed in such a way that the following convergence holds:
\begin{align}
	\sup_{P^n\in\mathcal P^n_{\theta_0}}P^n\big(\Lambda_n(s^n)> C_n^*\big)=Q^n_{0}\big(\Lambda_n(s^n)> C_n^*\big)
	\to \text{Pr}\big(Z> z_\alpha\big)= \alpha,
\end{align}
where $Z$ is a standard normal random variable. This critical value is computed without any resampling or simulation and therefore can be done so easily. Despite its simplicity, it has the advantage of being asymptotically valid even if the true DGP is highly heterogeneous and dependent.

Let $\phi^*_n$ be a test that rejects the null hypothesis if and only if $\Lambda_n>C^*_n$. The following proposition then follows.
\begin{proposition}\label{prop:asymptotics}
	Suppose the conditions of Corollary~\ref{cor:cs} hold and $E_{Q_0}[(\ln\Lambda)^2]<\infty$ holds. Then, the test controls the asymptotic size:
	\begin{align}
		\limsup_{n\to\infty}\sup_{P\in\mathcal P^n_{\theta_0}}E_{P}[\phi^*_n(s^n)]\le \alpha. \label{eq:asycontrol}
	\end{align}
	Furthermore, \eqref{eq:asycontrol} holds with equality when $\sigma^2_{Q_0}>0$.
\end{proposition}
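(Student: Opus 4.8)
The plan is to reduce the statement to a standard one-sided large-deviation/CLT argument for a triangular array of i.i.d.\ sums, using the key fact from Theorem~\ref{thm:cs} and Corollary~\ref{cor:cs} that the least favorable measure $Q_0^n$ is a product (indeed i.i.d.)\ law and that $\nu^{*,n}_{\theta_0}(\Lambda_n>t)=Q_0^n(\Lambda_n>t)$. First I would observe that, because $Q_0^n$ is least favorable for size, for every $P\in\mathcal P^n_{\theta_0}$ we have $E_P[\phi^*_n(s^n)]=P(\Lambda_n(s^n)>C^*_n)\le \nu^{*,n}_{\theta_0}(\Lambda_n>C^*_n)=Q_0^n(\Lambda_n>C^*_n)$, so that
\begin{align}
\sup_{P\in\mathcal P^n_{\theta_0}}E_P[\phi^*_n(s^n)]= Q_0^n\big(\Lambda_n(s^n)>C^*_n\big). \notag
\end{align}
Hence it suffices to analyze the right-hand side under the single i.i.d.\ law $Q_0^n=\bigotimes_{i=1}^n Q_0$. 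Next, taking logarithms and writing $Y_i\equiv \ln\Lambda(s_i)$, which are i.i.d.\ under $Q_0$ with mean $\mu_{Q_0}$ and variance $\sigma^2_{Q_0}$ (finite by hypothesis; note each $\Lambda(s_i)$ takes finitely many positive values since $S$ is finite, so $\ln\Lambda$ is bounded and all moments exist), I would rewrite
\begin{align}
Q_0^n\big(\Lambda_n>C^*_n\big)= Q_0^n\Big(\sum_{i=1}^n Y_i> n\mu_{Q_0}+\sqrt n\, z_\alpha\sigma_{Q_0}\Big)= Q_0^n\Big(\frac{\sum_{i=1}^n (Y_i-\mu_{Q_0})}{\sqrt n\,\sigma_{Q_0}}> z_\alpha\Big). \notag
\end{align}

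When $\sigma^2_{Q_0}>0$, the central limit theorem gives that the standardized sum converges in distribution to a standard normal $Z$, and since the boundary $\{Z=z_\alpha\}$ has zero probability, the portmanteau theorem yields $Q_0^n(\Lambda_n>C^*_n)\to \Pr(Z>z_\alpha)=\alpha$; combined with the display above this gives both \eqref{eq:asycontrol} and the equality claim. The degenerate case $\sigma^2_{Q_0}=0$ must be handled separately: then $Y_i=\mu_{Q_0}$ $Q_0$-a.s., so $\sum_{i=1}^n Y_i=n\mu_{Q_0}$ a.s., while the threshold is $n\mu_{Q_0}+\sqrt n z_\alpha\sigma_{Q_0}=n\mu_{Q_0}$; thus $\{\sum Y_i>n\mu_{Q_0}\}$ has $Q_0^n$-probability $0\le\alpha$, which still delivers \eqref{eq:asycontrol} (the $\limsup$ is in fact $0$). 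One should double-check the convention in \eqref{eq:gauss_cv} when $\sigma_{Q_0}=0$ so the critical value is well-defined; with the natural reading $C^*_n=e^{n\mu_{Q_0}}$ the argument goes through.

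The main obstacle—and the only place any care beyond textbook CLT is needed—is justifying the first display, i.e.\ that the supremum of the rejection probability over the (enormous, possibly non-product) set $\mathcal P^n_{\theta_0}$ is attained (or at least bounded) by the product law $Q_0^n$. This is exactly the content of Theorem~\ref{thm:cs}(i): the event $\{\Lambda_n>t\}$ is an upper level set of the Radon–Nikodym derivative $\Lambda_n=dQ_1^n/dQ_0^n$, and HS's theory (inherited through Theorem~\ref{thm:cs}) guarantees $\nu^{*,n}_{\theta_0}$ is a $2$-alternating capacity whose value on such level sets is achieved by $Q_0^n$; since $\nu^{*,n}_{\theta_0}(A)=\sup_{P\in\mathcal P^n_{\theta_0}}P(A)$ by definition, we get $P(\Lambda_n>C^*_n)\le Q_0^n(\Lambda_n>C^*_n)$ for all admissible $P$ uniformly in $n$. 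Everything else is routine: monotonicity of probabilities, the identity $\ln\prod_i\Lambda_i=\sum_i\ln\Lambda_i$, finiteness of all moments of the bounded variable $\ln\Lambda$, and the classical Lindeberg–Lévy CLT plus portmanteau.
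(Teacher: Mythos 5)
Your proposal is correct and follows essentially the same route as the paper's own proof: reduce the supremum over $\mathcal P^n_{\theta_0}$ to the single i.i.d.\ law $Q_0^n$ via the least favorable property in Theorem \ref{thm:cs}, rewrite $\{\Lambda_n>C_n^*\}$ as a standardized sum of $\ln\Lambda(s_i)$ exceeding $z_\alpha$, and apply the Lindeberg--L\'evy CLT, treating the degenerate case $\sigma^2_{Q_0}=0$ separately exactly as the paper does. The only additions are harmless elaborations (boundedness of $\ln\Lambda$ on the finite $S$, the portmanteau/continuity-point remark).
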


\subsection{Computing Least Favorable Pairs}\label{ssec:computation}
A key step toward implementing our tests is the computation of the LFPs, in which the sharp identifying restrictions play a role. 
Let $H:[0,1]\to\mathbb R$ be a twice-continuously differentiable convex function. Our proposal is to find the LFP through the following characterization:
\begin{align}\label{eq:optimization}
	(Q_0,Q_1)=\argmin_{(P_0,P_1)\in \Delta(S)^2}&~ \int H\big(\frac{dP_0}{d(P_0+P_1)}\big)d(P_0+P_1)\\
	s.t.&~ \nu_{\theta_0}(A)\le P_0(A),~A\subset S\notag\\
	&~ \nu_{\theta_1}(A)\le P_1(A),~A\subset S,\notag
\end{align}
where the constraints on $(P_0,P_1)$ are the sharp identifying restrictions.\footnote{An alternative approach would be to use the sharp identifying restrictions of \cite{beresteanu2011sharp}, which also yield a finite number of linear restrictions. While we do not pursue that here, the insights presented in this paper may be useful for constructing optimal tests in models with endogeneity. Such models are studied by \cite{Chesher:2014aa}, who obtained sharp identifying restrictions using generalized instrumental variables.} The number of restrictions can be reduced further by restricting the class of events to the \emph{core determining class} \citep[see][]{galichon2011set,Luo:2017aa}.
This is a convex program with a convex objective function and linear constraints.\footnote{The convexity of the objective function follows from the convexity of the perspective $g(x,t)= tH(x/t)$ on its domain \citep[][Sec. 3.2.6]{Boyd:2004jk}.}

Our proposal builds on Theorem 6.1 in HS, which characterizes the LFP as a solution to a more general and abstract optimization problem over $\mathcal P_{\theta_0}\times\mathcal P_{\theta_1}$. Since $\mathcal P_\theta$ is defined through an unknown selection (see \eqref{eq:P_theta}), these constraints are not directly tractable. Rewriting them in terms of the sharp identifying restrictions yields a convex program with linear constraints that can be solved efficiently \citep[e.g.,][]{Boyd:2004jk}. 

\begin{remark}
	\rm Since $S$ is finite, the program in \eqref{eq:optimization} can be simplified further. Let $p_0$ denote the probability mass function of $P_0\in \mathcal P_{\theta_0}$ and $p_1$ be defined similarly. For simplicity, suppose $p_0(s)>0$ for all $s\in S$ and let $H(x)=-\ln x$. Then, one may solve
	\begin{align}
		(q_0,q_1)=\argmin_{(p_0,p_1)\in \Delta(S)^2}&~\sum_{s\in S}\ln\Big(\frac{p_0(s)+p_1(s)}{p_0(s)}\Big)(p_0(s)+p_1(s)) \label{eq:opt_ex1}\\
		s.t. &~~~\nu_{\theta_0}(A)\le \sum_{s\in A}p_0(s),~A\subset S \notag\\
		& ~~~\nu_{\theta_1}(A)\le \sum_{s\in A}p_1(s),~A\subset S \notag.
	\end{align}
	In this finite-dimensional convex program, one minimizes Kullback--Leibler divergence $D_{KL}(p_0+p_1\|p_0)$ subject to linear constraints on $(p_0,p_1)$. One may then use efficient numerical solvers (e.g., \verb1CVX1) to obtain the LFP.
\end{remark}

We illustrate the computation of an LFP and minimax test using Example \ref{ex:game}.
\setcounter{example}{0}
\begin{example}
	[Binary response game (continued)]\rm Let $0<\alpha<1/2$. Consider testing $H_0:\theta=0$ against $H_1: \theta=\theta_1$, where $\theta_1^{(k)}<0,k=1,2$. Suppose that $u$ follows the standard bivariate normal distribution $N(0,I_2)$.

It is straightforward to calculate $\nu_\theta(A)$.

For example, let $A=\{(1,0),(1,1)\}$. From \eqref{eq:defG} and \eqref{eq:choquet_eq}, 
	\begin{align}
		\nu_\theta\big(\{(1,0),(1,1)\}\big)=m_\theta\big(G(u|\theta)\subseteq \{(1,0),(1,1)\}\big) 
		=\frac{1}{4}+\frac{\Phi(\theta^{(1)})}{2},
	\label{eq:belcalc} \end{align}
where $\Phi$ is the CDF of a standard normal random variable (see Table \ref{Table:Probbound} in Online Supplement \ref{sec:implementation} for $\nu_\theta(A)$ for other events). In more complex models, simulation-based methods can be used \citep{galichon2011set,ciliberto2009market,Epstein:2016qv}. 
	
	Suppose that $\Phi(\theta^{(k)}_1)(1-\Phi(\theta^{(-k)}_1)) \le \frac{1}{4}$ for $k=1,2$.\footnote{The form of the minimax test below depends on the relative magnitude of $\theta^{(1)}_1$ and $\theta^{(2)}_1$. See Proposition \ref{prop:entry_game} in the Online Supplement for the full description of the minimax test in Example \ref{ex:game}.} Solving \eqref{eq:opt_ex1}, we obtain the following probability mass functions of the LFP $(Q_0,Q_1)$:
	\begin{align}
		(q_0(0,0),q_0(1,1),q_0(1,0),q_0(0,1))&=\Big(\frac{1}{4},\frac{1}{4}, \frac{1}{4}, \frac{1}{4}\Big)\\
		(q_1(0,0),q_1(1,1),q_1(1,0),q_1(0,1))&=\Big(\frac{1}{4},\Phi(\theta_1^{(1)})\Phi(\theta_1^{(2)}), \frac{3-4\Phi(\theta_1^{(1)})\Phi(\theta_1^{(2)})}{8},\frac{3-4\Phi(\theta_1^{(1)})\Phi(\theta_1^{(2)})}{8}\Big).
	\end{align}
	The LR statistic $\Lambda$ is then given by
	\begin{align}
		\Lambda(s)&=
		\begin{cases}
			1& s=(0,0)\\
			4\Phi(\theta^{(1)})\Phi(\theta^{(2)})& s=(1,1)\\
			\frac{3-4\Phi(\theta_1^{(1)})\Phi(\theta_1^{(2)})}{2}&s=(1,0)\\
			\frac{3-4\Phi(\theta_1^{(1)})\Phi(\theta_1^{(2)})}{2}&s=(0,1).
		\end{cases}
	\end{align}
	An LR test based on $\Lambda$ is level-$\alpha$ when $C=\frac{3-4\Phi(\theta_1^{(1)})\Phi(\theta_1^{(2)})}{2}$ and $\gamma=2\alpha$. The test therefore reduces to
	\begin{align}
		\phi(s)=
		\begin{cases}
			2\alpha & s=(1,0) \text{ or }(0,1)\\
			0& \text{otherwise}.
		\end{cases}
		\label{eq:minmax_test_ex1} \end{align}

The intuition is straightforward. Under $H_0$, the model is complete and the four outcomes occur with equal probability because $u\sim N(0,I_2)$ (Figure \ref{fig:levelsets}, left). Under $H_1$, a region of incompleteness emerges in which both $(1,0)$ and $(0,1)$ are equilibria. Although the selection is unrestricted, the model implies a larger probability of observing $s\in\{(1,0),(0,1)\}$ than under $H_0$ (Figure \ref{fig:levelsets}, right). The robust LR test therefore treats these outcomes as evidence of strategic interaction and rejects accordingly, without requiring knowledge of the selection mechanism.

    \begin{figure}
		[htbp] \small
		\begin{center}
			\caption{Level sets of $G$ under $H_0$ (left) and $H_1$ (right)} \label{fig:levelsets}
			\begin{tikzpicture}
				[scale=0.9,domain=-3:3,>=latex] 
				\fill[fill=red,opacity=0.4] (0,0) -- (3,0) -- (3,3) -- (0,3) ; 
				\fill[fill=green,opacity=0.4] (0,0) -- (3,0) -- (3,-3) -- (0,-3) ; 
				\fill[fill=green,opacity=0.4] (0,0) -- (-3,0) -- (-3,3) -- (0,3) ;
				\draw[->] (-3,0) -- (3,0) node[right] {$u_1$}; \draw[->] (0,-3) -- (0,3) node[above] {$u_2$}; \draw (0,-0.275)node[right]{$\theta=(0,0)$} ; \draw (1,1.2)node[right]{$\{(1,1)\}$} ; \draw (1,-1.2)node[right]{$\{(1,0)\}$} ; \draw (-1,1.2)node[left]{$\{(0,1)\}$} ; \draw (-1,-1.2)node[left]{$\{(0,0)\}$} ; \draw (0,0) circle [radius = 0.035];
			\end{tikzpicture}
			\hspace{0.2in}
			\begin{tikzpicture}
				[scale=0.9,domain=-3:3,>=latex] 
				\fill[fill=red,opacity=0.4] (2,1.5) -- (3,1.5) -- (3,3) -- (2,3) ; 
				\fill[fill=green,opacity=0.4] (3,-3) -- (0,-3) -- (0,0) -- (-3,0) -- (-3,3) --  (0,3) -- (2,3) -- (2,0) -- (2,1.5) -- (3,1.5) ; \draw[-] (-3,0) -- (2,0) ; 
				\draw[->] (2,0) -- (2,3) node[above] {$u_2$}; \draw[->] (0,1.5) -- (3,1.5) node[right] {$u_1$}; 
				\draw[-] (0,-3) -- (0,1.5) ; \draw (1.9,2.1) node [right]{$\{(1,1)\}$} ; \draw (1,-1.2) node [right]{$\{(1,0)\}$} ; \draw (-1,1.2) node [left]{$\{(0,1)\}$} ; \draw (-1,-1.2) node [left]{$\{(0,0)\}$} ; \draw (0.3,0.9) node [right]{$\{(0,1),$}; \draw (0.5,0.5) node [right]{$(1,0)\}$}; \draw (2,1.5) circle [radius = 0.035]; \draw (2,1.8)node[left]{$(-\theta^{(1)},-\theta^{(2)})$} ;
			\end{tikzpicture}
		\end{center}
		Note: The area in red represents the values of $u$ under which $\{(1,1)\}$ is predicted. The area in green represents the values of $u$ under which $\{(0,1)\}$, $\{(1,0)\}$, or their union is predicted.
	\end{figure}
\end{example}
\subsection{Testing against One-sided Alternatives}\label{ssec:one-sided}
This section formalizes a useful structure that arises in examples. Specifically, once an alternative direction is fixed, conditioning on a selection-invariant statistic can reduce the robust testing problem to a one-sided experiment with a monotone likelihood ratio. In that case, the one-sided likelihood-ratio test is not only exact under the least favorable null distribution but also \emph{uniformly most powerful} in terms of conditional power guarantee. The structure appears in testing the presence of strategic interaction effects in Example \ref{ex:game} and testing the share of individuals following a specific self-selection pattern in Example \ref{ex:roy}.

Fix a null value $\theta_0\in\Theta$ and a scalar-indexed path $\{\theta_w\in\Theta, w\in \mathcal W\subseteq \mathbb R\}$. Consider testing
\begin{align*}
  H_0:\theta=\theta_0
\qquad v.s. \qquad
H_1:\theta=\theta_w,\quad w\in \mathcal W_1,  
\end{align*}
where $w$ is a real-valued index that parameterizes movement along a fixed direction or path in the parameter space. The researcher chooses a one-sided subset $\mathcal W_1\subseteq\mathcal W$ to specify the alternatives toward which power is to be directed. Throughout, $\mathcal W_1$ is taken to be a collection of robustly testable alternatives in the sense that
\[
\cP^n_{\theta_0}\cap\cP^n_{\theta_w}=\emptyset,
\qquad
w\in\mathcal W_1.
\]
This choice does not require all alternatives to $\theta_0$ to be robustly testable. In particular, there may exist values $w\notin\mathcal W_1$ for which $\cP^n_{\theta_0}\cap\cP^n_{\theta_w}\neq\emptyset$,
so that no level-$\alpha$ test can guarantee power exceeding $\alpha$. Rather, $\mathcal W_1$ identifies the portion of the directed path on which a nontrivial robust power comparison is meaningful. In applications, $\mathcal W_1$ may take forms such as $(0,\bar w]$ for some $\bar w>0$, or $(w^\ast,\bar w]$ when robust testability holds only beyond a threshold $w^\ast>0$. The latter occurs in the Roy-model example, where alternatives with $w\leq w^\ast$ are not robustly testable, while those with $w>w^\ast$ are (see Section \ref{sec:MC_Roy}). 

For each $w\in\{0\}\cup\mathcal W_1$, let $\cP^n_{\theta_w}$
denote the set of laws of the observed outcome sequence compatible with $\theta_w$. 
For each $n$, let $\mathcal K_n$ be a measurable space, and let
\begin{align*}
    K_n:S^n\to\cK_n,
    \qquad
    T_n:S^n\to\mathbb R
\end{align*}
be, respectively, a \emph{conditioning statistic} and a \emph{conditional ordering statistic}.  In examples below, $K_n$ is selection-invariant and $T_n$ is the statistic that orders the least-favorable likelihood ratios.  For $k\in\cK_n$, write
$\Omega_{n,k}=\{s^n\in S^n:K_n(s^n)=k\}.$ For each $k$ and $w$, define the set of conditional laws $\mathcal L_P(\cdot|K_n=k)$ of $s^n$ on $\Omega_{n,k}$ by
\begin{align*}
 \cC^n_{\theta_w,k}
    \equiv
    \left\{
    \mathcal L_P(s^n\mid K_n=k):
    P\in\cP_{\theta_w}^n,\;P(K_n=k)>0
    \right\}.
\end{align*}
For $w\in\mathcal W_1$, define the conditional power guarantee of a test $\phi_n$ at $\theta_w$ and $k$ by
\begin{align*}
    \underline\pi^n_{\theta_w,k}(\phi_n)
    \equiv
    \inf_{P\in\cP^n_{\theta_w}:P(K_n=k)>0}
    E_P[\phi_n(s^n)\mid K_n=k].
\end{align*}

We make the following high-level assumptions, which we verify in the examples.
For
$c\in\mathbb R$ and $\gamma\in[0,1]$, define $\tau_{c,\gamma}(t)=1\{t>c\}+\gamma 1\{t=c\}$. The assumptions are stated for upper tailed tests. The lower-tail case is obtained by replacing $T_n$ with $-T_n$.
\begin{assumption}[Conditional least favorable laws]\label{ass:lfp}
For every $k\in\cK_n$ and $w\in\mathcal W_1$, there exist distributions
\[
    Q_{\theta_0,k}\in\cC^n_{\theta_0,k},
    \qquad
    Q_{\theta_w,k}\in\cC^n_{\theta_w,k}
\]
on $\Omega_{n,k}$ such that $Q_{\theta_0,k}$ does not depend on $w$. For every $c\in\mathbb R$ and $\gamma\in[0,1]$,
\begin{align*}
     \sup_{R\in\cC^n_{\theta_0,k}}
    E_R[\tau_{c,\gamma}\{T_n(s^n)\}]
    =
    E_{Q_{\theta_0,k}}[\tau_{c,\gamma}\{T_n(s^n)\}],   
\end{align*}
and 
\begin{align*}
        \inf_{R\in\cC^n_{\theta_w,k}}
    E_R[\tau_{c,\gamma}\{T_n(s^n)\}]
    =
    E_{Q_{\theta_w,k}}[\tau_{c,\gamma}\{T_n(s^n)\}].
\end{align*}
\end{assumption}
Thus, $Q_{\theta_0,k}$ is least favorable for conditional size control against upper-tail tests, and $Q_{\theta_w,k}$ is least favorable for maximizing power against $\theta_w$.

\begin{assumption}[Monotone likelihood ratios]\label{ass:mlr}
For every $k\in\cK_n$, the family $\{Q_{\theta_0,k}\}\cup\{Q_{\theta_w,k}:w\in\mathcal W_1\}$
is dominated by a common measure $\mu_k$ on $\Omega_{n,k}$.  Moreover, for every $w\in\mathcal W_1$, the likelihood ratio can be written as
\[
    \frac{dQ_{\theta_w,k}}{dQ_{\theta_0,k}}(s^n)
    =
    \lambda_{w,k}\{T_n(s^n)\},
\]
where $\lambda_{w,k}:\mathbb R\to\mathbb R_+$ is nondecreasing.  
\end{assumption}
This means that the conditional least favorable likelihood-ratio rejection regions are upper sets in the common statistic $T_n$ for all $w\in\mathcal W_1$.
Finally, for each $k\in\cK_n$, choose $c_k\in\mathbb R$ and $\gamma_k\in[0,1]$ such that
\begin{align}
     Q_{\theta_0,k}(T_n>c_k)+\gamma_k Q_{\theta_0,k}(T_n=c_k)=\alpha .
\label{eq:cv}   
\end{align}
The following proposition characterizes the exact conditional UMP test.
\begin{proposition}[Exact conditional UMP test on a directed path]\label{prop:cond-ump}
Suppose Assumptions \ref{ass:lfp}--\ref{ass:mlr} hold.  Define
\[
    \phi_n^*(s^n)
    =
    1\{T_n>c_{K_n}\}
    +
    \gamma_{K_n}1\{T_n=c_{K_n}\},
\]
where $(c_k,\gamma_k)$ satisfies \eqref{eq:cv}.  Then:
\begin{enumerate}[label=(\roman*)]
\item $\phi_n^*$ has exact conditional robust level $\alpha$:
\[
    \sup_{P\in\cP^n_{\theta_0}:P(K_n=k)>0}
    E_P[\phi_n^*(s^n)\mid K_n=k]
    =\alpha,
    \qquad
\]
for every $k$ such that \(\cC_{\theta_0,k}^n\) is nonempty.
\item For every $w\in\mathcal W_1$ and $k\in\cK_n$, its conditional power guarantee is
\[
    \underline\pi^n_{\theta_w,k}(\phi_n^*)
    =
    Q_{\theta_w,k}(T_n>c_k)+\gamma_k Q_{\theta_w,k}(T_n=c_k).
\]
\item $\phi_n^*$ is conditionally UMP in power guarantee on $\{\theta_w:w\in\mathcal W_1\}$.  That is, for every test $\psi_n$ with conditional level $\alpha$,
\[
    \underline\pi^n_{\theta_w,k}(\phi_n^*)
    \ge
    \underline\pi^n_{\theta_w,k}(\psi_n),
    \qquad
    \forall k\in\cK_n,
    \quad
    \forall w\in\mathcal W_1.
\]
\end{enumerate}
\end{proposition}
In the next section, we apply the proposition to two examples and examine their finite sample properties.

\section{Monte Carlo Experiments}\label{sec:montecarlo}
This section evaluates the finite-sample performance of the proposed tests using three Monte Carlo simulation designs. The designs highlight the robustness of the proposed tests relative to a naive method based on an i.i.d. selection assumption, test against one-sided alternatives, and the robust testability.

The first design illustrates the product-LFP result in repeated experiments: the robust LR test controls size under
selection mechanisms that induce dependence, whereas a naive LR test calibrated under i.i.d. selection may overreject. The second design studies the exact
conditional UMP test in a symmetric entry-game alternative. The third design
uses the Roy model to illustrate robust testability: some alternatives are not robustly testable and therefore admit no nontrivial power guarantee, while
separated alternatives do.
In all experiments, we evaluate empirical size and power at the nominal level $\alpha=0.05$ and $S=10,000$ simulation replications.

\subsection{Repeated Experiments in a Binary Response Game}
First, we simulate data from repeated experiments of binary response games in Example \ref{ex:game}.

For each experiment $i = 1, \dots, n$, the payoff shifters are given by $u_i = (u_i^{(1)}, u_i^{(2)})' \sim \mathcal{N}(0, I_2)$, and drawn independently across experiments. The strategic interaction parameters are $\theta = (\theta^{(1)}, \theta^{(2)})'$, where $\theta^{(1)} \le 0$ and $\theta^{(2)} \le 0$. 
We test 
\[
H_0:\theta=\theta_0=(-0.2, -1.2)
\qquad v.s.\qquad
H_1:\theta=\theta_1=(-1.2, -0.2).
\]

As discussed in Example \ref{ex:game}, the model is incomplete when both coordinates of $\theta$ are strictly negative, and the multiple equilibrium region is $0 \le u_i^{(1)} < -\theta^{(1)}$ and $0 \le u_i^{(2)} < -\theta^{(2)}$. In this region, the set of equilibrium outcomes is $\{(1,0), (0,1)\}$. Therefore, a selection mechanism is required to resolve multiplicity. 

We consider four types of selection mechanisms. The latent variables are independent across experiments, while the observed outcomes may be dependent because of the selection rule. The first mechanism is least favorable selection. Under $H_0$, outcomes are
generated according to the least favorable null law $Q_0^n$; under $H_1$,
they are generated according to the least favorable alternative law $Q_1^n$.
Equivalently, in each experiment we implement a selection rule that induces
the corresponding one-experiment LFP marginal, and then take the product law
across experiments. The second mechanism is i.i.d. selection. It draws $V_i \sim \mathrm{Bernoulli}(1/2)$ independently across $i$ and sets $s_i = (0,1)$ if $V_i = 1$ and $s_i = (1,0)$ if $V_i = 0$. The third mechanism is clustered selection. We partition observations into clusters of size $b_n = \max\{1, \lfloor \rho n \rfloor\}$ and draw a common selection shock $Z_g \sim \mathrm{Bernoulli}(1/2)$ for each cluster $g$. If $Z_g=1$, every multiple equilibrium observation in cluster $g$ selects $(0,1)$; otherwise, it selects $(1,0)$. The tuning parameter $\rho$ therefore controls the dependence strength by determining the fraction of observations affected by a common selection shock within a cluster. The fourth mechanism is common-shock selection. We draw a selection shock $Z \sim \mathrm{Bernoulli}(1/2)$ for the entire sample. If $Z=1$, all multiple equilibrium observations select $(0,1)$; otherwise, they select $(1,0)$. This is the limiting case of clustered selection with $\rho=1$.

Under these DGPs, we compare the robust LR test with a naive i.i.d. selection LR test. Let $(Q_0^n,Q_1^n)$ be the repeated-experiment LFP for testing
$\theta_0$ against $\theta_1$, and let
\[
    \ell_n(s^n)=\log\frac{dQ_1^n}{dQ_0^n}(s^n)
    =
    \sum_{i=1}^n \log\frac{dQ_1}{dQ_0}(s_i)
\]
in the i.i.d. latent-variable case. The robust LR test rejects for large
values of $\ell_n$. The exact critical value $c_{n,\alpha}$ and
randomization probability $\gamma_{n,\alpha}$ are chosen so that
\[
    Q_0^n(\ell_n>c_{n,\alpha})
    +
    \gamma_{n,\alpha}Q_0^n(\ell_n=c_{n,\alpha})
    =
    \alpha .
\]
The naive LR test completes the model by imposing i.i.d. selection in
the multiple-equilibrium region and calibrates the LR statistic under the
corresponding product null law.

Table \ref{tab:empirical-size-representative} reports empirical size under $H_0$ across different sample size and DGPs for both tests. The robust LR test has rejection probability close to the nominal level under least favorable selection. Under other selection mechanisms, it is conservative. By contrast, the naive LR test is fragile. It has a rejection probability approximately close to the nominal level under i.i.d. selection, but it overrejects under least favorable selection and other selection dependencies.

\begin{table}[htbp]
\centering
\caption{Empirical size under $H_0$: $\theta_0=(-0.2,-1.2)$}
\label{tab:empirical-size-representative}

\small
\resizebox{\textwidth}{!}{
\begin{tabular}{l c cc cc cc cc}
\toprule
& & \multicolumn{2}{c}{$n = 100$}
  & \multicolumn{2}{c}{$n = 250$}
  & \multicolumn{2}{c}{$n = 500$}
  & \multicolumn{2}{c}{$n = 1000$} \\
\cmidrule(lr){3-4} \cmidrule(lr){5-6} \cmidrule(lr){7-8} \cmidrule(lr){9-10}
Selection mechanism & Dependence strength
& Robust & Naive
& Robust & Naive
& Robust & Naive
& Robust & Naive \\
\midrule
Least favorable & --
& 0.047 & 0.100
& 0.049 & 0.147
& 0.049 & 0.213
& 0.050 & 0.310 \\

i.i.d. selection & $\rho = 0$
& 0.023 & 0.051
& 0.013 & 0.050
& 0.007 & 0.053
& 0.002 & 0.047 \\

Clustered selection & $\rho = 0.10$
& 0.019 & 0.047
& 0.015 & 0.056
& 0.009 & 0.057
& 0.003 & 0.061 \\

Common-shock selection & $\rho = 1.00$
& 0.032 & 0.064
& 0.029 & 0.082
& 0.024 & 0.103
& 0.028 & 0.166 \\
\bottomrule
\end{tabular}
}

\end{table}

Figure \ref{fig:size_dependence} shows the empirical rejection probability of the robust and naive LR test under $H_0$ as dependence strength $\rho$ varies. The robust LR test's size remains below the nominal level $\alpha$ across all values of $\rho$. It guarantees finite-sample validity under any clustered selection mechanisms. The naive LR test is approximately valid near the i.i.d. benchmark, but its rejection probability increases and starts over-rejecting as selection dependence strengthens.

Figure~\ref{fig:power} reports power of the robust LR test under
$\theta_1$ as the sample size varies. The least favorable curve is the
power guarantee delivered by the product-LFP test. The other curves correspond
to alternative selection mechanisms and lie weakly above the least favorable
curve, as expected. In this calibration, the power curves are close across
selection mechanisms, indicating that the selected non-LFP DGPs are only
moderately more favorable than the least favorable alternative. The figure
therefore illustrates the lower-power interpretation of the robust LR test:
the LFP curve is the guaranteed rejection probability, while other selections
may generate higher power.

\subsection{Exact Conditional UMP Test}\label{ssec:ump_game}
In the second design, we continue to use Example \ref{ex:game} and consider a directed family of alternatives indexed by a scalar parameter
$w\in\mathcal W_1$:
\[
\theta_w=(-w,-w),
\qquad w\in\mathcal W_1=(0,\infty).
\]
We consider testing
\[
H_0:\theta=\theta_0=(0,0)
\qquad\text{v.s.}\qquad
H_1:\theta=\theta_w,\quad w\in\mathcal W_1.
\]
For this directed family, the robust likelihood-ratio problem admits an exact conditional UMP test.\footnote{
The symmetry of the path is not a restriction on the entry-game model. More
generally, one could consider a directed path $w\mapsto\theta_w$, such as
$\theta_w=-w(\zeta_1,\zeta_2)$ for a fixed direction
$(\zeta_1,\zeta_2)$. The conditional ordering statistic may then depend on
the path. The same high-level argument can be verified on any subset
$\mathcal W_1$ over which the conditional least favorable likelihood ratios
are monotone in a common statistic (See Online Supplement \ref{ssec:ump_game_results}).
We use the symmetric path because it
gives the simplest statistic and interpretation.
} The key observation is that the unknown equilibrium-selection rule affects only the allocation of probability between the two monopoly outcomes $(1,0)$ and $(0,1)$. It does not affect the aggregate information contained in the number of monopoly and duopoly markets.

Let $K_n(s^n)$ be the number of markets with at least one entrant, and $N_{11}(s^n)$ be the number of duopoly markets. Under $H_0$, conditional on $K_n=k$,
\[
    N_{11}\mid K_n=k \sim \mathrm{Bin}(k,p_0),
    \qquad p_0=\frac13 .
\]
This conditional experiment is easy to interpret. After conditioning
on $K_n=k$, we look only at markets with at least one entrant. Each such market
is either a duopoly or a monopoly. Selection can determine which firm is the
monopolist, but it cannot change the monopoly--duopoly classification. Hence
the robustly informative feature is the duopoly share among active markets.
Under the null this share is $1/3$, whereas under
$\theta_w=(-w,-w)$ it is $4\Phi(-w)^2/3<1/3$. The exact conditional UMP test
therefore rejects when duopolies are too rare relative to the number of active
markets.

For each $k\in\cK_n$, choose $c_k\in\{0,1,\ldots,k\}$ and $\gamma_k\in[0,1]$ such that 
\[
    P_{B\sim \mathrm{Bin}(k,p_0)}(B<c_k)+\gamma_k P_{B\sim \mathrm{Bin}(k,p_0)}(B=c_k)=\alpha.
\]
Define
\begin{align*}
    \phi_n^{\mathrm{E}}(s^n)
    =
    1\{N_{11}(s^n)<c_{K_n(s^n)}\}
    +
    \gamma_{K_n(s^n)}1\{N_{11}(s^n)=c_{K_n(s^n)}\}.
\end{align*}
This is an upper-tail test in the number of duopoly markets $ T_n(s^n)=-N_{11}(s^n).$ 
This conditional likelihood-ratio test is exact and uniformly most powerful (see Corollary \ref{cor:entry-cump}).

\begin{table}[htbp]
\centering
\caption{Empirical Size of the UMP Test}
\label{tab:ump_size}
\resizebox{0.9\textwidth}{!}{
\begin{tabular}{lcccccccc}
\toprule
& \multicolumn{8}{c}{Sample size $n$} \\
\cmidrule(lr){2-9}
& 25 
& 50 
& 100 
& 150 
& 200 
& 300 
& 500 
& 1000 \\
\midrule
Exact conditional UMP 
& 0.0506 & 0.0520 & 0.0504 & 0.0497 
& 0.0500 & 0.0490 & 0.0489 & 0.0516 \\
\bottomrule
\end{tabular}
}
\end{table}
Table \ref{tab:ump_size} reports the empirical size of the exact conditional UMP test across different sample sizes. The robust likelihood ratio test controls its size across all sample sizes, with rejection frequencies around $\alpha$.

Figure \ref{fig: localpowerandumppower} plots the test's rejection probability against the local alternative parameter $h\geq0$, where $\theta_n(h)=-h/\sqrt{n}$ for $n=1000$. The black solid line represents the theoretical envelope $1-\Phi(z_{1-\alpha}-h/\sqrt{3/\pi})$ for the power guarantee. The simulated curves are nearly indistinguishable across selection mechanisms and closely track the theoretical local-power envelope. This is expected: conditional on $K_n$, the statistic uses only the number of duopoly markets,
which is unaffected by how the two monopoly equilibria are selected. The figure confirms the selection-invariance of the conditional experiment and the exact conditional UMP property along the directed path.

\subsection{Robust Testability in Binary Roy Model}\label{sec:MC_Roy}
The third design examines the consequences of robust testability using Example \ref{ex:roy}. For $i=1,...,n$, we draw latent type $u_i=(Y_{0i},Y_{1i})\in\{(0,0),(0,1),(1,0),(1,1)\}$ independent across experiments. For size, the latent types are drawn from $\theta_0=(\frac{1}{6},\frac{1}{2},\frac{1}{6})$. For alternative path, the latent types are drawn from 
\begin{align*}
\theta_w
    =\theta_0+\big(0,-\frac{w}{6},\frac{w}{6}\big)',
    \qquad w\in[0,3].
\end{align*}
Along this path, the share of the latent type $(Y_0,Y_1)=(0,1)$ decreases whereas that of $(Y_0,Y_1)=(1,0)$ increases, offsetting the change, while keeping the shares of the other two types unchanged. The alternatives are robustly testable relative to $\theta_0$ only after the path is sufficiently separated from the null. Specifically, we take $w\in \cW_1=(1,3]$ for which $\theta_w$ is robustly testable.\footnote{See Corollary \ref{cor:roy-general-cump} and Remark \ref{rem:calibration} in Appendix \ref{ssec:UMPs}.
The choice $\theta_0=(1/6,1/2,1/6)$ is made for notational simplicity. 
For a general null $\theta_0$, the null interval for $P\{(1,0)\}$ is
$[\theta_0^{(1,0)},\, 1-\theta_0^{(0,0)}-\theta_0^{(0,1)}].$
If the directed path keeps
$\theta_w^{(0,0)}=\theta_0^{(0,0)}$ and $\mathcal W_1$ is chosen so that
$\theta_w^{(1,0)}>
    1-\theta_0^{(0,0)}-\theta_0^{(0,1)},$
the same separation, least-favorable binomial experiment, and monotone
likelihood-ratio argument go through.}  For each experiment, if $u_i=(0,0)$, we split the observation equally between $s_i=(0,0)$ and $s_i=(0,1)$. If $u_i=(1,1)$, the observed outcome is selected according to a selection mechanism.

We consider five types of selection mechanisms. The first mechanism is the least favorable or null-mimicking selection. For
$0\leq w\leq1$, it chooses $s_i=(1, 0)$ with probability $r_{LF}(w)=1-w$ when $u_i=(1, 1)$, which makes the observable
distribution mimic the least favorable null. For $w>1$, it sets $r_{LF}(w)=0$, which is least favorable for the upper-tail test.
The second mechanism is i.i.d. selection, which selects $s_i=(1, 0)$ with probability 1/2. The third mechanism is clustered selection. Observations are partitioned into clusters of size $b = \lfloor 0.10n \rfloor$. For each cluster $g$, we draw a cluster-level shock $Z_g \sim \text{Bernoulli}(1/2)$. If $Z_g = 1$, for all observations in cluster $g$ with  $u_i=(1, 1)$, we select outcome $s_i=(1, 0)$; otherwise, we select $s_i=(1, 1)$. The fourth mechanism is common-shock selection. A single shock $Z \sim \text{Bernoulli}(1/2)$ is drawn for the entire sample, and we apply the same selection rule to the entire sample. The fifth mechanism always selects
$s_i=(1,0)$ when $u_i=(1,1)$. This mechanism is favorable for detecting $\theta_w$ (see below). It is useful for illustrating how large the rejection probability can be under
a selection rule that reveals the alternative.

Define
\[
    N_{10}(s^n)=\sum_{i=1}^n 1\{s_i=(1,0)\},
    \qquad
    N_{11}(s^n)=\sum_{i=1}^n 1\{s_i=(1,1)\},
\]
and let $K_n(s^n)=N_{10}(s^n)+N_{11}(s^n)$ be the number of individuals with $Y=1$. After conditioning on $K_n$, the testing problem becomes an upper one-sided test with the conditional ordering statistic $T_n(s^n)=N_{10}(s^n)$. For each $k\in\cK_n$, choose an integer $c_k\in\{0,1,\ldots,k\}$ and $\gamma_k\in[0,1]$ such that
\[
    P_{B\sim \mathrm{Bin}(k,2/5)}(B>c_k)
    +
    \gamma_kP_{B\sim \mathrm{Bin}(k,2/5)}(B=c_k)
    =\alpha.
\]
For $k=0$, this convention gives $c_0=0$ and $\gamma_0=\alpha$.  Define
\[
    \phi_n^{\mathrm R}(s^n)
    =
    1\{N_{10}(s^n)>c_{K_n(s^n)}\}
    +
    \gamma_{K_n(s^n)}1\{N_{10}(s^n)=c_{K_n(s^n)}\}.
\]
This test is exact conditionally UMP against the path $w\mapsto\theta_w$ defined above (see Corollary \ref{cor:roy-general-cump}). The conditional experiment again has a simple interpretation. Conditioning on
$K_n=k$ restricts attention to individuals with $Y=1$. Among these individuals,
the outcome is either $(Y,D)=(1,0)$ or $(1,1)$. The unknown selection rule can affect
how individuals with tied potential outcomes, $u_i=(1,1)$, are allocated between
these two outcomes, but it cannot change whether an individual belongs to the
$Y=1$ group. Thus, after conditioning on $K_n$, the robustly informative feature for detecting $\theta_w$
is the share of $(1,0)$ outcomes among individuals with $Y=1$. Under the least
favorable null, this share is $2/5$. Along the testable part of the path
$w>1$, even the least favorable alternative raises this conditional probability
to $(1+w)/5>2/5$. The exact conditional UMP test therefore rejects when
$(1,0)$ outcomes are unusually frequent among individuals with $Y=1$.

Table~\ref{tab:roy_ump} reports the empirical size of $\phi_n^R$.\footnote{Favorable selection coincides with least favorable selection under the null and is therefore omitted from the size table, but it is included in the power figures.} Under the least favorable null selection, the rejection probability is essentially equal to the nominal level for all sample sizes. Under the remaining selection mechanisms, the test is conservative, often substantially so. In particular, the i.i.d. and clustered midpoint selections yield rejection probabilities close to zero, whereas the common-shock midpoint selection yields rejection probabilities around $2.5\%$. This pattern is consistent with the minimax construction of the test.

Figure~\ref{fig:roy_power} highlights the distinction between apparent and robust power. For $w\leq 1$, the null and alternative sets of observable distributions overlap. Consequently, the null-mimicking selection keeps the rejection probability at the nominal level, confirming that no level-$\alpha$ test can guarantee lower power exceeding $\alpha$ in this region. Although other selection mechanisms may yield higher rejection probabilities, such gains reflect selection-dependent apparent power rather than power guarantee. In contrast, when $w>1$, the null and alternative sets are separated. In this region, the least favorable rejection probability increases with $w$ and coincides with the exact conditional lower-power envelope, demonstrating that the test achieves meaningful robust power once the alternative becomes robustly testable.

\begin{table}[htbp]
\centering
\caption{Empirical Size of the Exact Conditional UMP Test}
\label{tab:roy_ump}
\begin{threeparttable}
\begin{tabular}{lcccc}
\toprule
{Selection mechanism} 
& \textbf{$n=100$} 
& \textbf{$n=250$} 
& \textbf{$n=500$} 
& \textbf{$n=1000$} \\
\midrule
Least favorable selection
& 0.0489 & 0.0498 & 0.0486 & 0.0493 \\

i.i.d. selection 
& 0.0004 & 0.0000 & 0.0000 & 0.0000 \\

Clustered selection $(\rho=0.10)$ 
& 0.0008 & 0.0002 & 0.0004 & 0.0000 \\

Common-shock selection $(\rho=1)$ 
& 0.0250 & 0.0245 & 0.0263 & 0.0234 \\
\bottomrule
\end{tabular}
\begin{tablenotes}
\footnotesize
\item Notes: The nominal level is $\alpha=0.05$, and the number of Monte Carlo replications is $R=10{,}000$. 
The least favorable null selection sets $r=1$, which maximizes the conditional probability of observing 
$(Y,D)=(1,0)$ among observations with $Y=1$.
\end{tablenotes}
\end{threeparttable}
\end{table}

Figure~\ref{fig:roy_power_samplesize} reports rejection probabilities under the separated alternative $w^\ast=1.50$. The least favorable curve is the
guaranteed lower-power curve and coincides closely with the analytical lower-power expression. The i.i.d., clustered, common-shock, and favorable
selection curves lie above it. As the sample size grows, all curves approach one quickly, showing that once the null and alternative sets of admissible
observable distributions are separated, the exact conditional UMP test has strong finite-sample power.

\section{Extensions}\label{sec:extensions}
\subsection{Tests in the Presence of Nuisance Components}\label{sec:nuisance} We now consider testing the hypotheses on subcomponents of $\theta$. Let $\theta=(\beta',\delta')'\in \Theta_\beta\times\Theta_\delta$, where $\beta$ is a $k\times 1$-subvector of interest and $\delta$ is a $(d-k)\times 1$ vector of nuisance parameters. Consider the following hypotheses:
\begin{align}
	H_0:\beta=\beta_0,~\delta\in\Theta_\delta, ~~\text{v.s.}~~H_1:\beta\ne \beta_0,~\delta\in\Theta_\delta.
\label{eq:hypothesis2} \end{align}

In this setting, both hypotheses are composite in the structural parameters, and therefore Lemma~\ref{thm:neyman_pearson} is not directly applicable. Nevertheless, it serves as a useful building block for constructing tests with desirable optimality properties. We partition the parameter space as
\[
\Theta_0=\{\beta_0\}\times\Theta_\delta
\qquad\text{and}\qquad
\Theta_1=\{\beta:\beta\neq\beta_0\}\times\Theta_\delta.
\]

The researcher's action is binary $a\in \{0,1\}$. Define a loss function $\mathsf L:\Theta\times \{0,1\}\to\mathbb R_+$ by
\begin{align*}
	\mathsf L(\theta,a)\equiv aI_{\Theta_0}(\theta)+\zeta(1-a) I_{\Theta_1}(\theta),
\end{align*}
where $\zeta>0$. The loss from the Type-I error is normalized to 1. The trade-off between the Type-I and Type-II errors is determined by parameter $\zeta$.

For each test $\phi$ and $\theta\in\Theta$, define the \emph{upper risk} by
\begin{align}
	R(\theta,\phi)&=\max_{P\in \mathcal P_\theta}\int \phi(s)I_{\Theta_0}(\theta)+\zeta(1-\phi(s)) I_{\Theta_1}(\theta)dP(s)\notag\\
	&=\int \phi(s)d\nu^*_\theta(s)I_{\Theta_0}(\theta)+\zeta(1-\int\phi(s)d\nu_\theta(s))I_{\Theta_1}(\theta),
\label{eq:riskfun} \end{align}
where the integrals in \eqref{eq:riskfun} are Choquet integrals (see Appendix \ref{sec:capacities}).

The upper risk determines the trade-off between the size ($R_0(\theta,\phi)\equiv\sup_{P\in\mathcal P_\theta}\int \phi dP=\int\phi d\nu^*_\theta$ for $\theta\in\Theta_0$) and the power guarantee ($\inf_{P\in\mathcal P_\theta}\int \phi dP= \int\phi d\nu_\theta$ for $\theta\in\Theta_1$). What remains is to incorporate parameter uncertainty. For this, let $\mu$ be a (prior) probability distribution over $\Theta$. We write $\mu$ as $\mu=\tau\mu_0+(1-\tau)\mu_1,$ where $\tau\in (0,1)$ and $\mu_0,\mu_1$ are suitable probability measures supported on $\Theta_0$ and $\Theta_1$, respectively. Define
\begin{align}
	r(\mu,\phi)&\equiv \int_\Theta R(\theta,\phi)d\mu(\theta)\notag\\
	&=\tau\int\phi(s)d\kappa_0^*(s)+(1-\tau)\zeta(1-\int\phi(s)d\kappa_1(s)),
\label{eq:bdsrisk} \end{align}
where $\kappa_0^*=\int_{\Theta_0}\nu_\theta^*d\mu_0$ and $\kappa_1=\int_{\Theta_1}\nu_\theta d\mu_1$.\footnote{The second equality in \eqref{eq:bdsrisk} is established in the proof of Theorem \ref{thm:bayes-dempster-shafer}.} This risk function uses the prior probability to reflect parameter uncertainty, while it uses the belief function (and its conjugate) to incorporate the decision maker's willingness to be robust against incompleteness. In what follows, we call $r$ the \emph{Bayes--Dempster--Shafer (BDS) risk}. We then call $\phi$ a \emph{BDS test} if it minimizes the BDS risk.\footnote{The axiomatic foundations for this type of preference (when $S$ is the payoff-relevant state space) is given in \cite{Gul:2014aa} and \cite{epstein2015exchangeable} (in the context of repeated experiments).} 

One of the components of the BDS risk is $\pi_{\kappa_1}(\phi)=\int \phi d\kappa_1$. We call this object the \emph{weighted average power guarantee} (WAPG). It is analogous to the standard weighted average power of \citet{Andrews:1994aa,Andrews:1995aa}: the choice of $\mu_1$ determines which alternatives receive greater weight. Unlike standard weighted average power, however, $\pi_{\kappa_1}(\phi)$ averages power guarantees rather than power under a specified selection rule or data-generating distribution.

The following theorem characterizes the BDS test. For this, let $core(\kappa)\equiv \{P\in\Delta(S):P(A)\ge \kappa(A),\forall A\subset S\}$. In what follows, we assume that $core(\kappa_0)\cap core(\kappa_1)=\emptyset$.\footnote{To ensure this condition, it is sufficient to have at least one $\bar A\subset S$ such that $P_0(\bar A)<P_1(\bar A)$ (or $P_0(\bar A)>P_1(\bar A)$) for all $(P_0,P_1)\in \mathcal P_{\theta_0}\times\mathcal P_{\theta_1}$ and $(\theta_0,\theta_1)\in \Theta_0\times\Theta_1$. In Example \ref{ex:game}, one may take $\bar A=\{(1,1)\}$.}
\begin{lemma}\label{thm:bayes-dempster-shafer}
	Let the BDS risk be defined as in \eqref{eq:bdsrisk}.
	Then, there exists a BDS test such that, for any $\zeta>0$,
	\begin{align}
		\phi(s) = \left\{
		\begin{array}{ll}
			1 & \text{if} \quad \Lambda(s)>C\\
			\gamma & \text{if} \quad \Lambda(s)=C\\
			0 & \text{if} \quad \Lambda(s)<C,\\
		\end{array}\label{eq:LRBDS}
		\right.
	\end{align}
	where $C=\tau/\zeta(1-\tau)$, and $\Lambda$ is a version of $dQ_1/dQ_0$ for the LFP $(Q_0,Q_1)\in core(\kappa_0)\times core(\kappa_1)$ such that, for all $t\in\mathbb R_+$,
	\begin{align}
		Q_0(\Lambda>t)=\kappa_0^{*}(\Lambda>t),~~~\text{ and }~~~ Q_1(\Lambda>t)=\kappa_1(\Lambda>t).
	\end{align}
\end{lemma}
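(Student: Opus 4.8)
The plan is to reduce minimization of the BDS risk to a simple-versus-simple Bayes problem between the two members of a least favorable pair, and then to use the least favorable property to upgrade optimality-against-the-pair to optimality against every test. First I would record two structural facts. (i) Since each $\nu_\theta$ is a belief function and the defining inequalities of $k$-monotonicity are linear in the set function, the mixtures $\kappa_0=\int_{\Theta_0}\nu_\theta\,d\mu_0$ and $\kappa_1=\int_{\Theta_1}\nu_\theta\,d\mu_1$ are again belief functions; moreover $\kappa_0^*(A)=\int_{\Theta_0}\nu_\theta^*(A)\,d\mu_0=\int_{\Theta_0}(1-\nu_\theta(A^c))\,d\mu_0=1-\kappa_0(A^c)$, so $\kappa_0^*$ is the conjugate of $\kappa_0$ and is $2$-alternating. (ii) For a fixed test $\phi$, the Choquet integral $\nu\mapsto\int\phi\,d\nu$ is linear in $\nu$ by the layer-cake identity $\int\phi\,d\nu=\int_0^1\nu(\phi>u)\,du$; applying Fubini over $\mu_0$ (resp.\ $\mu_1$) yields $\int_{\Theta_0}(\int\phi\,d\nu_\theta^*)\,d\mu_0=\int\phi\,d\kappa_0^*$ and $\int_{\Theta_1}(\int\phi\,d\nu_\theta)\,d\mu_1=\int\phi\,d\kappa_1$, which is the second equality in \eqref{eq:bdsrisk}. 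Minimizing $r(\mu,\cdot)$ is therefore equivalent to minimizing $\phi\mapsto\tau\int\phi\,d\kappa_0^*-\zeta(1-\tau)\int\phi\,d\kappa_1$.

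Because $\kappa_0,\kappa_1$ are belief functions, their cores are nonempty, exact (the lower envelope of $core(\kappa_j)$ is $\kappa_j$ and $\sup_{P\in core(\kappa_0)}P=\kappa_0^*$ pointwise), and disjoint by hypothesis. The theory of \cite{Huber:1973aa} applied to this pair, exactly as in the derivation of Lemma \ref{thm:neyman_pearson}, then furnishes an LFP $(Q_0,Q_1)\in core(\kappa_0)\times core(\kappa_1)$ and a version $\Lambda\in dQ_1/dQ_0$ with $Q_0(\Lambda>t)=\kappa_0^*(\Lambda>t)$ and $Q_1(\Lambda>t)=\kappa_1(\Lambda>t)$ for all $t\in\mathbb R_+$. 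Let $\phi$ be the LR test \eqref{eq:LRBDS} with $C=\tau/(\zeta(1-\tau))$ and any $\gamma\in[0,1]$. Since $\phi$ is a nondecreasing function of $\Lambda$ and $S$ is finite (so $\{\Lambda\ge C\}=\{\Lambda>C-\epsilon\}$ for small $\epsilon>0$), the layer-cake formula together with the two displayed identities gives $\int\phi\,d\kappa_0^*=(1-\gamma)\kappa_0^*(\Lambda>C)+\gamma\kappa_0^*(\Lambda\ge C)=(1-\gamma)Q_0(\Lambda>C)+\gamma Q_0(\Lambda\ge C)=E_{Q_0}[\phi]$, and likewise $\int\phi\,d\kappa_1=E_{Q_1}[\phi]$.

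For optimality, take an arbitrary test $\tilde\phi$. Since $Q_0\in core(\kappa_0)$ and $Q_1\in core(\kappa_1)$, the Choquet/core representation gives $E_{Q_0}[\tilde\phi]\le\int\tilde\phi\,d\kappa_0^*$ and $E_{Q_1}[\tilde\phi]\ge\int\tilde\phi\,d\kappa_1$, hence
\[
r(\mu,\tilde\phi)=\tau\!\int\!\tilde\phi\,d\kappa_0^*+\zeta(1-\tau)\Big(1-\!\int\!\tilde\phi\,d\kappa_1\Big)\ \ge\ \tau E_{Q_0}[\tilde\phi]+\zeta(1-\tau)(1-E_{Q_1}[\tilde\phi]).
\]
Writing $q_j=dQ_j/d\upsilon$ with $\upsilon$ the counting measure, the right-hand side equals $\zeta(1-\tau)+\sum_{s}\tilde\phi(s)(\tau q_0(s)-\zeta(1-\tau)q_1(s))$, which is minimized pointwise by setting $\tilde\phi(s)=1$ where $\tau q_0(s)<\zeta(1-\tau)q_1(s)$, i.e.\ $\Lambda(s)>C$, and $\tilde\phi(s)=0$ where $\Lambda(s)<C$ (the value on $\{\Lambda=C\}$ being immaterial, as the bracket vanishes there) — that is, by the test $\phi$. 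Combining with the previous paragraph, $r(\mu,\tilde\phi)\ge\tau E_{Q_0}[\phi]+\zeta(1-\tau)(1-E_{Q_1}[\phi])=r(\mu,\phi)$, so $\phi$ is a BDS test.

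The step I expect to require the most care is the third paragraph's transfer of the least favorable identities from the half-lines $\{\Lambda>t\}$ to the (possibly randomized) LR test, so that its Choquet risk coincides with its ordinary $(Q_0,Q_1)$-risk; on a finite $S$ this is routine but it is the place where the two-monotonicity is genuinely used. The other point needing explicit verification is that the mixtures $\kappa_0,\kappa_1$ inherit the belief-function (hence exact, two-monotone) structure and the nonempty-core property needed to legitimately invoke \cite{Huber:1973aa} and to obtain $core(\kappa_0)\cap core(\kappa_1)=\emptyset$ from the stated sufficient condition; both reduce to linearity of the monotonicity inequalities in the capacity and should present no real difficulty.
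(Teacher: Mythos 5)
Your proposal is correct and reaches the same conclusion by a route that overlaps substantially with the paper's but closes the argument differently. Both proofs first establish the representation \eqref{eq:bdsrisk} and reduce the problem to minimizing $t\int\phi\,d\kappa_0^*-\int\phi\,d\kappa_1$ with $t=\tau/\zeta(1-\tau)$; you get the mixture identity by the layer-cake formula plus Fubini and the linearity of the $k$-monotonicity inequalities, whereas the paper constructs the random-set measure $M_{\kappa_1}$ with $dM_{\kappa_1}/dq=\int_{\Theta_1}(dM_\theta/dq)\,d\mu_1$ and invokes Choquet's theorem — equivalent on a finite $S$, and the paper's construction has the side benefit of exhibiting $\kappa_1$ explicitly as the belief function of a random set. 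The real divergence is in the last step: the paper passes to the set function $w_t(A)=t\kappa_0^*(A)-\kappa_1(A)$, uses HS Lemmas 3.1--3.2 to obtain the increasing family of minimizers $\{A_t\}$ and hence the HS Radon--Nikodym derivative $\Lambda(s)=\inf\{t:s\in A_t\}$, and then cites HS Theorem 4.1 wholesale; you instead take the existence of the LFP $(Q_0,Q_1)$ as given (as in Lemma \ref{thm:neyman_pearson}) and prove optimality directly by the sandwich argument $r(\mu,\tilde\phi)\ge\tau E_{Q_0}[\tilde\phi]+\zeta(1-\tau)(1-E_{Q_1}[\tilde\phi])\ge r(\mu,\phi)$, after checking that the Choquet risk of the LR test collapses to its ordinary $(Q_0,Q_1)$-risk. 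Your version is more self-contained and makes transparent why the cutoff is exactly the Bayes threshold $C=\tau/\zeta(1-\tau)$; the paper's version gets the whole family $\{A_t\}$, and hence the test for every $t$ simultaneously, for free from HS. The points you flag as delicate — transferring the LFP identities from $\{\Lambda>t\}$ to $\{\Lambda\ge t\}$ (immediate on finite $S$) and the two-monotonicity of the mixtures — are handled adequately.
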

One may view this as an analog of Lemma \ref{thm:neyman_pearson}. A key difference is that the LFP belongs to the product of the cores of the capacities $\kappa_0$ and $\kappa_1$. Hence, $\kappa_0$ and $\kappa_1$ are both belief functions, which in turn allows us to compute the LFP in a tractable way.

The analysis above is useful for constructing optimal tests for minimizing risk. However, the BDS tests are not designed to control size uniformly over $\Theta_0$. Therefore, we also consider a test that controls size and maximizes the WAPG. For this, we fix $\mu_1$ throughout and follow the developments on the tests in the presence of nuisance parameters \citep{Chamberlain:2000aa,Elliott:2015aa,Moreira:2013aa}. 

The following minimax theorem characterizes the minimax test as a BDS test for the least favorable prior (if it exists). For this, let $\mathcal M(\mu_1)\equiv\{\mu:\mu=\tau\mu_0+(1-\tau)\mu_1,\mu_0\in\Delta(\Theta_0),\tau\in[0,1]\}$, where $\mu_1$ is fixed. In what follows, we drop $\mu_1$ from the argument of $\mathcal M$, but its dependence should be understood. We then let $\mathbf\Phi$ be the set of randomized tests.

\begin{theorem}\label{thm:minimax}
	Let the upper risk $R$ be defined as in \eqref{eq:riskfun}. Suppose that $\Theta$ is compact. Then,
	\begin{align}
		\sup_{\mu\in\mathcal M}\inf_{\phi\in\mathbf\Phi}\int_\Theta R(\theta,\phi)d\mu(\theta)=\inf_{\phi\in\mathbf\Phi}(\sup_{\theta\in\Theta_0}R_{0}(\theta,\phi)\vee R_1(\phi)),
	\label{eq:minimax} \end{align}
	where $R_{0}(\theta,\phi)=\int \phi(s)d\nu^*_\theta(s)I_{\Theta_0}(\theta)$ and $R_1(\phi)=\zeta(1-\pi_{\kappa_1}(\phi))$. Furthermore, there exists $\phi^\dagger$ that achieves equality in \eqref{eq:minimax}.
\end{theorem}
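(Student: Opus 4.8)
The plan is to read \eqref{eq:minimax} as the identity between the maximin and the minimax values of the zero-sum game with payoff $r(\mu,\phi)=\int_\Theta R(\theta,\phi)\,d\mu(\theta)$ played between Nature (choosing the prior $\mu\in\M$) and the statistician (choosing the test $\phi\in\mathbf\Phi$), and to obtain it from a classical minimax theorem together with an explicit evaluation of the inner supremum $\sup_{\mu\in\M}r(\mu,\phi)$.

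First I would record the structural properties of $r$. Since $S$ is finite, $\mathbf\Phi$ is (identified with) the cube $[0,1]^{|S|}$, hence convex and compact; and $\M$ is convex since $(\tau,\mu_0)\mapsto\tau\mu_0+(1-\tau)\mu_1$ maps the convex set $[0,1]\times\Delta(\Theta_0)$ into $\M$. The map $\mu\mapsto r(\mu,\phi)$ is affine, hence concave, on $\M$. For the other argument, the Choquet integral $\phi\mapsto\int\phi\,d\nu^*_\theta=\sup_{P\in\mathcal P_\theta}\int\phi\,dP$ is a supremum of linear functionals of $\phi$, hence convex, and (being piecewise linear on $[0,1]^{|S|}$) continuous; dually $\phi\mapsto\int\phi\,d\nu_\theta=\inf_{P\in\mathcal P_\theta}\int\phi\,dP$ is concave and continuous, using that $\nu_\theta$ is a belief function (two-monotone) and $\nu^*_\theta$ its two-alternating conjugate. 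Thus $\phi\mapsto R(\theta,\phi)$ is convex and continuous for every $\theta$, and by dominated convergence $\phi\mapsto r(\mu,\phi)$ is convex and lower semicontinuous on $\mathbf\Phi$ for every $\mu$. (Compactness of $\Theta$ serves to keep $\Delta(\Theta_0)$, and hence $\M$, nonempty and well behaved.)

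With these ingredients I would invoke a minimax theorem for convex--concave games in which only the set over which one minimizes must carry a compact topology (the Kneser--Fan theorem, or Sion's theorem): since $\mathbf\Phi$ is compact convex, $r(\mu,\cdot)$ is convex and lower semicontinuous on $\mathbf\Phi$ for each $\mu$, and $r(\cdot,\phi)$ is concave on the convex set $\M$ for each $\phi$, one gets $\inf_{\phi\in\mathbf\Phi}\sup_{\mu\in\M}r(\mu,\phi)=\sup_{\mu\in\M}\inf_{\phi\in\mathbf\Phi}r(\mu,\phi)$, whose right side is the left side of \eqref{eq:minimax}. It then remains to identify $\sup_{\mu\in\M}r(\mu,\phi)$ with $\sup_{\theta\in\Theta_0}R_0(\theta,\phi)\vee R_1(\phi)$. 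Writing $\mu=\tau\mu_0+(1-\tau)\mu_1$ and using that $R=R_0$ on $\Theta_0$ while $R(\theta,\phi)=\zeta\big(1-\int\phi\,d\nu_\theta\big)$ on $\Theta_1$, affinity in $\mu$ gives $r(\mu,\phi)=\tau\int_{\Theta_0}R_0(\theta,\phi)\,d\mu_0(\theta)+(1-\tau)\zeta\big(1-\int_{\Theta_1}\int\phi\,d\nu_\theta\,d\mu_1(\theta)\big)$; the Fubini-type identity for Choquet integrals established in the proof of Lemma \ref{thm:bayes-dempster-shafer} rewrites the double integral as $\int\phi\,d\kappa_1=\pi_{\kappa_1}(\phi)$, so $r(\mu,\phi)=\tau\int_{\Theta_0}R_0(\theta,\phi)\,d\mu_0(\theta)+(1-\tau)R_1(\phi)$. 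Maximizing over $\mu_0\in\Delta(\Theta_0)$ replaces the integral by $\sup_{\theta\in\Theta_0}R_0(\theta,\phi)$ (a supremum of an affine functional over a probability simplex), and maximizing the resulting affine function of $\tau$ over $[0,1]$ yields the asserted maximum, establishing \eqref{eq:minimax}. For the last clause, $\phi\mapsto\sup_{\theta\in\Theta_0}R_0(\theta,\phi)\vee R_1(\phi)$ is a supremum of convex continuous functions, hence convex and lower semicontinuous on the compact set $\mathbf\Phi$, so it attains its infimum at some $\phi^\dagger$; combined with weak compactness of $\Delta(\Theta_0)$ one can moreover extract a least favorable prior so that $(\phi^\dagger,\mu^\dagger)$ is a saddle point of $r$.

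The step I expect to be the main obstacle is verifying the hypotheses of the minimax theorem under no more than compactness of $\Theta$ --- in particular, coping with the asymmetric appearance of $\Theta_0$ (closed) and $\Theta_1$ (relatively open) in $R$ --- which is why I would route through a version of the theorem requiring compactness only of $\mathbf\Phi$, thereby avoiding any need to establish upper semicontinuity of $\theta\mapsto\nu^*_\theta(\cdot)$ (or of $r$ in $\mu$) on $\Theta$; and cleanly justifying the exchange of the Choquet integral in $s$ with the ordinary integral against $\mu_1$, which uses the layer-cake representation on the finite space $S$ and Tonelli, exactly as in the proof of Lemma \ref{thm:bayes-dempster-shafer}. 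The standing assumption $core(\kappa_0)\cap core(\kappa_1)=\emptyset$ is what rules out a degenerate value and lets the BDS characterization describe the minimax $\phi^\dagger$ as a likelihood ratio test; the remaining manipulations are routine bookkeeping.
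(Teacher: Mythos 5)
Your proof is correct, but it takes a genuinely different route from the paper's. The paper proves the nontrivial inequality via an analog of Le Cam's lemma (its Lemma \ref{lem:lecam}, modelled on Lemma 46.1 of Strasser): it shows that the set $M_2=\{R(\cdot,\phi):\phi\in\mathbf\Phi\}$ is subconvex (using sub/superadditivity of Choquet integrals with respect to $2$-alternating/$2$-monotone capacities) and weakly compact (via compactness of Markov kernels), and then invokes a separation-type theorem to produce $\phi^\dagger$ with $R(\theta,\phi^\dagger)\le\beta$ for all $\theta$. You instead exploit the finiteness of $S$ to identify $\mathbf\Phi$ with the compact convex cube $[0,1]^{|S|}$, observe that $r$ is convex and (Lipschitz) continuous in $\phi$ and affine in $\mu$ on the convex set $\M$, and apply Kneser--Fan, which indeed requires no topology on the prior side; you then evaluate $\sup_{\mu\in\M}r(\mu,\phi)$ explicitly by maximizing over $\mu_0$ (Dirac measures give $\sup_{\theta\in\Theta_0}R_0(\theta,\phi)$) and over $\tau$ (giving the $\vee$), which is exactly the identification the paper performs when it rewrites the risk via $\kappa_0^*$ and $\kappa_1$. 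Both arguments implicitly assume measurability of $\theta\mapsto\int\phi\,d\nu^*_\theta$, so you are not introducing a new gap there. What each approach buys: yours is shorter and more elementary, but leans on $|S|<\infty$ to get compactness and continuity of $\mathbf\Phi$ for free; the paper's route through subconvexity and weak compactness of the risk set is the one that would survive a passage to infinite outcome or action spaces. Two small caveats on your write-up: the closing remark that one can ``extract a least favorable prior so that $(\phi^\dagger,\mu^\dagger)$ is a saddle point'' is not needed for the theorem and would require upper semicontinuity of $\theta\mapsto R_0(\theta,\phi)$ on $\Theta_0$, which you have not established; and the assumption $core(\kappa_0)\cap core(\kappa_1)=\emptyset$ plays no role in the minimax identity itself --- it matters only for the LFP characterization of the resulting test in Lemma \ref{thm:bayes-dempster-shafer}.
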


\begin{remark}\rm
Suppose that $\zeta$ is chosen so that the maximum BDS risk, given by the left-hand side of \eqref{eq:minimax}, equals $\alpha$. Then $\phi^\dagger$ is a level-$\alpha$ test that maximizes the WAPG. Moreover, the theorem implies that $\phi^\dagger$ can be approximated, in terms of risk, by a sequence of BDS tests $\{\phi_\ell\}_{\ell=1}^\infty$, where each $\phi_\ell$ is optimal under some prior $\mu_\ell$ and satisfies
\begin{align}
\int_\Theta R(\theta,\phi_\ell)\,d\mu_\ell(\theta)
\to
\sup_{\mu\in\mathcal M}
\int_\Theta R(\theta,\phi^\dagger)\,d\mu(\theta).
\end{align}
A numerical algorithm, in the spirit of \cite{Chamberlain:2000aa}, for constructing such sequences is studied by \cite{KaidoZhang2023}.
\end{remark}

\subsection{Covariates}\label{sec:covariates}
This section extends the base framework to incorporate observable covariates. Each individual experiment is described by $(S,X,U,G,\Theta;\upsilon,m)$, where $S,U,G,\Theta$ are defined as before. We let $X$ denote the finite set of covariate values and $\{\upsilon_\theta,\theta\in\Theta\}$ be a family of distributions on $X$. Throughout, we assume that each $\upsilon_\theta\in\Delta(X)$ has full support on $X$. Measure $m_\theta(\cdot|x)$ then determines the conditional law of $u$ given $x$. The prediction of the model is then summarized by a weakly measurable correspondence $(u,x)\mapsto F(u,x|\theta)\equiv \{(s,x):s\in G(u|\theta,x) \}\subset S\times X$ for each $\theta\in\Theta$. As before, this correspondence induces a belief function on $S\times X$
\begin{align}
	\nu_\theta(A)=\int 1\{F(u,x|\theta)\subset A\}dm_\theta(u|x)d\upsilon_\theta(x),~ A\subset S\times X.
\end{align}
If $A$ is a rectangle $A=A_s\times A_x$ for some $A_s\subset S$ and $A_x\subset X$, one may write it as
\begin{align}
	\nu_\theta(A)=\int_{A_x}m_\theta(G(u|\theta,x)\subset A_s|x)d\upsilon_\theta(x)=\int_{A_x}\nu_\theta(A_u|x)d\upsilon_\theta(x),
\end{align}
which can be viewed as the mean of the conditional belief function $\nu_\theta(\cdot|x)$. 
The subsequent analysis, starting with the Neyman--Pearson lemma, is then essentially the same as before.

A simplification occurs when $\upsilon$ does not depend on $\theta$.
Consider $\theta_0,\theta_1\in\Theta$ with $\theta_0\ne\theta_1$. Because of the additivity of $\upsilon$, it suffices to consider sets of the form $B\times \{x\}$, where $B\subset S$ and $x\in X.$
Then, the program that determines the LFP is
\begin{align}
	(Q_0,Q_1)=\argmin_{P_0,P_1\in \Delta(S\times X)}&~ \int H\big(\frac{dP_0}{d(P_0+P_1)}\big)d(P_0+P_1)\\
	s.t.&~ \nu_{\theta_0}(B\times \{x\})\le P_0(B\times \{x\}),~B\subset S, x\in X\notag\\
	&~ \nu_{\theta_1}(B\times \{x\})\le P_1(B\times \{x\}),~B\subset S, x\in X.\notag
\end{align}
Observe that the constraints simplify to
\begin{align}
\nu_{\theta_j}(B\times \{x\})=\nu_{\theta_j}(B|x)\upsilon(x)\le \sum_{s\in B}p_j(s|x)\upsilon_j(x),~j=0,1.
\end{align}
Taking $B=S$, one obtains $\upsilon(x)\le \upsilon_j(x)$ for all $x\in X$ with $j=0,1.$ Since $\upsilon$ is a measure, this implies $\upsilon_0=\upsilon_1=\upsilon,$ and the resulting LR statistic does not depend on $\upsilon$. The LR statistic $dQ_1/dQ_0=q_1(s|x)/q_0(s|x)$ can then be calculated by solving, for each $x$, 
\begin{align}
	\min_{p_{0}(\cdot|x),p_{1}(\cdot|x)}&~\sum_{s\in S}\ln\Big(\frac{p_0(s|x)+p_1(s|x)}{p_0(s|x)}\Big)(p_0(s|x)+p_1(s|x))\\
	s.t.&~ \nu_{\theta_0}(B|\theta,x)\le \sum_{s\in B}p_j(s|x),~B\subset S, x\in X\notag\\
	&~ \nu_{\theta_1}(B|\theta,x)\le \sum_{s\in B}p_j(s|x),~B\subset S, x\in X.\notag
\end{align}

\section{Concluding Remarks}\label{sec:conclusion}
This paper develops robust likelihood-based tests for incomplete economic models. Our main result shows that, although unrestricted selection may induce arbitrary heterogeneity and dependence in the observed outcomes, the least favorable laws in repeated experiments are product measures whenever the latent variables are independent across experiments. This result yields finite-sample minimax likelihood-ratio tests and reduces their implementation to computing least-favorable likelihoods in individual experiments. We further provide conditions under which directed one-sided alternatives admit exact conditional uniformly most powerful tests in power guarantee. Finally, we show that sharp identifying restrictions make these procedures computationally tractable by characterizing the least favorable pair through a finite-dimensional convex program, and we extend the framework to accommodate nuisance parameters and observable covariates.

\bibliographystyle{ecta}

\clearpage
\appendix

\section{Figures}
\begin{figure}[H]
    \centering
    \caption{Empirical size under clustered selection, $n=1000$}
\includegraphics[width=0.75\textwidth]{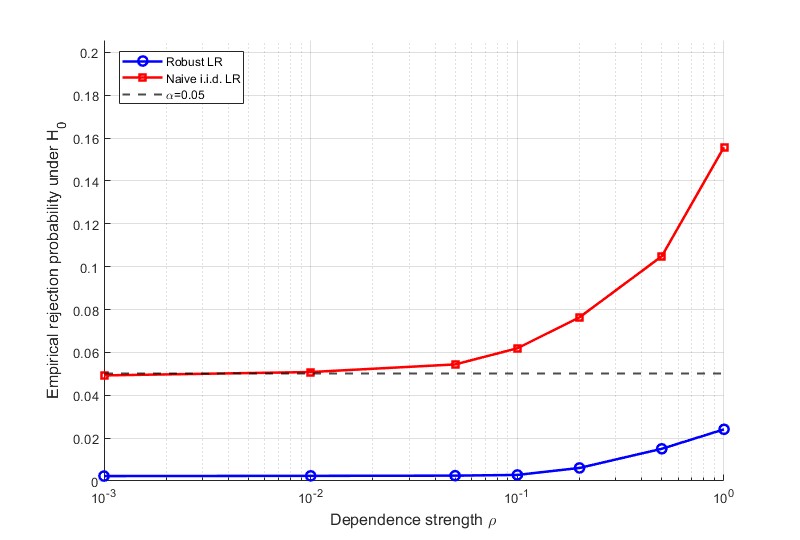}
    \label{fig:size_dependence}

    \caption{Power of the robust LR test under the fixed alternative}
    \includegraphics[width=0.75\textwidth]{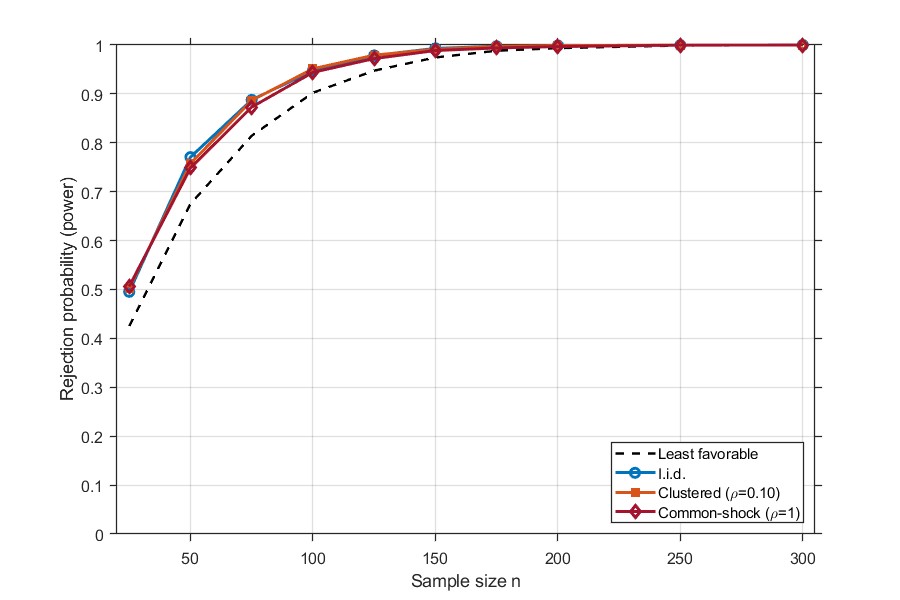}
    \label{fig:power}
\end{figure}

\begin{figure}
    \centering
     \caption{Local power of the exact conditional UMP test in the entry game}
     \label{fig: localpowerandumppower}
    \includegraphics[width=0.75\linewidth]{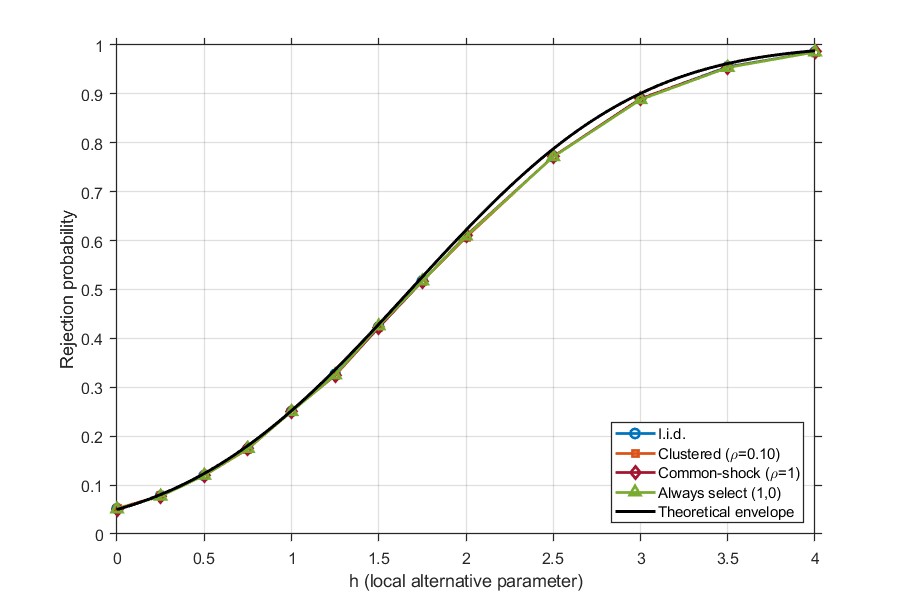}
\end{figure}

\begin{figure}
    \centering
        \caption{Rejection probabilities of $\phi^R$ under $\theta_w$}
    \includegraphics[width=0.75\linewidth]{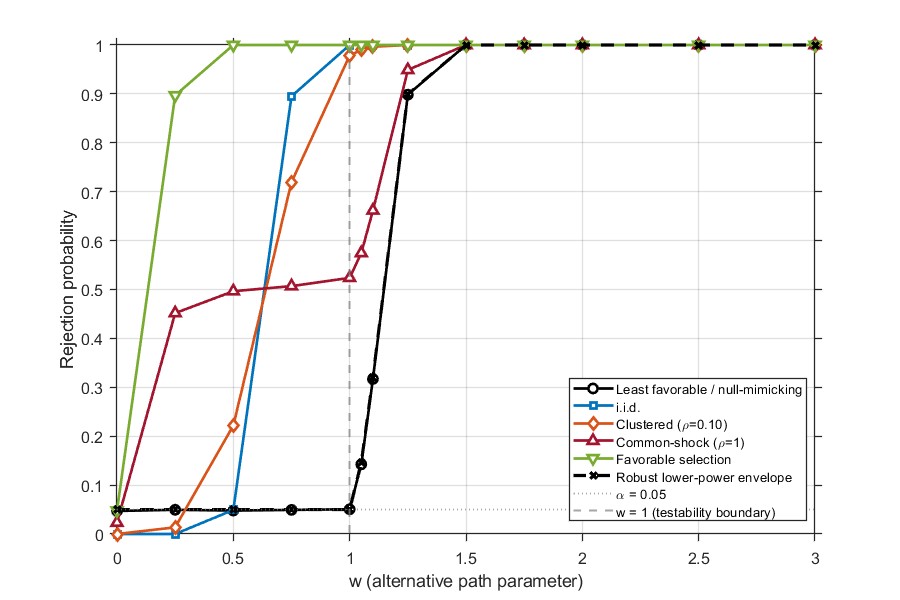}
    \label{fig:roy_power}
\end{figure}

\begin{figure}
    \centering
        \caption{Rejection probabilities of $\phi^R$ under the separated alternative $w^\ast=1.50$}
    \includegraphics[width=0.75\linewidth]{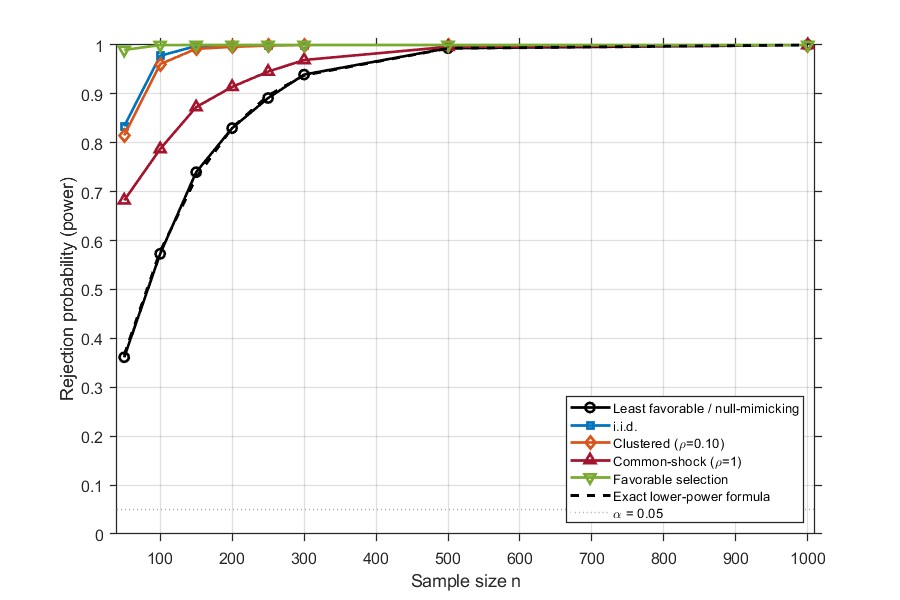}
    \label{fig:roy_power_samplesize}
\end{figure}

\clearpage
\section{Background material}
\subsection{Capacities}\label{sec:capacities} Throughout the appendix, let $\Omega$ be a compact metric space and let $\Sigma_\Omega$ denote its Borel $\sigma$-algebra. Let $\mathcal K(\Omega)$ be the set of compact subsets of $\Omega$ endowed with the Hausdorff metric. Let $\mathcal C(\Omega)$ be the set of continuous functions on $\Omega$. Let $\Delta(\Omega)$ be the set of  Borel probability measures on $\Omega$ endowed with the weak topology.

A set function $\nu^{*}$ is said to be a \emph{capacity} if $\nu^{*}$ satisfies the following conditions:
\begin{enumerate}[label=(\roman*)]\label{de:capac}
	\item{$\nu^{*}(\emptyset)=0, \nu^{*}(\Omega)=1$,\label{cd:capac1}}
	\item{$A\subset B \Rightarrow \nu^{*}(A) \leq \nu^{*}(B)$,~ for all $A,B\in \Sigma_\Omega$.\label{cd:capac2}}
	\item{$A_n \uparrow A \Rightarrow \nu^{*}(A_n) \uparrow \nu^{*}(A)$, for all $\{A_n,n\ge 1\}\subset \Sigma_\Omega$ and $A\in\Sigma_\Omega$.\label{cd:capac3}}
	\item{$F_n \downarrow F, F_n$ closed $\Rightarrow \nu^{*}(F_n) \downarrow \nu^{*}(F)$.\label{cd:capac4}}
\end{enumerate}
One may define integral operations with respect to capacities as follows. Let $f:\Omega\to\mathbb R$ be a measurable function. The \emph{Choquet integral} of $f$ with respect to $\nu$ is defined by
\begin{align}
\int fd\nu\equiv \int_{-\infty}^0(\nu(\{\omega:f(\omega)\ge t\})-\nu(\Omega))dt+\int_0^\infty \nu(\{\omega:f(\omega)\ge t\})dt,
\end{align}
where the integrals on the right hand side are Riemann integrals.

The following result due to Choquet  follows immediately from Theorems 1-3 in \cite{philippe1999decision}. We use this result repeatedly.
\begin{lemma}\label{lem:choquet}
	Let $\Omega$ be a Polish space. Let $M$ be a probability measure on $\mathcal K(\Omega)$. Let $\mathcal P=\{P\in\Delta(\Omega):P= \int P_KdM(K),P_K\in \Delta(K)\}$. Then, $\nu(\cdot)=\inf_{P\in\mathcal P}P(\cdot)$ is a belief function and satisfies
	\begin{align}
		\nu(A)=M(\{K\subset A\}).
	\end{align}
\end{lemma}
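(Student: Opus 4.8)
The plan is to prove the two assertions of the lemma separately: that the lower envelope $\nu(\cdot)=\inf_{P\in\mathcal P}P(\cdot)$ coincides with the set function $\beta(A):=M(\{K\in\mathcal K(\Omega):K\subset A\})$, and that $\beta$ (hence $\nu$) is a belief function. Throughout I would take $M$ to be supported on the \emph{nonempty} compact subsets of $\Omega$, since otherwise $\beta(\emptyset)>0$ and the statement as written fails; and I would carry out the argument for closed $A$ first, extending to arbitrary Borel $A$ afterward.

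For the identity $\nu=\beta$, one inequality is immediate: fix $A$ and any $P=\int P_K\,dM(K)\in\mathcal P$; since $P_K\in\Delta(K)$ we have $P_K(A)=1$ whenever $K\subset A$ and $P_K(A)\ge 0$ always, so $P(A)=\int P_K(A)\,dM(K)\ge\int \mathbf 1\{K\subset A\}\,dM(K)=\beta(A)$, and taking the infimum over $\mathcal P$ gives $\nu(A)\ge\beta(A)$. For the reverse inequality I would exhibit a selectionable measure attaining $\beta(A)$ by pushing as much mass as possible into $A^c$: construct a measurable map $\sigma:\mathcal K(\Omega)\to\Omega$ with $\sigma(K)\in K$ for every $K$ and $\sigma(K)\notin A$ whenever $K\not\subset A$ — a measurable selection of $K\mapsto K$ refined to take values in $K\cap A^c$ on $\{K\not\subset A\}$, obtained from the Kuratowski--Ryll-Nardzewski selection theorem (for closed $A$ one handles the non-compactness of $K\cap A^c$ by exhausting $A^c$ from inside by closed sets $F_n$, so that $K\cap A^c\ne\emptyset$ forces $K\cap F_n\ne\emptyset$ for some $n$). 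Setting $P^{\ast}:=M\circ\sigma^{-1}=\int\delta_{\sigma(K)}\,dM(K)$ gives $P^{\ast}\in\mathcal P$ (with $P_K=\delta_{\sigma(K)}$), and since $\{K:\sigma(K)\in A\}=\{K:K\subset A\}$ we obtain $P^{\ast}(A)=M(\{K\subset A\})=\beta(A)$, whence $\nu(A)\le\beta(A)$. Taking $A=\Omega$ here also shows $\mathcal P\ne\emptyset$.

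That $\beta$ is a belief function I would verify directly. Measurability of $K\mapsto\mathbf 1\{K\subset A\}$ holds because $\{K:K\subset F\}$ is closed in the Hausdorff metric for $F$ closed and Borel for general Borel $A$ (the Borel $\sigma$-algebra of $\mathcal K(\Omega)$ coincides with the Effros $\sigma$-algebra). Then $\beta(\emptyset)=0$, $\beta(\Omega)=M(\mathcal K(\Omega))=1$, and monotonicity is clear since $A\subset B$ implies $\{K\subset A\}\subset\{K\subset B\}$. Continuity from above on closed sets follows from $\bigcap_n\{K\subset F_n\}=\{K\subset\bigcap_n F_n\}$ and continuity of $M$; continuity from below on open sets uses compactness of the $K$'s. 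For total (infinite) monotonicity, for any $A_1,\dots,A_k$ I would combine $\{K\subset\bigcap_{i\in I}A_i\}=\bigcap_{i\in I}\{K\subset A_i\}$ with ordinary inclusion--exclusion for the \emph{probability} $M$:
\[
\sum_{\emptyset\ne I\subset\{1,\dots,k\}}(-1)^{|I|+1}\beta\Big(\bigcap_{i\in I}A_i\Big)=M\Big(\bigcup_{i=1}^{k}\{K\subset A_i\}\Big)\le M\Big(\Big\{K\subset\bigcup_{i=1}^{k}A_i\Big\}\Big)=\beta\Big(\bigcup_{i=1}^{k}A_i\Big),
\]
the inequality because $K\subset A_i$ for some $i$ implies $K\subset\bigcup_i A_i$. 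Since this holds for every $k$, $\beta$ is a belief function, and by the identity above so is $\nu$; in the paper's notation this is precisely the content assembled from Theorems 1--3 of \cite{philippe1999decision}.

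I expect the main obstacle to be the measurable-selection step and its extension beyond closed $A$: the values $K\cap A^c$ of the refined correspondence need not be compact, so one cannot apply a single compact-valued selection theorem, and one must combine a closed exhaustion of $A^c$ with a limiting argument (for closed $A$) and then lift the identity $\nu=\beta$ from closed to arbitrary Borel sets via inner regularity of the Radon measures in $\mathcal P$ together with continuity-from-above of $\beta$ — a Choquet-capacitability argument. A minor secondary point: if one wants the infimum defining $\nu$ to be attained for \emph{every} Borel $A$, one should note that $\mathcal P$ is weakly compact, which holds here because $\Omega$ is Polish and the random set is compact-valued, so $\mathcal P$ is tight.
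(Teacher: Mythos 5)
Your proposal is correct in substance, but it takes a genuinely different route from the paper: the paper does not prove this lemma at all, it simply invokes Theorems 1--3 of \cite{philippe1999decision} (Choquet's theorem in the form given by Philippe, Debs, and Jaffray), whereas you reconstruct the result from scratch. Your two halves are the right ones. The identity $\nu(A)=M(\{K\subset A\})$ via the easy inequality $P(A)\ge\int \mathbf 1\{K\subset A\}\,dM$ plus an extremal selection $P^{\ast}=M\circ\sigma^{-1}$ with $\sigma(K)\in K\cap A^{c}$ whenever $K\not\subset A$ is exactly how one proves attainment of the lower envelope, and your inclusion--exclusion computation $\sum_{\emptyset\ne I}(-1)^{|I|+1}\beta(\cap_{i\in I}A_i)=M(\cup_i\{K\subset A_i\})\le\beta(\cup_i A_i)$ is a clean, correct verification of total monotonicity. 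You also correctly flag the only real delicacies: $M$ must charge only nonempty compacta, $K\cap A^{c}$ need not be compact so the Kuratowski--Ryll-Nardzewski step needs the closed exhaustion you describe, and passing from closed to general Borel $A$ requires a capacitability/inner-regularity argument together with (universal) measurability of $\{K:K\subset A\}$. These are precisely the points that the cited reference handles, which is presumably why the authors chose to cite rather than reprove. Two things worth noting in comparing the approaches: first, the citation buys the general Polish-space statement with all measurability issues resolved, while your direct argument is fully transparent but would need the Borel-extension step written out carefully to be complete at that level of generality; second, in the only place the paper actually uses the lemma, $\Omega=S$ is finite, so $\mathcal K(S)$ is the finite set of nonempty subsets of $S$, the selection $\sigma$ is an arbitrary pointwise choice, and your entire argument collapses to elementary combinatorics --- so your proof is, if anything, more informative than the paper's for the case that matters.
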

In each experiment characterized by the tuple $(S,U,\Theta,G;m)$, one may apply the lemma above with $\mathcal P=\mathcal P_{\theta}$, $\nu=\nu_\theta$, $K=G(u|\theta)$, and $M$ is the law of $G(u|\theta)$ induced by $G$ from $m_\theta$.

\section{Proofs}

\subsection{Proof of Theorem \ref{thm:cs}, Corollary \ref{cor:cs} and Auxiliary Lemmas}
We use Theorem 8.1.1 in \cite{Lehmann:2006aa} to show Theorem \ref{thm:cs}. For ease of reference, we copy their theorem below (with a slight change of notation to avoid conflicts).
For this, let $\mathcal P=\{P_\eta\in \Delta(S):\eta\in \mathcal E\cup \mathcal E'\}$ be families of probability distributions on $S$ with densities $p_\eta=dP_\eta/d\upsilon$, where $\eta$ belongs to the union of measurable spaces $\mathcal E,\mathcal E'$. Throughout, we assume that the map $(s,\eta)\mapsto p_\eta(s)$ is jointly measurable.

\begin{theorem}[Theorem 8.1.1. of \cite{Lehmann:2006aa}]\label{thm:Lehmann-Romano}
	For any distributions $\mu,\mu'$ over $\Sigma_{\mathcal E}$ and $\Sigma_{\mathcal E'}$, let $\phi_{\mu,\mu'}$ be the most powerful test for testing
\begin{align}
	f(s)=\int_{\mathcal E} p_\eta(s)d\mu(\eta)
\end{align}
at level $\alpha$ against
\begin{align}
	f'(s)=\int_{\mathcal E'} p_\eta(s)d\mu(\eta)
\end{align}
and let $\beta_{\mu,\mu'}$ be its power against the alternative $f'.$ If there exist $\mu$ and $\mu'$ such that
\begin{align}
	\sup_{\eta\in \mathcal E}E_{P_\eta}[\phi_{\mu,\mu'}(s)]&\le \alpha\label{eq:LRthm1}\\
	\inf_{\eta\in \mathcal E'}E_{P_\eta'}[\phi_{\mu,\mu'}(s)]&=\beta_{\mu,\mu'},\label{eq:LRthm2}
\end{align}
then:
\begin{itemize}
	\item[(i)] $\phi_{\mu,\mu'}$ maximizes $\inf_{\eta\in \mathcal E'}E_{P_\eta'}[\phi_{\mu,\mu'}(s)]$ among all level-$\alpha$ tests of the hypothesis $H:\eta\in E$ and is the unique test with this property if it is the unique most powerful level-$\alpha$ test for testing $f$ against $f'$.
	\item[(ii)] The pair of distributions $\mu,\mu'$ is least favorable in the sense that for any other pair $\tilde \mu,\tilde\mu'$ we have
	\begin{align}
		\beta_{\mu,\mu'}\le\beta_{\tilde\mu,\tilde\mu'}.
	\end{align} 
\end{itemize}
\end{theorem}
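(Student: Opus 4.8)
The plan is to reduce this composite minimax problem to the simple-versus-simple Neyman--Pearson problem for the two mixture densities $f$ and $f'$, exploiting the elementary fact that an average of a quantity over a parameter family lies between its supremum and its infimum. The single computational ingredient I would record first, via Fubini (legitimate because $(s,\eta)\mapsto p_\eta(s)$ is jointly measurable and every test satisfies $0\le\psi\le1$), is the pair of averaging identities
\begin{align}
E_f[\psi]=\int_E E_{P_\eta}[\psi]\,d\mu(\eta),\qquad
E_{f'}[\psi]=\int_{E'} E_{P_\eta'}[\psi]\,d\mu'(\eta).\notag
\end{align}
These immediately give $E_f[\psi]\le\sup_{\eta\in E}E_{P_\eta}[\psi]$ and $E_{f'}[\psi]\ge\inf_{\eta\in E'}E_{P_\eta'}[\psi]$ for every test $\psi$. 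Everything else is bookkeeping around the Neyman--Pearson maximality of $\phi_{\mu,\mu'}$.

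For part (i), I would take an arbitrary competing level-$\alpha$ test $\psi$ of $H:\eta\in E$, so that $\sup_{\eta\in E}E_{P_\eta}[\psi]\le\alpha$. By the first averaging inequality this makes $\psi$ a level-$\alpha$ test of the \emph{simple} null $f$. Since $\phi_{\mu,\mu'}$ is by construction the most powerful level-$\alpha$ test of $f$ against $f'$ with power $\beta_{\mu,\mu'}$, the power of $\psi$ against $f'$ cannot exceed $\beta_{\mu,\mu'}$; combining this with the second averaging inequality yields
\[
\inf_{\eta\in E'}E_{P_\eta'}[\psi]\le E_{f'}[\psi]\le\beta_{\mu,\mu'}=\inf_{\eta\in E'}E_{P_\eta'}[\phi_{\mu,\mu'}],
\]
where the final equality is the maintained hypothesis \eqref{eq:LRthm2}. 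This is exactly the asserted maximization of the minimum power. For the uniqueness clause, I would observe that if $\psi$ also attains $\inf_{\eta\in E'}E_{P_\eta'}[\psi]=\beta_{\mu,\mu'}$, then the displayed chain collapses to $E_{f'}[\psi]=\beta_{\mu,\mu'}$, so $\psi$ is itself a most powerful level-$\alpha$ test of $f$ against $f'$; the assumed uniqueness of that test then forces $\psi=\phi_{\mu,\mu'}$ up to $\upsilon$-null sets.

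For part (ii), I would deploy $\phi_{\mu,\mu'}$ as a candidate in the problem indexed by an arbitrary alternative pair $(\tilde\mu,\tilde\mu')$. Because $\tilde\mu$ is supported on $E$, hypothesis \eqref{eq:LRthm1} together with the first averaging inequality gives $E_{\tilde f}[\phi_{\mu,\mu'}]\le\sup_{\eta\in E}E_{P_\eta}[\phi_{\mu,\mu'}]\le\alpha$, so $\phi_{\mu,\mu'}$ is a level-$\alpha$ test of the mixture null $\tilde f=\int_E p_\eta\,d\tilde\mu$. Its power against $\tilde f'=\int_{E'}p_\eta\,d\tilde\mu'$ obeys, by the second averaging inequality and \eqref{eq:LRthm2}, $E_{\tilde f'}[\phi_{\mu,\mu'}]\ge\inf_{\eta\in E'}E_{P_\eta'}[\phi_{\mu,\mu'}]=\beta_{\mu,\mu'}$. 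Since $\beta_{\tilde\mu,\tilde\mu'}$ is by definition the maximal power over level-$\alpha$ tests of $\tilde f$ against $\tilde f'$, I conclude $\beta_{\mu,\mu'}\le E_{\tilde f'}[\phi_{\mu,\mu'}]\le\beta_{\tilde\mu,\tilde\mu'}$, which is the least-favorability statement.

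The argument is essentially a repackaging of the Neyman--Pearson lemma, so there is no deep obstacle; the only steps that require genuine care are the Fubini interchange underpinning the averaging identities (routine given joint measurability and boundedness of tests) and the edge case in the uniqueness argument, where I must confirm that a test of size possibly strictly below $\alpha$ but attaining the level-$\alpha$ maximal power against $f'$ still coincides with the unique most powerful level-$\alpha$ test rather than merely being an undersized alternative.
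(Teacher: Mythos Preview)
The paper does not prove this statement; it is quoted verbatim from \cite{Lehmann:2006aa} as a tool for the proof of Theorem~\ref{thm:cs}, so there is no ``paper's own proof'' to compare against. Your argument is correct and is precisely the standard textbook proof (averaging inequalities plus Neyman--Pearson), and your caveat about the uniqueness edge case resolves cleanly: any level-$\alpha$ test attaining power $\beta_{\mu,\mu'}$ against $f'$ is by definition a most powerful level-$\alpha$ test, regardless of whether its size is exactly $\alpha$, so the assumed uniqueness forces it to coincide with $\phi_{\mu,\mu'}$.
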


\begin{lemma}\label{lemma: prodmeasure}
Let $\nu_{\theta}$ be defined as in (\ref{eq:defnutheta}), and let $\nu_{\theta}^{*}$ be its conjugate. Let $f: S\rightarrow \mathbb{R}$ be a measurable function. Similarly, for each $i\in \mathbb N$, let  $f_i: S\rightarrow \mathbb{R}$ be a measurable function.

Then,
	\begin{itemize}
		\item [(i)] There exists a minimizing measure $Q \in \Delta(S)$ and a maximizing measure $Q^*\in\Delta(S)$ such that, for any $t \in \mathbb{R}$,
		\begin{equation}
			{\nu}_{\theta}(f(s) > t)=Q(f(s) > t),
		\label{eq:pm1} \end{equation}
		and
		\begin{equation}
			{\nu}_{\theta}^{*}(f(s) > t)=Q^{*}(f(s) > t).
		\label{eq:pm2} \end{equation}

		\item [(ii)] If $Q_i,Q_i^*$ are the minimizing and maximizing measures for $\{f_i(s)>t\}$ respectively, it follows that
		\begin{equation}
			\nu_\theta^n\Big(\sum_{i=1}^nf_i(s_i) > t\Big)=Q^n\Big(\sum_{i=1}^nf_i(s_i) > t\Big),
		\label{eq:pm3} \end{equation}
		and
		\begin{equation}
			\nu_\theta^{*,n}\Big(\sum_{i=1}^nf_i(s_i) > t\Big)=Q^{*n}\Big(\sum_{i=1}^nf_i(s_i) > t\Big),
		\label{eq:pm4} \end{equation}
		for all $t\in\mathbb R$, where $Q^{n}=\bigotimes_{i=1}^nQ_i$ and $Q^{* n}=\bigotimes_{i=1}^nQ_i^* \in \Delta(S^{n})$.
	\end{itemize}
\end{lemma}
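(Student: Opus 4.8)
The plan is to reduce everything to the Huber–Strassen apparatus already invoked in the proof of Lemma \ref{thm:neyman_pearson}, and then to bootstrap from a single experiment to the product of $n$ experiments by an inductive argument. For part (i): by the argument in the proof of Lemma \ref{thm:neyman_pearson}, $\nu_\theta$ is an infinitely monotone capacity (in particular two-monotone) with conjugate $\nu_\theta^*$ two-alternating, and $\mathcal P_\theta = core(\nu_\theta)$. Lemma 2.4 of HS guarantees that for any measurable $f$ and any level set $\{f>t\}$, the infimum $\inf_{P\in core(\nu_\theta)} P(f>t)$ is attained, and more is true: one can choose a \emph{single} measure $Q\in\mathcal P_\theta$ that simultaneously attains the infimum for \emph{all} $t$ when $f$ is fixed, because the family $\{\{f>t\}:t\in\mathbb R\}$ is a nested (monotone) family and HS's construction of the least favorable pair works along monotone rearrangements. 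Concretely, order $S$ (finite) by decreasing value of $f$ and build $Q$ by the greedy/"water-filling" assignment that makes the upper tails of $f$ as light as possible subject to $Q(A)\ge\nu_\theta(A)$; two-monotonicity of $\nu_\theta$ is exactly what makes this greedy assignment consistent. The maximizing measure $Q^*$ for $\nu_\theta^*$ is obtained dually (or as the HS partner of $Q$). This gives \eqref{eq:pm1}–\eqref{eq:pm2}.

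For part (ii), I would induct on $n$. The base case $n=1$ is part (i). For the inductive step, fix $t$ and condition on the first $n-1$ coordinates: since $m_\theta^n$ is a product measure (Assumption \ref{as:csiid}) and the correspondence factors as $G^n(u^n|\theta)=\prod_i G(u_i|\theta)$, the belief function $\nu_\theta^n$ of $\mathcal P_\theta^n$ is the \emph{product} (in the sense of Dempster–Shafer / the product of infinitely monotone capacities) of the $\nu_{\theta,i}$, and hence is itself infinitely monotone; this is the key structural fact. Then $\nu_\theta^n\big(\sum_{i=1}^n f_i(s_i)>t\big)$ can be computed by a conditioning identity: writing $g(s_n)=f_n(s_n)$ and $S_{n-1}=\sum_{i=1}^{n-1}f_i(s_i)$, the quantity $\inf_{P\in\mathcal P_\theta^n}P(S_{n-1}+g>t)$ is attained and equals the value obtained by first optimizing the $n$-th marginal against each level $t-S_{n-1}$ and then optimizing the product of the first $n-1$ coordinates against the resulting monotone functional — and by part (i) the inner optimum over the $n$-th coordinate is realized by $Q_n$ uniformly in the threshold, while by the inductive hypothesis the outer optimum is realized by $Q^{n-1}=\bigotimes_{i=1}^{n-1}Q_i$. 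Monotonicity in $t$ of all the level sets is what lets a \emph{single} product measure $Q^n$ work for every $t$ simultaneously rather than a $t$-dependent measure. The upper-probability statement \eqref{eq:pm4} follows by the conjugacy $\nu_\theta^{*,n}(A)=1-\nu_\theta^n(A^c)$ and the dual construction, or by replacing $f_i$ with $-f_i$.

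The main obstacle I anticipate is making the conditioning/Fubini-type step rigorous for \emph{capacities} rather than measures: one cannot naively "integrate out" $s^{n-1}$ against a non-additive set function, and one must instead justify that the infimum over $\mathcal P_\theta^n$ decomposes as an iterated infimum, using the product structure of $\nu_\theta^n$ together with the two-monotonicity that lets HS-type greedy optimizers be patched across coordinates. The cleanest route is probably to avoid explicit conditioning and instead argue directly: the candidate product measure $Q^n=\bigotimes Q_i$ lies in $\mathcal P_\theta^n$ (because each $Q_i\in\mathcal P_{\theta,i}$ and products of selections are selections of the product correspondence), so $Q^n(\sum f_i>t)\ge \nu_\theta^n(\sum f_i>t)$ trivially; for the reverse inequality one shows that $\sum_i f_i$, together with the product capacity $\nu_\theta^n$, satisfies the hypotheses under which HS's Lemma 2.4–2.5 (applied to the $n$-fold product, which is again two-monotone) forces the minimizing measure for the monotone family $\{\sum f_i>t\}$ to be exactly the product of the coordinatewise minimizers. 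I would present the $n=2$ case in full detail and then remark that the general case follows by the obvious induction, since $\nu_\theta^n = \nu_{\theta,n}\otimes \nu_\theta^{n-1}$ and both factors are infinitely monotone.
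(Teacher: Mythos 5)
Your part (i) is essentially the paper's argument: apply Lemma 2.4 of HS to the $2$-alternating capacity $\nu_\theta^*$ (every $f$ on the finite set $S$ is upper semicontinuous), obtaining a single $Q^*$ that works for all $t$ simultaneously, and get the minimizing $Q$ by conjugacy after replacing $f$ with $-f$. The additional ``greedy/water-filling'' narrative is unnecessary but harmless.

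Part (ii), however, has a genuine gap. Since $Q^n=\bigotimes_i Q_i\in\mathcal P_\theta^n$, the inequality $\nu_\theta^n(\sum_i f_i>t)\le Q^n(\sum_i f_i>t)$ is immediate, and the entire content of the claim is the reverse inequality. Your first route (induction with conditioning) founders on exactly the obstacle you yourself name: there is no Fubini theorem for the non-additive $\nu_\theta^n$, and you do not supply a substitute. Your second route (apply HS Lemma 2.4 to the product capacity $\nu_\theta^n$) only yields the existence of \emph{some} measure attaining $\inf_{P\in\mathcal P_\theta^n}P(\sum_i f_i>t)$ for all $t$; the assertion that this infimum coincides in value with the probability assigned by the product of the coordinatewise minimizers is precisely what must be proved, and you leave it as ``one shows that\ldots''. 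The paper closes this gap with a short direct computation that your proposal never identifies. By Choquet's theorem (Lemma \ref{lem:choquet}), $\nu_\theta^n(\sum_i f_i>t)=m_\theta^n\bigl(G^n(u^n|\theta)\subseteq\{\sum_i f_i>t\}\bigr)$, and because $G^n(u^n|\theta)=\prod_i G(u_i|\theta)$ is a product set while $\sum_i f_i$ is separable,
\begin{equation*}
G^n(u^n|\theta)\subseteq\Bigl\{\sum_{i=1}^n f_i(s_i)>t\Bigr\}
\iff
\min_{s^n\in\prod_i G(u_i|\theta)}\sum_{i=1}^n f_i(s_i)>t
\iff
\sum_{i=1}^n\min_{s_i\in G(u_i|\theta)}f_i(s_i)>t.
\end{equation*}
Setting $Y_i=\min_{s_i\in G(u_i|\theta)}f_i(s_i)$, the lower probability equals $m_\theta^n(\sum_i Y_i>t)$; part (i) together with the single-experiment Choquet identity gives $Y_i\stackrel{d}{=}Z_i$ where $Z_i=f_i(s_i)$ under $Q_i$, and independence of the $u_i$ (Assumption \ref{as:csiid}) then yields $\sum_i Y_i\stackrel{d}{=}\sum_i Z_i$ under $Q^n$, which is the desired equality (the upper-probability version is symmetric). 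This ``minimum of a separable sum over a product set equals the sum of the coordinatewise minima'' step is the essential idea your proof is missing; without it, or an equivalent, neither of your two routes goes through.
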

\begin{proof}
	(i) By the same argument as in the proof of Lemma \ref{thm:neyman_pearson}, $\nu_{\theta}^{*}$ is a $2$-alternating capacity. Since $S$ is discrete, any function on $S$ is continuous. Therefore, $f$ is upper semi-continuous. By Lemma 2.4 in HS, there exists a probability measure $p^*\in \Delta(S)$ such that for all $t \in \mathbb{R}$, ${\nu}_{\theta}^{*}(f(s) > t)={p}^*(f(s) > t)$. This ensures \eqref{eq:pm2}. Similarly, let $g=-f$ and note that $g$ is again upper semicontinuous by the continuity of $f$. Applying Lemma 2.4 in HS to the event $\{g\ge -t\}$, there exists $p\in\Delta(s)$ such that
	\begin{align}
		&\nu^*_\theta(g\ge -t)=p(g\ge t)\notag\\
		&\qquad\Leftrightarrow 1-\nu^*_\theta(g< -t)=1-p(g<t)\notag\\
		&\qquad\Leftrightarrow \nu_\theta(f>t)=p(f>t),~\forall t\in\mathbb R.
	\label{eq:p-pm1} \end{align}
	This therefore establishes \eqref{eq:pm1}.

	(ii) For each $i$, let $Y_i\equiv \min_{s_i \in G(u_i|\theta)} f_i(s_i)$ and $Z_i\equiv f_i(s_i)$. The map $Y_i:U\to\mathbb R$ is measurable. Indeed, for every
$t\in\mathbb R$,
\[
    \{u_i:Y_i(u_i)>t\}
    =
    \left\{
    u_i:G(u_i|\theta)
    \subseteq
    \{s_i:f_i(s_i)>t\}
    \right\},
\]
which is measurable because $G(\cdot|\theta)$ is weakly measurable
and $S$ is finite.

We also use $m_{\theta,i}$ for its induced law. For each  experiment, we have
	\begin{equation}
		G(u|\theta) \subseteq \{s\in S: f_i(s)>t \} \Leftrightarrow \min_{s \in G(u|\theta)} f_i(s) > t.
	\label{eq:p-pm2} \end{equation}
	Therefore, by Lemma \ref{lem:choquet},
	\begin{equation}
		\begin{split}
			\nu_{\theta,i}(f_i(s_i) > t) & = m_{\theta,i}(\min_{s_i \in G(u_i|\theta)} f_i(s_i) > t)= m_{\theta,i}(Y_i> t),~ \forall t\in\mathbb R.
		\label{eq:p-pm3} \end{split}
	\end{equation}
	By (i), there is $Q_i\in\Delta (S)$ such that
	\begin{equation}
		\nu_{\theta,i}(f_i(s_i) > t)=Q_i(Z_i > t),~\forall t\in\mathbb R.
	\label{eq:p-pm4} \end{equation}
	Hence, by \eqref{eq:p-pm3}-\eqref{eq:p-pm4}, $Y_i \buildrel d \over = Z_i$ for all $i$.

	Let $\mathcal{P}_{\theta}^{n}$ be defined as in (\ref{eq:P_thetainfty}) and let  $\nu^{n}_{\theta}$,$\nu_{\theta}^{* n}$ be the lower and upper probabilities of $\mathcal{P}_{\theta}^{n}$ respectively. By Lemma \ref{lem:choquet}, $\nu_{\theta}^{n}$ is a belief function and $\nu_{\theta}^{*n}$ is its conjugate. Therefore,
	\begin{equation}
		\nu_{\theta}^{n}\Big(\sum_{i=1}^nf_i({s}_i) > t \Big) =m_{\theta}^{n}\Big(u^{n} \in U^{n}:G^{n}(u^{n}|\theta) \subseteq \{\sum_{i=1}^nf_i({s}_i) > t\} \} \Big).
	\label{eq:p-pm5} \end{equation}
	Since $G^{n} (u^{n}|\theta)=\prod_{i=1}^{n}G({u}_i|\theta)$, inside the parenthesis we have:
	\begin{align}
		\prod_{i=1}^{n}G({u}_i|\theta) &\subseteq \{{s}^{n}: \sum_{i=1}^nf_i(s_i) > t\}\notag\\
		&\Leftrightarrow \min_{s^{n} \in \prod_{i=1}^{n}G(u_i|\theta)} \sum_{i=1}^nf_i(s_i) > t \Leftrightarrow \sum_{i=1}^{n} \min_{s_i \in G(u_i|\theta)} f_i(s_i) > t.
	\label{eq:p-pm6} \end{align}
	By \eqref{eq:p-pm5}-\eqref{eq:p-pm6} and recalling that $Y_i=\min_{s_i\in G({u}_i|\theta)} f_i(s_i)$, we have
	\begin{align}
		\nu_{\theta}^{n}\Big(\sum_{i=1}^nf_i({s}_i) > t \Big)=m^{n}_{\theta}\Big(\sum_{i=1}^{n} \min_{s_i\in G({u}_i|\theta)} f_i(s_i) > t\Big)=m_{\theta}^{n} \Big(\sum_{i=1}^{n} Y_i >t \Big).
	\label{eq:p-pm7} \end{align}
	Let $\{Y_1,Y_2, \dotsi, Y_i, \dotsi \}$ be independently distributed according to $m_{\theta}^n$, and let $\{Z_1,Z_2, \dotsi, Z_i \dotsi \}$ be independently distributed according to $Q^n$. Then, $\sum_{i=1}^{n} Y_i \buildrel d \over = \sum_{i=1}^{n}Z_i$ because $(Y_1,\dots,Y_n) \buildrel d \over = (Z_1,\dots,Z_n)$. Therefore, for all $t\in\mathbb R$,
	\begin{align}
		m^{n}_{\theta}\Big(\sum_{i=1}^{n} Y_i > t\Big)=Q^{n}(\sum_{i=1}^{n} Z_i > t).
	\label{eq:p-pm8} \end{align}
	By \eqref{eq:p-pm7}-\eqref{eq:p-pm8} and $\nu^n_\theta$ being the lower probability of $\mathcal P_\theta^n$, we have
	\begin{align}
		\min_{P \in \mathcal{P}_{\theta}^{n}}P(\sum_{i=1}^nf_i(s_i) > t)=\nu_{\theta}^{n} \Big(\sum_{i=1}^{n}f_i(s_i) > t\Big)=Q^{n}(\sum_{i=1}^nf_i(s_i) > t)\label{eq:p-pm9}.
	\end{align}
	This establishes \eqref{eq:pm3}. One may show \eqref{eq:pm4} by a similar argument.
\end{proof}

\begin{proof}
	[\rm Proof of Theorem \ref{thm:cs}] 
For $j\in\{0,1\}$ and $i=1,\ldots,n$, let
$\eta_{j,i}\in\mathfrak K_{m_{\theta_j,i}}(G_{\theta_j})$
be a Markov kernel inducing $Q_{j,i}$. Define the product kernel by
\[
    \eta_j^n(\{s^n\}\mid u^n)
    =
    \prod_{i=1}^n
    \eta_{j,i}(\{s_i\}\mid u_i),
    \qquad s^n\in S^n.
\]
Since $S$ is finite, $\eta_j^n$ is measurable. Moreover,
\[
    \eta_j^n\{G^n(u^n\mid\theta_j)\mid u^n\}=1
    \qquad m_{\theta_j}^n\text{-a.s.}
\]
By Assumption~1 and Fubini's theorem, the distribution induced by
$\eta_j^n$ is $\bigotimes_{i=1}^n Q_{j,i}.$
Hence $Q_j^n=\bigotimes_{i=1}^nQ_{j,i}$ belongs to
$\cP_{\theta_j}^n$.

Recall that $Q^n_0=\otimes_{i=1}^n Q_{0,i}$, $Q^n_1=\otimes_{i=1}^n Q_{1,i}$, and $\Lambda_n$ is a version of the Radon-Nikodym derivative of them.
We follow Section 8.3 in \cite{Lehmann:2006aa}  and show the following statements:
\begin{itemize}
	\item[(a)] When $s^n$ is distributed according to a distribution in $\mathcal P^n_{\theta_0}$, the probability of the event $\{s^n:\Lambda_n>t\}$ is largest (for any $t$), i.e. $\Lambda_n$ is stochastically largest, when the distribution of $s^n$ is $Q^n_0=\otimes_{i=1}^n Q_{0,i}$.
	\item[(b)] When  $s^n$ is distributed according to a distribution in $\mathcal P^n_{\theta_1}$, the probability of the event $\{s^n:\Lambda_n>t\}$ is smallest (for any $t$), i.e. $\Lambda_n$ is stochastically smallest, when the distribution of $s^n$ is $Q^n_1=\otimes_{i=1}^n Q_{1,i}$.
	\item[(c)] $\Lambda_n$ is stochastically larger when the distribution of $s$ is $Q_1^n$ than when it is $Q_0^n$.
\end{itemize}
These statements are summarized by 
\begin{align}
	Q^{n,\prime}_0(\Lambda_n>t)\stackrel{(a)}{\le} Q_0^n(\Lambda_n>t)\stackrel{(c)}{\le} Q_1^n(\Lambda_n>t)\stackrel{(b)}{\le} Q^{n,\prime}_1(\Lambda_n>t),\label{eq:LR_ineq}
\end{align}
for all $t$, $Q^{n,\prime}_0\in\mathcal P^n_{\theta_0}$, and $Q^{n,\prime}_1\in\mathcal P^n_{\theta_1}.$

Below, we invoke Lemma \ref{lemma: prodmeasure}. For this, let $f_i(\cdot)=\ln \Lambda_i(\cdot)$, where $\Lambda_i\in dQ_{1,i}/dQ_{0,i}$.
Let  $(Q^{*n},Q^n)$ be the product measures  in Lemma \ref{lemma: prodmeasure} with $f_i=\ln \Lambda_i$ for $i=1,\dots,n$.

Note that $\Lambda_n>t$ is equivalent to $\sum_{i=1}^nf_i(s_i)>\ln t$.
By Lemma \ref{lemma: prodmeasure} with $t'=\ln t$, it then follows that
\begin{align}
		\nu^{*n}_{\theta_0}(\Lambda_n>t)=\nu_{\theta_0}^{*n}\big(\sum_{i=1}^n f_i(s_i)>t'\big)=Q^{*n}\big(\sum_{i=1}^n f_i(s_i)>t'\big)=Q^{*n}(\Lambda_n>t),
\label{eq:prod2} \end{align}
where $Q^{*n}=Q_0^n$. Recall that $\nu^{*n}_{\theta_0}(\Lambda_n>t)=\sup_{Q_0^{n,\prime}\in \mathcal P_{\theta_0}^n}Q_0^{n,\prime}(\Lambda_n>t)$. This therefore means $Q^{n}_0$ makes $\Lambda_n$ stochastically largest among all distributions in $\mathcal P^n_{\theta_0}$ and hence ensures inequality (a) in \eqref{eq:LR_ineq}.

	Similarly, again by Lemma \ref{lemma: prodmeasure},
	\begin{align}
		\nu^n_{\theta}(\Lambda_n>t)=\nu_{\theta}^n\big(\sum_{i=1}^n f_i(s_i)>t'\big)=Q^n\big(\sum_{i=1}^n f_i(s_i)>t'\big)=Q^n(\Lambda_n>t),\label{eq:prod1}.
	\end{align}
where 	$Q^n=Q^{n}_1$.
	 Therefore, $Q^n$ makes $\Lambda_n$ stochastically smallest and hence ensures inequality (b) in \eqref{eq:LR_ineq}.

The middle inequality in \eqref{eq:LR_ineq} follows from Corollary 3.2.1 in \cite{Lehmann:2006aa} and the Neyman-Pearson lemma.
Let $\mathcal E=\mathcal P^n_{\theta_0}$ and $\mathcal E'=\mathcal P^n_{\theta_1}$. Let $\mu,\mu'\in \Delta(\mathcal E)$ be distributions, each assigning probability 1 to a single distribution, $\mu$ to $Q_0^n\in\mathcal P^n_{\theta_0}$ and $\mu'$ to $Q_1^n\in\mathcal P^n_{\theta_1}$. Let $(C_n,\gamma_n)$ be chosen so that $E_{Q_0^n}[\phi_n(s^n)]=\alpha$, where $\phi_n$ is the likelihood-ratio test defined as in \eqref{eq:pi-n}.
The argument above shows that $\mu,\mu'$ satisfy \eqref{eq:LRthm1}-\eqref{eq:LRthm2}.
Equip $\mathcal E$ and $\mathcal E'$ with the Borel
$\sigma$-fields inherited from the finite-dimensional simplex
$\Delta(S^n)$. With counting measure as the dominating measure, the density
of $P\in\Delta(S^n)$ is $p_P(s^n)=P(\{s^n\})$
Because $S^n$ is finite, the map $(s^n,P)\mapsto p_P(s^n)$ is jointly
measurable. Thus the measurability condition in Theorem \ref{thm:Lehmann-Romano}  is satisfied.
The first claim of the theorem then follows from applying Theorem \ref{thm:Lehmann-Romano} to the present setting. 

Because $\Lambda_n$ has finite support, choose $C_n^-<C_n$ such that
$\Lambda_n$ takes no value in $[C_n^-,C_n)$. Then
\[
    \phi_n
    =
    (1-\gamma_n)1\{\Lambda_n>C_n\}
    +
    \gamma_n 1\{\Lambda_n>C_n^-\}.
\]
Hence the stochastic ordering applied
at $C_n$ and $C_n^-$, implies that $Q_0^n$ maximizes the rejection
probability under the null and $Q_1^n$ minimizes it under the alternative.
\end{proof}

\begin{proof}[\rm Proof of Corollary \ref{cor:cs}]
By Theorem \ref{thm:cs}, the LFP $(Q_0^n,Q_1^n)$ exists, and they are product measures. Note that, in the application of Lemma \ref{lemma: prodmeasure}, $Q_i$ (and $Q^*_i$) is identical across $i$ because $\nu_\theta$ (and $\nu^*_\theta$) is identical across $i$. The conclusion of the Corollary then follows by arguing as in the proof of Theorem \ref{thm:cs}.
\end{proof}

\subsection{Proof of Proposition  \ref{prop:asymptotics}}
\begin{proof}
	[\rm Proof of Proposition \ref{prop:asymptotics}] Note that
	\begin{align}
		\sup_{P\in\mathcal P^n_{\theta_0}}E_{P}[\phi_n^*(s^n)]&= \sup_{P\in\mathcal P^n_{\theta_0}}P(\Lambda_n(s^n)> C_n^*)\notag\\
		&=\nu^{*n}_{\theta_0}(\Lambda_n(s^n)> C_n^*)\notag\\
		&=Q_0^n\big(\Lambda_n(s^n)> C_n^*\big)\notag\\
		&=Q_0^n\Big(\frac{1}{\sqrt n}\sum_{i=1}^n\ln\frac{dQ_1}{dQ_0}(s_i)-E_{Q_0}[\ln \frac{dQ_1}{dQ_0}(s_i)]> \sigma_{Q_0}z_{\alpha}\Big),\label{eq:clt}
	\end{align}
	where the second equality follows from Choquet's theorem (Lemma \ref{lem:choquet}), and the third equality follows from $Q_0^n$ being the least favorable null distribution by Theorem \ref{thm:cs}. Let $Z_i\equiv \ln \frac{dQ_1}{dQ_0}(s_i)$. Under $Q_0^n$, $Z_i,i=1,\dots,n$ is an i.i.d. sequence with a finite variance due to $\sigma^2_{Q_0}<\infty$. Hence, if $\sigma_{Q_0}>0$, by the CLT for i.i.d. random variables, one obtains
	\begin{align}
		\lim_{n\to\infty} Q_0^n\Big(\frac{1}{\sqrt n}\sum_{i=1}^n\frac{\ln\frac{dQ_1}{dQ_0}(s_i)-E_{Q_0}[\ln \frac{dQ_1}{dQ_0}(s_i)]}{\sigma_{Q_0}}>z_{\alpha}\Big)=\text{Pr}\big(Z> z_{\alpha}\big)= \alpha,
	\end{align}
	where $Z\sim N(0,1)$. If $\sigma_{Q_0}=0$, the summand in \eqref{eq:clt} is identically 0 and hence the probability of the event is zero and hence $\limsup_{n\to\infty}\sup_{P\in\mathcal P^n_{\theta_0}}E_{P}[\phi_n^*(s^n)]\le \alpha.$
\end{proof}

\subsection{Proof of Proposition \ref{prop:cond-ump}}
\begin{proof}[\rm Proof of Proposition \ref{prop:cond-ump}]
Fix $k\in\cK_n$.  The conditional test induced by $\phi_n^*$ is
\[
    \phi_{n,k}^*(s^n)
    =
    1\{T_n(s^n)>c_k\}+\gamma_k 1\{T_n(s^n)=c_k\},
    \qquad s^n\in\Omega_{n,k}.
\]
This function is nondecreasing in $T_n$.  By Assumption \ref{ass:lfp},
\[
    \sup_{P\in\cP^n_{\theta_0}:P(K_n=k)>0}
    E_P[\phi_n^*(s^n)\mid K_n=k]
    =
    \sup_{R\in\cC^n_{\theta_0,k}}E_R[\phi_{n,k}^*(s^n)]
    =
    E_{Q_{\theta_0,k}}[\phi_{n,k}^*(s^n)].
\]
By \eqref{eq:cv}, the last display equals $\alpha$.  This proves exact conditional robust level.

Next fix $w\in\mathcal W_1$.  Since $\phi_{n,k}^*$ is nondecreasing in $T_n$, Assumption \ref{ass:lfp} gives
\[
    \underline\pi^n_{\theta_w,k}(\phi_n^*)
    =
    \inf_{R\in\cC^n_{\theta_w,k}}E_R[\phi_{n,k}^*(s^n)]
    =
    E_{Q_{\theta_w,k}}[\phi_{n,k}^*(s^n)],
\]
which proves the stated lower-power formula.

It remains to prove conditional UMP optimality.  Let $\psi_n$ be any test with conditional robust level $\alpha$, and let $\psi_{n,k}$ denote its restriction to $\Omega_{n,k}$.  Because $Q_{\theta_0,k}\in\cC^n_{\theta_0,k}$,
\[
    E_{Q_{\theta_0,k}}[\psi_{n,k}(s^n)]
    \le
    \sup_{R\in\cC^n_{\theta_0,k}}E_R[\psi_{n,k}(s^n)]
    \le \alpha .
\]
Thus $\psi_{n,k}$ is a level-$\alpha$ test for the ordinary simple null $Q_{\theta_0,k}$.  By Assumption \ref{ass:mlr}, the likelihood ratio $dQ_{\theta_w,k}/dQ_{\theta_0,k}$ is a nondecreasing function of $T_n$.  Hence, by the Neyman-Pearson lemma, with the usual monotone-likelihood-ratio nesting argument, the upper-tail test $\phi_{n,k}^*$ maximizes $E_{Q_{\theta_w,k}}[\cdot]$ among all tests with $E_{Q_{\theta_0,k}}[\cdot]\le\alpha$, for every $w\in\mathcal W_1$.  Therefore,
\[
    E_{Q_{\theta_w,k}}[\psi_{n,k}(s^n)]
    \le
    E_{Q_{\theta_w,k}}[\phi_{n,k}^*(s^n)].
\]
Since $Q_{\theta_w,k}\in\cC^n_{\theta_w,k}$,
\[
    \underline\pi^n_{\theta_w,k}(\psi_n)
    =
    \inf_{R\in\cC^n_{\theta_w,k}}E_R[\psi_{n,k}(s^n)]
    \le
    E_{Q_{\theta_w,k}}[\psi_{n,k}(s^n)].
\]
Combining the last two inequalities with the lower-power equality already established for $\phi_n^*$ yields
\[
    \underline\pi^n_{\theta_w,k}(\psi_n)
    \le
    E_{Q_{\theta_w,k}}[\psi_{n,k}(s^n)]
    \le
    E_{Q_{\theta_w,k}}[\phi_{n,k}^*(s^n)]
    =
    \underline\pi^n_{\theta_w,k}(\phi_n^*).
\]
This proves the conditional UMP optimality.
\end{proof}

\subsection{Corollaries to Proposition \ref{prop:cond-ump}}\label{ssec:UMPs}
\textbf{Entry game:}
Consider Example \ref{ex:game}. Consider testing $H_0:\theta=(0,0)$ against $H_1:\theta=\theta_w,w\in\mathcal W_1$, where $\theta_w=(-w,-w),w\in\cW_1$.

Let $U_i=(U_i^{(1)},U_i^{(2)})$, $i=1,\ldots,n$, be i.i.d. with a known continuous exchangeable law $F$ on $\R^2$ (e.g., $N(0,I_2)$).  Assume that $F$ has no mass on the relevant threshold sets and define $\kappa=
1-F\big((-\infty,0)\times(-\infty,0)\big),$ $d(w)=F\big([w,\infty)\times[w,\infty)\big),$ $p(w)=\frac{d(w)}{\kappa}$, and $p_0=p(0)$. Assume $0<\kappa<1$ and $0<p(w)<p_0<1$ for every $w\in\cW_1$.  For instance, one may take $\cW_1=(0,\infty)$ when this strict inequality holds for every $w>0$.

For each $k\in\cK_n$, choose $c_k\in\{0,1,\ldots,k\}$ and $\gamma_k\in[0,1]$ such that
\[
    P_{B\sim \mathrm{Bin}(k,p_0)}(B<c_k)+\gamma_k P_{B\sim \mathrm{Bin}(k,p_0)}(B=c_k)=\alpha.
\]
Define
\begin{align}
    \phi_n^{\mathrm{E}}(s^n)
    =
    1\{N_{11}(s^n)<c_{K_n(s^n)}\}
    +
    \gamma_{K_n(s^n)}1\{N_{11}(s^n)=c_{K_n(s^n)}\}.\label{eq:phiE}
\end{align}

\begin{corollary}[Exact conditional UMP test for the directed entry-game alternative]
\label{cor:entry-cump}
In the entry-game experiment described above, the following statements hold.

\begin{enumerate}
\item[(i)] The directed alternatives are robustly testable: for every $w\in\cW_1$, $ \cP^n_{\theta_0}\cap \cP^n_{\theta_w}=\emptyset.$

\item[(ii)] The test $\phi_n^{\mathrm{E}}$ is conditionally exact.  For every $k\in\cK_n$,
\[
    \sup_{R\in\cC^n_{\theta_0,k}} E_R[\phi_n^{\mathrm{E}}(s^n)]
    =
    \alpha.
\]
Consequently, since $K_n\sim\mathrm{Bin}(n,\kappa)$ under the null,
\[
    \sup_{P\in\cP^n_{\theta_0}} E_P[\phi_n^{\mathrm{E}}(s^n)]
    =
    \alpha.
\]

\item[(iii)] The test is conditionally UMP in power guarantee along the directed path. 
For every $w\in\cW_1$, every $k\in\cK_n$, and every test $\psi_n$ satisfying the conditional robust level condition
\[
    \sup_{R\in\cC^n_{\theta_0,k}}E_R[\psi_n(s^n)]\le \alpha,
\]
we have
\[
    \inf_{R\in\cC^n_{\theta_w,k}}E_R[\phi_n^{\mathrm{E}}(s^n)]
    \ge
    \inf_{R\in\cC^n_{\theta_w,k}}E_R[\psi_n(s^n)].
\]
\end{enumerate}
\end{corollary}

\begin{proof}
We verify the high-level conditions of Proposition \ref{prop:cond-ump} with the conditional ordering statistic $T_n=-N_{11}$.  The conclusion of parts (ii)--(iii) then follows directly from that proposition.

First, $K_i=1\{s_i\neq(0,0)\}$ and $D_i=1\{s_i=(1,1)\}$ are both selection-invariant, and
\[
    P_{\theta_w}(K_i=1)=\kappa,
    \qquad
    P_{\theta_w}(D_i=1)=d(w),
    \qquad
    P_{\theta_w}(D_i=1\mid K_i=1)=p(w).
\]
Since $U_i$ is i.i.d. across markets,
\[
    N_{11}\mid K_n=k \sim \mathrm{Bin}(k,p(w))
\]
under every law in $\cP^n_{\theta_w}$.  In particular, under $H_0$, $N_{11}\mid K_n=k\sim\mathrm{Bin}(k,p_0)$.

For any upper-tail critical function in $T_n=-N_{11}$,
\[
    \tau_{c,\gamma}\{T_n(s^n)\}
    =
    1\{-N_{11}>c\}+\gamma 1\{-N_{11}=c\},
\]
the expectation depends only on the conditional distribution of $N_{11}$ given $K_n=k$, which is selection-invariant.  Therefore, for each $k$ and $w$, we may take any feasible conditional laws $Q_{\theta_0,k}\in\cC^n_{\theta_0,k}$ and $Q_{\theta_w,k}\in\cC^n_{\theta_w,k}$ that induce the binomial laws $\mathrm{Bin}(k,p_0)$ and $\mathrm{Bin}(k,p(w))$ for $N_{11}$.  Then the supremum over $\cC^n_{\theta_0,k}$ and the infimum over $\cC^n_{\theta_w,k}$ are attained at these laws for all such critical functions. This verifies Assumption \ref{ass:lfp}.

It remains only to check the monotone likelihood-ratio condition.  Choose $Q_{\theta_0,k}$ and $Q_{\theta_w,k}$ so that, conditional on $K_n=k$, the probability assigned to the monopoly outcomes is split symmetrically between $(1,0)$ and $(0,1)$.  Exchangeability of $F$ and the symmetry of the directed path make these conditional laws feasible.  With respect to counting measure on $\Omega_{n,k}$, we have
\[
    \frac{dQ_{\theta_w,k}}{dQ_{\theta_0,k}}(s^n)
    =
    C_{w,k}
    \left(
    \frac{p(w)(1-p_0)}{p_0(1-p(w))}
    \right)^{N_{11}(s^n)}
\]
for a positive constant $C_{w,k}$.  Since $p(w)<p_0$, this likelihood ratio is decreasing in $N_{11}$, and hence nondecreasing in $T_n=-N_{11}$, ensuring Assumption \ref{ass:mlr}.  Proposition \ref{prop:cond-ump} therefore implies that $\phi_n^{\mathrm{E}}$ is conditionally exact and conditionally UMP.  

Finally, part (i) follows from the same selection-invariance.  The marginal probability of duopoly is $d(0)$ under every law in $\cP^n_{\theta_0}$ and is $d(w)$ under every law in $\cP^n_{\theta_w}$.  Since $d(w)<d(0)$ for $w\in\cW_1$, the two sets of laws cannot intersect.
\end{proof}

\bigskip
\noindent
\textbf{Roy Model}: Consider Example \ref{ex:roy}. Let
\[
    \theta=(\theta^{(0,0)},\theta^{(0,1)},\theta^{(1,0)})\in\Theta,
\]
where $\theta^{(a,b)}=m((Y_0,Y_1)=(a,b))$, and note that $ \theta^{(1,1)}
    \equiv
    1-\theta^{(0,0)}-\theta^{(0,1)}-\theta^{(1,0)}.$
Fix $\theta_0=(\theta_0^{(0,0)},\theta_0^{(0,1)},\theta_0^{(1,0)})$ satisfying $0<1-\theta_0^{(0,0)}<1$ and $0<
    \frac{1-\theta_0^{(0,0)}-\theta_0^{(0,1)}}
         {1-\theta_0^{(0,0)}}
    <1.$ Under $\theta_0$, $P(s=(1,0))$ belongs to 
\[[\theta_0^{(1,0)},\,
1-\theta_0^{(0,0)}-\theta_0^{(0,1)}].\]
Hence, any parameter value that induces $P(s=(1,0))>1-\theta_0^{(0,0)}-\theta_0^{(0,1)}$ is robustly testable.

Let $\{\theta_w:w\in\cW\}$ be a scalar-indexed directed path.  We assume that the path keeps the probability of type $(0,0)$ fixed,
\[
    \theta_w^{(0,0)}=\theta_0^{(0,0)},
    \qquad w\in\cW,
\]
and choose a directed, robustly testable subset $\cW_1\subseteq\cW$ such that, for every $w\in\cW_1$,
\begin{align}
	    1-\theta_0^{(0,0)}-\theta_0^{(0,1)}
    <
    \theta_w^{(1,0)}
    <
    1-\theta_0^{(0,0)}.\label{eq:roy_path1}
\end{align}
Define
\[
    N_{10}(s^n)=\sum_{i=1}^n 1\{s_i=(1,0)\},
    \qquad
    N_{11}(s^n)=\sum_{i=1}^n 1\{s_i=(1,1)\},
\]
and condition on
\[
    K_n(s^n)=N_{10}(s^n)+N_{11}(s^n).
\]
Thus $K_n$ is the number of observations with $Y=1$.  Let
\[
    \Omega_{n,k}=\{s^n:K_n(s^n)=k\},
    \qquad
    \cK_n=\{0,1,\ldots,n\}.
\]
For $\theta\in\Theta$, define the conditional experiment
\[
    \cC^n_{\theta,k}
    =
    \left\{
    \mathcal L_P(S^n\mid K_n=k):
    P\in\cP^n_\theta,\; P(K_n=k)>0
    \right\}.
\]

Set
\[
    p_0^+
    =
    \frac{1-\theta_0^{(0,0)}-\theta_0^{(0,1)}}
         {1-\theta_0^{(0,0)}},
    \qquad
    p_w^-
    =
    \frac{\theta_w^{(1,0)}}
         {1-\theta_0^{(0,0)}}.
\]
By \eqref{eq:roy_path1}, $p_w^->p_0^+$ for every $w\in\cW_1$.  The conditional ordering statistic is
\[
    T_n(s^n)=N_{10}(s^n),
\]
so the relevant tests are upper-tail tests.  For each $k\in\cK_n$, choose an integer $c_k\in\{0,1,\ldots,k\}$ and $\gamma_k\in[0,1]$ such that
\[
    P_{B\sim \Bin(k,p_0^+)}(B>c_k)
    +
    \gamma_kP_{B\sim \Bin(k,p_0^+)}(B=c_k)
    =\alpha.
\]
For $k=0$, take $c_0=0$ and $\gamma_0=\alpha$.  Define
\[
    \phi_n^{\mathrm R}(s^n)
    =
    1\{N_{10}(s^n)>c_{K_n(s^n)}\}
    +
    \gamma_{K_n(s^n)}1\{N_{10}(s^n)=c_{K_n(s^n)}\}.
\]
\begin{corollary}[Exact conditional UMP test for a general directed Roy-model alternative]
\label{cor:roy-general-cump}
In the Roy-model experiment described above, the following statements hold.

\begin{enumerate}[label=(\roman*)]
\item The directed alternatives are robustly testable: for every $w\in\cW_1$, $    \cP^n_{\theta_0}\cap\cP^n_{\theta_w}=\emptyset.$

\item The test $\phi_n^{\mathrm R}$ is conditionally exact.  For every $k\in\cK_n$,
\[
    \sup_{R\in\cC^n_{\theta_0,k}}
    E_R[\phi_n^{\mathrm R}(s^n)]
    =\alpha.
\]
Consequently, since $K_n\sim\Bin(n,1-\theta_0^{(0,0)})$ under $\theta_0$ for every admissible selection mechanism,
\[
    \sup_{P\in\cP^n_{\theta_0}}E_P[\phi_n^{\mathrm R}(s^n)]=\alpha.
\]

\item The test is conditionally UMP in robust lower power along the directed path.  For every $w\in\cW_1$, every $k\in\cK_n$, and every test $\psi_n$ satisfying
\[
    \sup_{R\in\cC^n_{\theta_0,k}}E_R[\psi_n(s^n)]\le\alpha,
\]
we have
\[
    \inf_{R\in\cC^n_{\theta_w,k}}E_R[\phi_n^{\mathrm R}(s^n)]
    \ge
    \inf_{R\in\cC^n_{\theta_w,k}}E_R[\psi_n(s^n)].
 \]
\end{enumerate}
\end{corollary}

\begin{proof}
We first verify robust testability.  In a single experiment, the sharp restrictions imply
\[
    P_\theta(s=(1,0))
    \in
    \left[
    \theta^{(1,0)},\;\theta^{(1,0)}+\theta^{(1,1)}
    \right]
    =
    \left[
    \theta^{(1,0)},\;1-\theta^{(0,0)}-\theta^{(0,1)}
    \right].
\]
Under $\theta_w$, the lower endpoint is $\theta_w^{(1,0)}$. By \eqref{eq:roy_path1},
\[
    \theta_w^{(1,0)}
    >
    1-\theta_0^{(0,0)}-\theta_0^{(0,1)}.
\]
Thus the single-experiment null and alternative sets are separated by the event $\{(1,0)\}$.  The same separation holds in the repeated experiment by considering the marginal event $\{S_1=(1,0)\}$.  Hence $\cP^n_{\theta_0}\cap\cP^n_{\theta_w}=\emptyset$ for every $w\in\cW_1$.

It remains to verify the high-level conditions of Proposition~\ref{prop:cond-ump}.  Fix $k\in\cK_n$.  Since $\theta_w^{(0,0)}=\theta_0^{(0,0)}$ along the path, the event $\{Y=1\}$ has probability $1-\theta_0^{(0,0)}$ for all $w\in\cW_1$ and is unaffected by selection.  Hence $K_n\sim\textrm{Bin}(n,1-\theta_0^{(0,0)})$ under the null and under every directed alternative.

Conditional on $K_n=k$, the $k$ observations with $Y=1$ have latent types in $\{(0,1),(1,0),(1,1)\}$.  Under $\theta_w$, their conditional probabilities are
\[
    m_{\theta_w}(u=(0,1)\mid Y=1)
    =
    \frac{\theta_w^{(0,1)}}{1-\theta_0^{(0,0)}},
\]
\[
    m_{\theta_w}(u=(1,0)\mid Y=1)
    =
    \frac{\theta_w^{(1,0)}}{1-\theta_0^{(0,0)}},
    \qquad
    m_{\theta_w}(u=(1,1)\mid Y=1)
    =
    \frac{\theta_w^{(1,1)}}{1-\theta_0^{(0,0)}}.
\]
Only type $u=(1,1)$ is ambiguous among observations with $Y=1$, because it can generate either $s=(1,0)$ or $s=(1,1)$.  Among the observations for which $K_n=k$, let $A_k$ be the number of observations with latent type $u=(1,0)$, and let $B_k$ be the number with latent type $u=(1,1)$.  Under any admissible selection mechanism,
\begin{align}
    A_k\le N_{10}\le A_k+B_k.	\label{eq:pointwise_ineq}
\end{align}

For upper-tail tests in $T_n=N_{10}$, the rejection probability under the null is maximized by selecting $s=(1,0)$ whenever $u=(1,1)$.  At $\theta_0$, this gives
\[
    N_{10}\mid K_n=k
    \sim
    \textrm{Bin}\left(k,
    \frac{\theta_0^{(1,0)}+\theta_0^{(1,1)}}{1-\theta_0^{(0,0)}}
    \right)
    =
    \textrm{Bin}(k,p_0^+).
\]
Similarly, for $w\in\cW_1$, the rejection probability under the alternative is minimized by selecting $s=(1,1)$ whenever $u=(1,1)$.  This gives
\[
    N_{10}\mid K_n=k
    \sim
    \textrm{Bin}\left(k,
    \frac{\theta_w^{(1,0)}}{1-\theta_0^{(0,0)}}
    \right)
    =
   \textrm{Bin}(k,p_w^-).
\]
For the observations with $Y=0$, use the same selection rule under the least favorable null and alternative; for example, split $u=(0,0)$ equally between $(0,0)$ and $(0,1)$.  This common component does not affect $T_n$ and cancels from the conditional likelihood ratio.  The inequalities \eqref{eq:pointwise_ineq} then verify Assumption \ref{ass:lfp}.

Conditional on $K_n=k$, the least favorable null has success probability $p_0^+$ and the least favorable alternative has success probability $p_w^-$.  Therefore, after cancelling factors that do not depend on $N_{10}$,
\[
    \frac{dQ_{\theta_w,k}}{dQ_{\theta_0,k}}(s^n)
    =
    C_{w,k}
    \left(
    \frac{p_w^-(1-p_0^+)}
         {p_0^+(1-p_w^-)}
    \right)^{N_{10}(s^n)}
\]
for a positive constant $C_{w,k}$.  Since $p_w^->p_0^+$, the base of the power is greater than one.  Hence the likelihood ratio is nondecreasing in $T_n$, verifying Assumption \ref{ass:mlr}.

The conditional exactness and conditional UMP conclusions now follow from Proposition~\ref{prop:cond-ump}.  
\end{proof}

\begin{remark}[Connection to the calibration in the simulations]\label{rem:calibration}
The numerical calibration used in the simulations is recovered by setting
\[
    \theta_0=\left(\frac16,\frac12,\frac16\right),
    \qquad
    \theta_w=
    \left(\frac16,\frac12-\frac{w}{6},\frac16+\frac{w}{6}\right),
    \qquad
    \cW_1=(1,3].
\]
Then, $\cW_1=[1,3]$ satisfies \eqref{eq:roy_path1}, and
\[
    p_0^+=\frac25,
    \qquad
    p_w^-=\frac{1+w}{5}.
\]

\end{remark}

\subsection{Proof of Theorems in Section \ref{sec:extensions}}
In what follows, we repeatedly use the fact that, for any nonnegative measurable function $g$ on $S$, belief function $\nu$ and its conjugate $\nu^*$, one has
\begin{align}
	\int g(s)d\nu^*(s)&=\int \max_{s\in K}g(s) dM_\nu\\
	\int g(s)d\nu(s)&=\int \min_{s\in K}g(s) dM_\nu,
\end{align}
where $M_\nu$ is the probability measure on $\mathcal K(S)$ associated with $\nu$ (see Lemma \ref{lem:choquet}). 
\begin{proof}
	[\rm Proof of Lemma \ref{thm:bayes-dempster-shafer}] 
We first start with showing \eqref{eq:riskfun} and \eqref{eq:bdsrisk}. For this, observe that
	\begin{align}
		R(\theta,\phi)&=\max_{P\in \mathcal P_\theta}\int \phi(s)I_{\Theta_0}(\theta)+\zeta(1-\phi(s)) I_{\Theta_1}(\theta)dP(s)\notag\\
		&=\max_{P\in \mathcal P_\theta}( I_{\Theta_0}(\theta)-\zeta I_{\Theta_1}(\theta))\int\phi(s)dP(s)+\zeta I_{\Theta_1}(\theta)\notag\\
		&= I_{\Theta_0}(\theta)\max_{P\in \mathcal P_\theta}\int\phi(s)dP(s) -\zeta I_{\Theta_1}(\theta)\min_{P\in \mathcal P_\theta}\int\phi(s)dP(s)+\zeta I_{\Theta_1}(\theta)\notag\\
		&=\int \phi(s)d\nu^*_\theta(s)I_{\Theta_0}(\theta)+\zeta(1-\int\phi(s)d\nu_\theta(s))I_{\Theta_1}(\theta),
	\end{align}
	where the third equality follows from the fact that $ I_{\Theta_0}(\theta)- \zeta I_{\Theta_1}(\theta)>0$ if and only if $I_{\Theta_0}(\theta)=1$ (and $ I_{\Theta_0}(\theta)-\zeta I_{\Theta_1}(\theta)\le 0$ if and only if $I_{\Theta_1}(\theta)=1$). The last equality follows from $core(\nu_\theta)=\mathcal P_\theta$ by Theorem 3 in \cite{philippe1999decision} and the fact that, for any nonnegative bounded function $g$ on $S$, $\int g d\nu\le \int gdP\le \int gd\nu^*$ for all $P\in core(\nu)$.

	Using this, write
	\begin{align}
		r(\mu,\phi)&\equiv \int_\Theta R(\theta,\phi)d\mu(\theta)\notag\\
		&=\tau\int\int \phi(s)I_{\Theta_0}(\theta)d\nu^*_\theta(s)d\mu_0(\theta)+\zeta(1-\tau)(1-\int\int\phi(s)I_{\Theta_1}(\theta)d\nu_\theta(s)d\mu_1(\theta)).
	\label{eq:rmphi} \end{align}
	By Lemma \ref{lem:choquet}, for each $\nu_\theta$, there is a unique Borel probability measure $M_\theta$ on $\mathcal K(S)$ such that
	\begin{align}
		\nu_\theta(A)=M_\theta(K\subset A),~\forall A\subset S.
	\end{align}
	Let $q$ be a $\sigma$-finite measure on $\mathcal K(S)$ such that $M_\theta\ll q$. Let $M_{\kappa_1}$ be a Borel probability measure on $\mathcal K(S)$ such that $dM_{\kappa_1}/dq=\int_{\Theta_1}\frac{dM_\theta}{dq}d\mu_1(\theta)$. For any $A\subset S$, it then follows that
	\begin{align}
		\int_{\Theta_1}\nu_\theta(A)d\mu_1(\theta)&=\int_{\Theta_1}M_\theta(K\subset A)d\mu_1(\theta)\notag\\
		&= \int_{\Theta_1}\int_{\mathcal K(S)} 1\{K\subset A\}dM_\theta d\mu_1(\theta)\notag\\
		&=\int_{\mathcal K(S)} 1\{K\subset A\}\int_{\Theta_1}\frac{dM_\theta}{dq}d\mu_1(\theta)dq(K)\notag\\
		&=\int_{\mathcal K(S)}1\{K\subset A\}dM_{\kappa_1}(K)\notag\\
		&=\int_{S}1\{s\in A\}d\kappa_1(s)\notag\\
		&=\kappa_1(A),
	\end{align}
	where the third equality follows from Fubini's theorem. The existence and uniqueness of $\kappa_1$ follows again from Choquet's theorem (Lemma \ref{lem:choquet}). Note that by $\phi\ge 0$ and the definition of the Choquet integral, one can show by the same argument
	\begin{align}
		\int\int \phi(s)I_{\Theta_1}(\theta)d\nu_\theta(s)d\mu_1(\theta)&=\int_{\Theta_1}\int \inf_{s\in K}\phi(s) dM_{\theta}(K)d\mu_1(\theta)\notag\\
		&=\int \inf_{s\in K}\phi(s)\int_{\Theta_1}\frac{dM_\theta}{dq} d\mu_1(\theta)dq(K)\notag\\
		&=\int \phi(s)d\kappa_1(s),
	\label{eq:kappa1_rep} \end{align}
	where the second equality follows from Fubini's theorem. Similarly, it follows that
	\begin{align}
		\int\int \phi(s)I_{\Theta_0}(\theta)d\nu^*_\theta(s)d\mu_0(\theta)=\int\phi(s)d\kappa_0^*(s).
	\label{eq:kappa0_rep} \end{align}
	By \eqref{eq:rmphi}, \eqref{eq:kappa1_rep}, and \eqref{eq:kappa0_rep}, we have
	\begin{align}
		r(\mu,\phi)=\tau\int\phi(s)d\kappa_0^*(s)+\zeta(1-\tau)(1-\int\phi(s)d\kappa_1(s)).
	\end{align}
	Therefore, \eqref{eq:bdsrisk} holds. Minimizing the BDS risk is then equivalent to minimizing
	\begin{align}
		\tilde r_t(\mu,\phi)=t\int\phi(s)d\kappa_0^*(s)-\int\phi(s)d\kappa_1(s),
	\end{align}
	where $t=\tau/\zeta(1-\tau)>0$. Let $A\equiv\{s:\phi(s)>0\}.$ Minimizing the risk function above with respect to $\phi$ is then equivalent to minimizing the 2-alternating function $w_t(A)\equiv t\kappa_0^*(A)-\kappa_1(A)$ with respect to $A\subset S$. By Lemmas 3.1 and 3.2 in HS, for each $t\in[0,\infty]$, there exists a set $A_t\subset S$ such that
	\begin{align}
		w_t(A_t)=\inf_{A\subset S}w_t(A),
	\end{align}
	and $\{A_t,t\ge 0\}$ forms an increasing family of sets. Now define
	\begin{align}
		\Lambda(s)\equiv\inf\{t|s\in A_t\}.
	\end{align}
	By Theorem 4.1 in HS, the conclusion of the theorem then follows.
\end{proof}

\begin{proof}
	[\rm Proof of Theorem \ref{thm:minimax}] Note that $\mu_1$ is fixed throughout and $\mathcal M=\{\mu:\mu=\tau\mu_0+(1-\tau)\mu_1,\mu_0\in\Delta(\Theta_0),\tau\in[0,1]\}$. In what follows, we therefore redefine $R$ in \eqref{eq:riskfun} as
	\begin{align}
		R(\theta,\phi)&=\int \phi(s)d\nu^*_\theta(s)I_{\Theta_0}(\theta)+\zeta(1-\int\phi(s)d\kappa_1(s))I_{\Theta_1}(\theta)\notag\\
		&=R_0(\theta,\phi)I_{\Theta_0}(\theta)+R_1(\phi)I_{\Theta_1}(\theta),
	\label{eq:newrisk} \end{align}
	where $\kappa_1=\int_{\Theta_1}\nu_\theta d\mu_1(\theta).$

	First, we show $\sup_{\mu\in \mathcal{M}}\inf_{\phi\in\mathbf\Phi}\int R(\theta,\phi)d\mu\le \inf_{\phi\in\mathbf\Phi}\sup_{\theta\in\Theta}R(\theta,\phi)$. This follows because for any $(\theta,\phi)$, one has
	\begin{align*}
		\inf_{\phi'}R(\theta,\phi') \le R(\theta,\phi)\le \sup_{\theta'}R(\theta',\phi),
	\end{align*}
	and hence
	\begin{align*}
		\sup_{\theta}\inf_{\phi'}R(\theta,\phi') \le \inf_{\phi} \sup_{\theta'}R(\theta',\phi).
	\end{align*}
	Note that $\sup_{\theta}\inf_{\phi'}R(\theta,\phi')\ge \inf_{\phi\in\mathbf\Phi}\int R(\theta,\phi)d\mu$ for any $\mu$, and hence, the first claim follows.

	The other direction $\sup_{\mu\in \mathcal{M}}\inf_{\phi\in\mathbf\Phi}\int R(\theta,\phi)d\mu\ge \inf_{\phi\in\mathbf\Phi}\sup_{\theta\in\Theta}R(\theta,\phi)$ follows from Lemma \ref{lem:lecam}. To see this, let
	\begin{align}
		\beta\equiv\sup_{\mu\in \mathcal{M}}\inf_{\phi\in\mathbf\Phi} \int R(\theta,\phi)d\mu.
	\end{align}
	If $\beta=\infty$, the result is trivial. If $\beta<\infty$, set $f(\theta)=\beta$ for all $\theta\in\Theta.$ By construction, $f(\theta)\ge \inf_{\phi\in\mathbf\Phi} \int R(\theta,\phi)d\mu$ for every $\mu.$ By Lemma \ref{lem:lecam}, this is equivalent to the existence of $\phi^\dagger\in\mathbf\Phi$ such that
	\begin{align}
		\beta\ge R(\theta,\phi^\dagger),~\forall\theta\in\Theta.
	\end{align}
	This implies
	\begin{align}
		\beta \ge \sup_{\theta\in\Theta}R(\theta,\phi^\dagger) \ge \inf_{\phi\in\mathbf\Phi}\sup_{\theta\in\Theta}R(\theta,\phi).
	\end{align}
	Finally, observe that by \eqref{eq:newrisk},
	\begin{align}
		\sup_{\theta\in\Theta}R(\theta,\phi)=\sup_{\theta\in\Theta_0}R_0(\theta,\phi)\vee R_1(\phi).
	\end{align}
	This completes the proof.
\end{proof}

\section{Auxiliary Lemmas} 
\subsection{Proof of Lemma \ref{thm:neyman_pearson}}
\begin{proof}[\rm Proof of Lemma \ref{thm:neyman_pearson}] 

It is straightforward to show $\nu^*_\theta$ is a capacity satisfying conditions (i)-(iv) in Appendix \ref{sec:capacities}.

 Since $G(\cdot|\theta)$ is weakly measurable, the map $u\mapsto G(u|\theta)$ defines a measurable map from $U$ to $\mathcal K(S)$. Let $\tilde m$ be the induced measure of $m_\theta$ on $\mathcal K(S)$ by this map. Then, by Lemma \ref{lem:choquet}, $P\in \mathcal P_\theta$ is equivalent to $P\in core(\nu)$ for an infinitely monotone capacity $\nu$ such that $\nu(A)=\tilde m(K\subset A)$ for all $A\in \Sigma_\Omega.$ By Lemma 2.5 in HS,
	\begin{align}
		\nu(A)=\inf_{P\in\mathcal P_\theta}P(A)=\nu_\theta(A),~\text{ for all }A\in \Sigma_S,
	\end{align}
	and hence $\nu_\theta$ is infinitely monotone.

	By the previous step, $\nu_{\theta_1}^{*}$ and $\nu_{\theta_0}^{*}$ are also 2-alternating and 2-monotone capacities respectively. Let $\Lambda$ be the Radon-Nikodym derivative of $\nu_{\theta_1}^{*}$ and $\nu_{\theta_0}^{*}$ in the sense of HS (Section 3). Then, by their Theorem 4.1, the conclusion of the lemma follows. 
\end{proof}

\subsection{A Lemma for results in Section \ref{ssec:computation}}
Consider
	\begin{align}
		(q_0,q_1)=\text{argmin}_{(p_0,p_1)\in \Delta(S)^2}&~\sum_{s\in S}H\Big(\frac{p_0(s)}{p_0(s)+p_1(s)}\Big)(p_0(s)+p_1(s)) \label{eq:opt_ex}\\
		s.t. &~~~\nu_{\theta_0}(A)\le \sum_{s\in A}p_0(s),~A\subset S \notag\\
		& ~~~\nu_{\theta_1}(A)\le \sum_{s\in A}p_1(s),~A\subset S \notag.
	\end{align}
\begin{lemma}\label{lem:convex}
Let $H$ be a twice continuously function on $[0,1]$ such that $H''>0$. Then, (i) $(q_0,q_1)$ solving \eqref{eq:opt_ex} are the densities of LFP $(Q_0,Q_1)$; (ii) If, for each $s\in S$, $\nu_{\theta_j}(s)>0$ for either $j=0$ or $j=1$, the objective function in \eqref{eq:opt_ex} is convex.
\end{lemma}	

\begin{proof}[\rm Proof of Lemma \ref{lem:convex}]
(i) By Choquet's theorem, $P\in\mathcal P_\theta=core(\nu_\theta)$ is equivalent to $\nu_\theta(A)\le P(A)=\sum_{s\in A}p(s)$ for all $A\subset S$.  Then, the claim follows from Theorem 6.1 in HS.

\noindent
(ii) For any function $f:\mathbb R^d\to\bar{\mathbb R}$,  let $\text{dom}(f)\equiv\{x\in\mathbb R^d:f(x)<\infty\}$. Let $(x,t)\in\mathbb R\times\mathbb R_+$. Then, the perspective $g(x,t)= tH(x/t)$ is convex with $\text{dom}(g)=\{(x,t):x/t\in \text{dom}(H),t>0\}$ iff $H$ is convex \citep[][Sec. 3.2.6]{Boyd:2004jk}. 
Let $h(x,y)=(x,x+y)$ and note that this map is linear. Then, since $p_0(s)+p_1(s)>0$ for all $s$ by assumption, the objective function in \eqref{eq:opt_ex} can be expressed as
\begin{align}
\sum_{s\in S}H\Big(\frac{p_0(s)}{p_0(s)+p_1(s)}\Big)(p_0(s)+p_1(s))=	\sum_{s\in S}g\big(h(p_0(s),p_1(s))\big),
\end{align}
which is convex in $(p_0,p_1)$.
\end{proof}

\subsection{Lemmas for Theorems in Section \ref{sec:nuisance}}
Below, we identify each decision function (randomized test) $\phi$ with a Markov kernel $\phi:S\times \mathcal B_{\{0,1\}}\to [0,1]$ and let $\mathbf\Phi$ be the set of all decision functions. We then equip $\mathbf\Phi$ with the weak topology \citep[see][Definition 2.2]{Hausler:2015aa}. The following lemma is an analog of Lemma 46.1 in \cite{Strasser:1985aa} for the BDS risk. We state it as a lemma because the BDS risk $R$ is defined through Choquet integrals with respect to capacities (instead of measures) and hence Lemma 46.1 in \cite{Strasser:1985aa} is not directly applicable.\footnote{They also show their results to generalized decision functions, which we do not pursue here.}
\begin{lemma}\label{lem:lecam}
Suppose $S$ is finite and $\Theta$ is compact.	Let $R$ be defined as in \eqref{eq:riskfun}. For every $f:\Theta\to\mathbb R$, the following assertions are equivalent.

	(i) There exists $\phi\in\mathbf\Phi$ such that $f(\theta)\ge R(\theta,\phi)$ for every $\theta\in\Theta$.

	(ii) $\int fd\mu\ge \inf_{\phi\in \mathbf\Phi}\int R(\theta,\phi)d\mu(\theta)$ for every $\mu\in \Delta(\Theta).$
\end{lemma}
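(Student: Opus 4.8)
I would first dispose of (i) $\Rightarrow$ (ii), which is immediate: if $\phi\in\mathbf\Phi$ satisfies $f(\theta)\ge R(\theta,\phi)$ for every $\theta$, then for each $\mu\in\Delta(\Theta)$ one has $\int f\,d\mu\ge\int R(\theta,\phi)\,d\mu(\theta)\ge\inf_{\phi'\in\mathbf\Phi}\int R(\theta,\phi')\,d\mu(\theta)$, using only the measurability of $\theta\mapsto R(\theta,\phi)$ that is maintained throughout (and which is in any case needed for (ii) to make sense). The substance is the converse, which I plan to recast as a minimax identity.

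The first step is to set up the right structure. Since $S$ is finite and the action set is binary, a randomized test is pinned down by $(\phi(\{1\}\mid s))_{s\in S}\in[0,1]^{S}$, so I would identify $\mathbf\Phi$ with the compact convex polytope $[0,1]^{|S|}\subset\mathbb R^{|S|}$, on which the weak topology of \citet{Hausler:2015aa} is just the Euclidean one. Next I would record the regularity of the upper risk in its test argument: the first line of \eqref{eq:riskfun} exhibits $R(\theta,\phi)$ as the maximum over $P\in\mathcal P_\theta$ of an expression affine and continuous in $\phi$ and bounded by $1\vee\zeta$, so for each fixed $\theta$ the map $\phi\mapsto R(\theta,\phi)$ is a supremum of affine continuous functions --- hence convex and lower semicontinuous (indeed continuous and piecewise linear, being a support-function-type maximum of a linear functional over the polytope $\mathcal P_\theta=core(\nu_\theta)$). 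Integrating, for each $\mu\in\Delta(\Theta)$ the function $\phi\mapsto g(\phi,\mu):=\int_\Theta\big(R(\theta,\phi)-f(\theta)\big)\,d\mu(\theta)$ is convex and, by Fatou's lemma (using $R\ge0$), lower semicontinuous on the compact set $\mathbf\Phi$; for each fixed $\phi$ the map $\mu\mapsto g(\phi,\mu)$ is affine on the convex set $\Delta(\Theta)$.

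I would then invoke a minimax theorem requiring a topology on only one of the two sets --- for instance Kneser's theorem (see also \citealp{aliprantis2006infinite}): if $g(\cdot,\mu)$ is convex and lower semicontinuous on the compact convex $\mathbf\Phi$ and $g(\phi,\cdot)$ is concave on the convex $\Delta(\Theta)$, then $\min_{\phi\in\mathbf\Phi}\sup_{\mu\in\Delta(\Theta)}g(\phi,\mu)=\sup_{\mu\in\Delta(\Theta)}\min_{\phi\in\mathbf\Phi}g(\phi,\mu)$, with all infima over $\mathbf\Phi$ attained by lower semicontinuity on the compact set. The right-hand side equals $\sup_{\mu}\big(\inf_{\phi}\int R\,d\mu-\int f\,d\mu\big)$, which is $\le0$ precisely by hypothesis (ii). On the left-hand side, for fixed $\phi$ one has $\sup_{\mu\in\Delta(\Theta)}\int h\,d\mu=\sup_{\theta\in\Theta}h(\theta)$ for the bounded measurable $h(\theta)=R(\theta,\phi)-f(\theta)$ (the supremum is approached along Dirac measures $\delta_\theta\in\Delta(\Theta)$), so the identity becomes $\min_{\phi\in\mathbf\Phi}\sup_{\theta\in\Theta}\big(R(\theta,\phi)-f(\theta)\big)\le0$. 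Choosing $\phi^\dagger$ attaining the minimum yields $R(\theta,\phi^\dagger)\le f(\theta)$ for all $\theta$, i.e. (i).

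The main obstacle I anticipate is the minimax step itself: since $R(\theta,\phi)$ is discontinuous in $\theta$ (the indicators $I_{\Theta_0},I_{\Theta_1}$ in \eqref{eq:riskfun}), there is no useful topology to put on $\Delta(\Theta)$, so I must deploy a version of the minimax theorem that dispenses with any semicontinuity assumption on the measure argument and leans only on affineness there together with compactness and lower semicontinuity on the test side; checking its hypotheses carefully --- in particular that $\phi\mapsto g(\phi,\mu)$ is genuinely lower semicontinuous --- is where the care goes. By contrast, the supporting facts (convexity, lower semicontinuity, and boundedness of $R(\theta,\cdot)$ via the $\max_{P\in\mathcal P_\theta}$ representation already present in \eqref{eq:riskfun}, and measurability of $f$ and of $\theta\mapsto R(\theta,\phi)$) are routine given that $S$ is finite and $\mathcal P_\theta=core(\nu_\theta)$.
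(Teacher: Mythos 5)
Your proof is correct, but it takes a genuinely different route from the paper's. The paper proves the hard direction by embedding the problem in \citeposs{Strasser:1985aa} abstract decision-theoretic framework: it forms the risk set $M_2=\{R(\cdot,\phi):\phi\in\mathbf\Phi\}$, verifies that $M_2$ is subconvex (Lemma \ref{lem:subconvexity}, via the super/subadditivity of Choquet integrals for two-monotone and two-alternating capacities) and weakly compact (Lemma \ref{lem:compact}, via the weak topology on Markov kernels, Prohorov's theorem, Berge's maximum theorem, and the Riesz duality for $\mathcal C(\Theta)$), and then invokes Strasser's lower-envelope comparison theorem (his Theorem 45.6, itself a Hahn--Banach separation argument) to extract a test $\phi$ with $R(\cdot,\phi)\le f$. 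You instead exploit the finiteness of $S$ to identify $\mathbf\Phi$ with the polytope $[0,1]^{|S|}$, read off convexity and continuity of $\phi\mapsto R(\theta,\phi)$ directly from the $\max_{P\in\mathcal P_\theta}$ representation in \eqref{eq:riskfun} (this is exactly the content of the paper's subconvexity lemma, obtained by a different argument), and close the gap with a Kneser--Fan minimax theorem that needs no topology on the $\mu$-side. What your route buys is economy: the entire compactness lemma (Lemma \ref{lem:compact}) becomes a one-line observation about a finite-dimensional cube, and --- a genuine advantage --- you never need $\theta\mapsto R(\theta,\phi)$ to be continuous, whereas the paper's argument places $M_2$ inside $\mathcal C(\Theta)$ even though $R(\cdot,\phi)$ carries the indicators $I_{\Theta_0},I_{\Theta_1}$ and so is typically discontinuous across the boundary of $\Theta_0$; your approach sidesteps that delicacy entirely by putting all the topology on the test side. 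What the paper's route buys is that it sits inside a standard statistical-decision-theory template that would survive a relaxation of the finiteness of $S$ (where $\mathbf\Phi$ is no longer finite-dimensional), at the cost of the two auxiliary lemmas. Your one point of care --- invoking a minimax theorem that requires only affineness, not semicontinuity, in the measure argument --- is exactly the right one, and the Kneser--Fan statement you cite does deliver it.
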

\begin{proof}
	The implication $(i)\Rightarrow (ii)$ is obvious. We therefore prove the other implication. Consider the following sets of functions:
	\begin{align*}
		M_1&=\{f\}\\
		M_2&=\{h\in\mathcal C(\Theta):h(\cdot)=R(\cdot,\phi),\phi\in \mathbf\Phi\}.
	\end{align*}
	We mimic the proof of Lemma 46.1 in \cite{Strasser:1985aa} while replacing $M_2$ with the set above. For this, let $M\subseteq\mathcal C(\Theta)$ be an arbitrary set. For any $m\in \Delta(\Theta)$, the lower envelope of $M$ is defined as
	\begin{align}
		\psi_M(m)\equiv\inf\{\int fdm,f\in M\}.
	\end{align}
	Also define
	\begin{align}
		\alpha(M)\equiv \bigcup_{f\in M}\{g\in\mathcal C(\Theta):f\le g\}.
	\end{align}
	This is the set of continuous functions that dominate some function in $M$. Since $M_2$ is compact by Lemma \ref{lem:compact}, $\alpha(M_2)$ is closed and hence coincides with its closure $\overline{\alpha (M_2)}$  \citep[][Remark 45.4]{Strasser:1985aa}.

	By (ii), $\psi_{M_2}(m)\le \psi_{M_1}(m)$ for all $m\in \Delta(\Theta)$. By Lemma \ref{lem:subconvexity}, $M_2$ is subconvex. Then, by Theorem 45.6 in \cite{Strasser:1985aa}, for every $f\in M_1$, there is $g\in\alpha(M_2)=\overline{\alpha (M_2)}$ such that $g\le f$. By the construction of $\alpha(M_2)$, this means there exists $\phi\in \mathbf\Phi$ such that
	\begin{align}
		R(\cdot,\phi)\le g(\cdot)\le f(\cdot).
	\end{align} This completes the proof.
\end{proof}

Below, a set $M$ is said to be \emph{subconvex}, if for any $\alpha\in (0,1)$ and $h_1,h_2\in M,$ there exists $h_3\in M$ such that $h_3\le \alpha h_1+(1-\alpha)h_2$.
\begin{lemma}\label{lem:subconvexity}
	$M_2$ is subconvex.
\end{lemma}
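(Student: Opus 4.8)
\textbf{Proof plan for Lemma \ref{lem:subconvexity}.}

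The plan is to unwind the definition of subconvexity for $M_2=\{h\in\mathcal C(\Theta):h(\cdot)=R(\cdot,\phi),\phi\in\mathbf\Phi\}$ and exhibit, for any $\phi_1,\phi_2\in\mathbf\Phi$ and $\alpha\in(0,1)$, a test $\phi_3\in\mathbf\Phi$ with $R(\cdot,\phi_3)\le\alpha R(\cdot,\phi_1)+(1-\alpha)R(\cdot,\phi_2)$ pointwise on $\Theta$. The natural candidate is the mixed test $\phi_3\equiv\alpha\phi_1+(1-\alpha)\phi_2$, which lies in $\mathbf\Phi$ since the set of randomized tests is convex (a convex combination of Markov kernels into $[0,1]$ is again such a kernel). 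First I would recall from the proof of Lemma \ref{thm:bayes-dempster-shafer} that, by \eqref{eq:newrisk}, $R(\theta,\phi)=R_0(\theta,\phi)I_{\Theta_0}(\theta)+R_1(\phi)I_{\Theta_1}(\theta)$, where $R_0(\theta,\phi)=\int\phi(s)\,d\nu^*_\theta(s)$ and $R_1(\phi)=\zeta\big(1-\int\phi(s)\,d\kappa_1(s)\big)$, so it suffices to check the inequality separately on $\Theta_0$ and on $\Theta_1$.

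On $\Theta_1$ the map $\phi\mapsto R_1(\phi)$ is affine in $\phi$ because $\int(\alpha\phi_1+(1-\alpha)\phi_2)\,d\kappa_1=\alpha\int\phi_1\,d\kappa_1+(1-\alpha)\int\phi_2\,d\kappa_1$; here $\kappa_1$ is a belief function, but since $S$ is finite and $\phi_1,\phi_2$ are nonnegative, the Choquet integral against $\kappa_1$ (equivalently, $\int\min_{s\in K}\phi(s)\,dM_{\kappa_1}(K)$) need not be linear in $\phi$ in general — but it is for a scaling, and for a convex combination one uses $\min_{s\in K}(\alpha\phi_1(s)+(1-\alpha)\phi_2(s))\ge\alpha\min_{s\in K}\phi_1(s)+(1-\alpha)\min_{s\in K}\phi_2(s)$, which gives $R_1(\phi_3)\le\alpha R_1(\phi_1)+(1-\alpha)R_1(\phi_2)$ after integrating against $M_{\kappa_1}$ and multiplying by $-\zeta$ (note the sign flip works in our favour). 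Symmetrically, on $\Theta_0$, superadditivity of the max, $\max_{s\in K}(\alpha\phi_1(s)+(1-\alpha)\phi_2(s))\le\alpha\max_{s\in K}\phi_1(s)+(1-\alpha)\max_{s\in K}\phi_2(s)$, yields $R_0(\theta,\phi_3)\le\alpha R_0(\theta,\phi_1)+(1-\alpha)R_0(\theta,\phi_2)$ directly. Combining the two cases gives $R(\cdot,\phi_3)\le\alpha R(\cdot,\phi_1)+(1-\alpha)R(\cdot,\phi_2)$ on all of $\Theta$, i.e. subconvexity (in fact convexity of $M_2$, which is stronger than needed).

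The main obstacle, such as it is, is purely bookkeeping with the Choquet integral: one must be careful that the inequalities go in the correct direction for the two pieces of the loss (the Type-I piece is integrated against the upper probability $\nu^*_\theta$, the Type-II piece enters with a minus sign through $\kappa_1$), and that the representation $\int g\,d\nu^*=\int\max_{s\in K}g\,dM_\nu$, $\int g\,d\nu=\int\min_{s\in K}g\,dM_\nu$ from the opening of the Section \ref{sec:extensions} proofs is invoked correctly for nonnegative $g$. Once the max/min sub/superadditivity bounds are in place, integrating against the fixed measures $M_\theta$ (resp.\ $M_{\kappa_1}$) on $\mathcal K(S)$ and assembling via $I_{\Theta_0}+I_{\Theta_1}=1$ on $\Theta$ finishes the argument; continuity of $R(\cdot,\phi_3)$ in $\theta$ (needed for $\phi_3$'s risk to lie in $\mathcal C(\Theta)$, hence in $M_2$) is inherited exactly as for $\phi_1,\phi_2$.
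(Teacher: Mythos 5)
Your proposal is correct and follows essentially the same route as the paper's proof: take $\phi_3=\alpha\phi_1+(1-\alpha)\phi_2$ and use subadditivity of the Choquet integral against the $2$-alternating capacity $\nu^*_\theta$ (your max inequality) together with superadditivity against the $2$-monotone capacity (your min inequality), the latter's direction being reversed by the minus sign in the Type-II term. The only blemish is the momentary, later self-corrected claim that $R_1$ is affine in $\phi$; the argument as finally stated does not rely on it.
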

\begin{proof}
	Let $h_1,h_2\in M_2$. Then, there exist $\phi_1,\phi_2\in\mathbf\Phi$ such that $h_j(\cdot)=R(\cdot,\phi_j),j=1,2$. Therefore, for any $\alpha\in(0,1)$,
	\begin{align}
		\alpha h_1(\theta)+(1-\alpha)h_2(\theta) =&\alpha R(\theta,\phi_1)+(1-\alpha)R(\theta,\phi_2)\notag\\
		=&\zeta\big(\alpha\int \phi_1(s)d\nu_{\theta}^*(s)+(1-\alpha)\int \phi_2(s)d\nu_{\theta}^*(s)\big)I_{\Theta_0}(\theta)\notag\\
		&+\big(\alpha(1-\int\phi_1(s)d\nu_{\theta}(s))+(1-\alpha)(1-\int\phi_2(s)d\nu_{\theta}(s))\big)I_{\Theta_1}(\theta).
	\label{eq:subcvx1} \end{align}
	Note that $\nu_{\theta}^*$ is 2-alternating. Therefore, the Choquet integral with respect to $\nu_{\theta}^*$ is subadditive. The Choquet integral is also positively homogeneous. Therefore,
	\begin{align}
		\alpha\int\phi_1(s)d\nu_{\theta}^*(s)+(1-\alpha)\int \phi_2(s)d\nu_{\theta}^*(s)&=\int \alpha\phi_1(s)d\nu_{\theta}^*(s)+\int (1-\alpha)\phi_2(s)d\nu_{\theta}^*(s)\notag\\
		&\ge \int \alpha\phi_1(s)+(1-\alpha)\phi_2(s)d\nu_{\theta}^*(s).
	\label{eq:subcvx2} \end{align}
	Similarly, by the 2-monotonicity of $\nu_{\theta}$, the Choquet integral with respect to it is superadditive and positively homogeneous. Therefore,
	\begin{align}
		\alpha(1-\int\phi_1(s)d\nu_{\theta}(s))+(1-\alpha)(1-\int\phi_2(s)d\nu_{\theta}(s))\ge 1-\int \alpha\phi_1(s)+(1-\alpha)\phi_2(s)d\nu_{\theta}(s).
	\label{eq:subcvx3} \end{align}
	Combining \eqref{eq:subcvx1}-\eqref{eq:subcvx3}, we obtain
	\begin{align}
		\alpha h_1(\cdot)+(1-\alpha)h_2(\cdot)\ge h_3(\cdot),
	\label{eq:subcvx4} \end{align}
	where $h_3(\cdot)=R(\cdot, \phi_3)$ with $\phi_3=\alpha\phi_1+(1-\alpha)\phi_2$. Hence, $h_3\in M_2$. Conclude that $M_2$ is subconvex.
\end{proof}
\begin{lemma}\label{lem:compact}
	Suppose that $S$ is finite and $\Theta$ is compact. Then, $M_2$ is weakly compact.
\end{lemma}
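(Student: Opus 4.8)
The plan is to show that $M_2 = \{h \in \mathcal C(\Theta) : h(\cdot) = R(\cdot,\phi), \phi \in \mathbf\Phi\}$ is weakly compact by exploiting the weak compactness of $\mathbf\Phi$ and the continuity of the map $\phi \mapsto R(\cdot,\phi)$. First I would recall that since $S$ is finite, $\mathbf\Phi$ — identified with the set of Markov kernels $\phi: S \times \mathcal B_{\{0,1\}} \to [0,1]$, i.e. effectively the set of functions $\phi: S \to [0,1]$ — is a closed bounded subset of the finite-dimensional space $[0,1]^{|S|}$, and hence is compact in the weak topology (this is the standard compactness of the space of randomized decision rules; see e.g. Strasser 1985, or simply Tychonoff on $[0,1]^{|S|}$). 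So the domain is compact.

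Next I would argue that the map $T: \mathbf\Phi \to \mathcal C(\Theta)$ defined by $T(\phi) = R(\cdot,\phi)$ is continuous when $\mathbf\Phi$ carries the weak topology and $\mathcal C(\Theta)$ carries the sup-norm (uniform) topology. Writing $R(\theta,\phi) = \big(\int \phi(s)\,d\nu_\theta^*(s)\big) I_{\Theta_0}(\theta) + \zeta\big(1 - \int \phi(s)\,d\kappa_1(s)\big) I_{\Theta_1}(\theta)$ as in \eqref{eq:newrisk}, each integral is a Choquet integral against a fixed capacity on the finite set $S$; since $S$ is finite, the Choquet integral $\phi \mapsto \int \phi\,d\nu_\theta^*$ is a continuous (indeed piecewise-linear, positively homogeneous, monotone) functional of the finitely many values $\{\phi(s)\}_{s\in S}$, and it is uniformly continuous in $\phi$ with a modulus not depending on $\theta$ (it is $1$-Lipschitz in the sup-norm of $\phi$ because $\nu_\theta^*(\Omega)=1$). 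Hence $\phi \mapsto R(\cdot,\phi)$ is Lipschitz from $(\mathbf\Phi, \text{weak})$ into $(\mathcal C(\Theta), \|\cdot\|_\infty)$; in particular it is continuous. I should also note $R(\cdot,\phi)$ does lie in $\mathcal C(\Theta)$: this follows from the continuity assumptions underlying the belief functions $\theta \mapsto \nu_\theta$ (the map $\theta \mapsto M_\theta$ and hence $\theta \mapsto \nu_\theta(A)$ behaves continuously) together with the fact that on finite $S$ the Choquet integral is a continuous function of those finitely many capacity values; one may invoke that $\Theta_0, \Theta_1$ partition $\Theta$ in the problem at hand, or more carefully that $R(\cdot,\phi)$ is continuous on each piece and the relevant dominated convergence holds — this is the same regularity already used implicitly elsewhere in the paper.

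Finally, $M_2 = T(\mathbf\Phi)$ is the continuous image of a compact set, hence compact in $\mathcal C(\Theta)$, which is exactly weak compactness of $M_2$. The main obstacle I anticipate is the continuity/membership claim $R(\cdot,\phi) \in \mathcal C(\Theta)$ and the uniform (in $\theta$) continuity of $\phi \mapsto R(\theta,\phi)$ — i.e. verifying that the Choquet integrals depend jointly continuously on $\theta$ and $\phi$; everything else (compactness of $\mathbf\Phi$, continuous image of compact is compact) is routine. This joint continuity reduces, because $S$ is finite, to the continuity of the finitely many maps $\theta \mapsto \nu_\theta(A)$, $A \subset S$, which holds by Choquet's theorem (Lemma \ref{lem:choquet}) and the regularity of the underlying model, combined with the elementary observation that the Choquet integral over a finite space is a continuous function of the capacity's values and $1$-Lipschitz in $\phi$. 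So the proof is: (1) $\mathbf\Phi$ compact; (2) $\phi \mapsto R(\cdot,\phi)$ continuous into $(\mathcal C(\Theta),\|\cdot\|_\infty)$; (3) conclude $M_2$ compact.
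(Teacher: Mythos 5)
Your proposal is correct and reaches the same conclusion by the same overall skeleton (compactness of $\mathbf\Phi$, continuity of $\phi\mapsto R(\cdot,\phi)$, continuous image of a compact set), but the continuity step is argued by a genuinely different and in fact stronger route than the paper's. The paper proves continuity of $g(\phi)=R(\cdot,\phi)$ only into the \emph{weak} topology on $\mathcal C(\Theta)$: it passes from weak convergence of $\phi_n$ to pointwise convergence on the finite set $S$, invokes Berge's maximum theorem on the random-set representation $\int\max_{s\in K}\phi_n(s)\,dM_\theta(K)$ to get convergence of the Choquet integrals for each fixed $\theta$, and then applies dominated convergence against an arbitrary finite Borel measure on $\Theta$, identifying the dual of $\mathcal C(\Theta)$ via Corollary 14.15 of Aliprantis--Border. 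You instead observe that the Choquet integral against a normalized capacity is $1$-Lipschitz in $\|\phi\|_\infty$ (by monotonicity and comonotonic additivity with constants), uniformly in $\theta$, so that $\phi\mapsto R(\cdot,\phi)$ is Lipschitz into $(\mathcal C(\Theta),\|\cdot\|_\infty)$; this yields norm compactness of $M_2$, which implies the weak compactness asserted in the lemma, and it avoids Berge, Prohorov, and the duality argument entirely. Your compactness of $\mathbf\Phi$ via $[0,1]^{|S|}$ and Tychonoff is likewise a legitimate elementary substitute for the paper's appeal to H\"ausler--Luschgy and tightness, and is valid because on finite $S$ the weak topology on Markov kernels coincides with pointwise (hence uniform) convergence of the rejection probabilities. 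The one soft spot you correctly flag --- whether $\theta\mapsto R(\theta,\phi)$ actually lies in $\mathcal C(\Theta)$ given the indicators $I_{\Theta_0}$, $I_{\Theta_1}$ and the $\theta$-dependence of $\nu^*_\theta$ --- is handled no more carefully in the paper's own proof, so it is not a defect of your argument relative to the original; both proofs implicitly rely on the same regularity of $\theta\mapsto\nu_\theta$ and on the definition of $M_2$ as a subset of $\mathcal C(\Theta)$.
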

\begin{proof}
	Equip $M_2$ with the weak topology. Let $\varrho$ be the counting measure.
	We let $\mathbf\Phi=K^1(\varrho)$ denote the set of Markov kernels equipped with the weak topology. It is the coarsest topology that makes any functional of the following form continuous:
	\begin{align}
		L(\phi)= \int_S\int_{\{0,1\}}h(a)\phi(s,da)f(s)d\varrho(s),~f\in L^1_\varrho(S),h\in\mathcal C_b(\{0,1\}).
	\end{align}
	By Theorem 2.7 in \cite{Hausler:2015aa}, $\mathbf\Phi$ is compact if the set $\varrho\mathbf\Phi\equiv\{ \varrho\phi:\varrho\phi(\cdot)=\int_S \phi(s,\cdot)d\varrho(s),\phi\in \mathbf\Phi\}$ is relatively compact in $\Delta(\{0,1\}).$ Note that $\{0,1\}$ is compact. Hence, $\varrho\mathbf\Phi$ is uniformly tight, which implies that $\varrho\mathbf\Phi$ is relatively compact by Prohorov's theorem \citep[][Problem 1.12.1]{vandervaart:wellner:1996}. This ensures the compactness of $\mathbf\Phi$.

	Below, let $\mathcal K(S)$ the set of all nonempty (and necessarily closed) subsets of $S$. By Choquet's theorem, a belief function $\nu_\theta$ can be expressed by its canonical representation $(\mathcal K(S),K,\hat m_\theta)$, where $K$ is a random set following a probability measure $\hat m_\theta$ on $\mathcal K(S)$ such that $\nu_\theta(A)=\hat m_\theta(K\subset A)$ for all $A\in\mathcal K(S)$. Below, we adopt this canonical representation and also denote the measure on $\mathcal K(S)$ by $m_\theta$ rather than $\hat m_\theta.$ We also note that we write $\phi(s)=\int_{\{0,1\}}\phi(s,da)$ in what follows.

	Define $g:\mathbf\Phi\to\mathcal C(\Theta)$ pointwise by $\phi\mapsto R(\cdot,\phi)$. We argue that this map is continuous, where we equip $\mathbf\Phi$ and $\mathcal C(\Theta)$ with weak topologies. Let $\phi_n,n=1,2,\cdots$ be a sequence such that $\phi_n\to\phi\in \mathbf\Phi$ weakly.
\begin{align}
		\int_S\int_{\{0,1\}}\phi_n(s,da)d\nu^*_\theta(s)= \int_S \phi_n(s)d\nu^*_\theta(s)=\int_S \max_{s\in K}\phi_n(s)dm_\theta(K)
	\end{align}
	Since $S$ is finite and $1\{s=s'\}\in L^1_\varrho(S)$ for any $s'\in S$, $\phi_n$ converging weakly implies
	\begin{align}
		\phi_n(s')=\sum_{s\in S}\phi_n(s)1\{s=s'\}\to \sum_{s\in S}\phi(s)1\{s=s'\}=\phi(s'),~\forall s'\in S.
	\end{align}
	Therefore, $\phi_n$ converges pointwise to $\phi$.

	Fix $K\in\mathcal K(S)$. Note that $(s,n)\mapsto \phi_n(s)$ is continuous with respect to the discrete topology. By Berge's maximum theorem, $ \max_{s\in K}\phi_n(s)\to \max_{s\in K}\phi(s)$.
	Hence,
	\begin{align}
		\lim_{n\to\infty}\int_S\int_{\{0,1\}}\phi_n(s,da)d\nu^*_\theta(s)&=\lim_{n\to\infty}\int_S \max_{s\in K}\phi_n(s)dm_\theta(K)\notag\\
		&=\int_S\lim_{n\to\infty}\max_{s\in K}\phi_n(s)dm_\theta(K)\notag\\
		&=\int_S\max_{s\in K}\phi(s)dm_\theta(K)\notag\\
		&=\int_S\phi (s)d\nu^*_\theta(s)\notag\\
		&=\int_S\int_{\{0,1\}}\phi (s,a)d\nu^*_\theta(s).
	\end{align}
	where the second equality follows from the convergence of $\max_{s\in K}\phi_n(s)$ and the dominated convergence theorem.

	Consider any finite Borel measure $\mu$ on $\Theta$. The result above and dominated convergence theorem imply
	\begin{align}
		\lim_{n\to\infty}\int_\Theta \int_S\int_{\{0,1\}}\phi_n(s,da)d\nu^*_\theta(s)I_{\Theta_0}(\theta)d\mu(\theta)&=\int_\Theta\lim_{n\to\infty}\int_S\int_{\{0,1\}}\phi_n(s,da)d\nu^*_\theta(s)I_{\Theta_0}(\theta)d\mu(\theta)\\
		&=\int_\Theta\int_S\int_{\{0,1\}}\phi(s,da)d\nu^*_\theta I_{\Theta_0}(\theta)d\mu(\theta).
	\end{align}
	By a similar argument, one can also show
	\begin{align}
		\lim_{n\to\infty}\int_\Theta \int_S\int_{\{0,1\}}\phi_n(s,da)d\nu_\theta(s)I_{\Theta_1}(\theta)d\mu(\theta)&=\int_\Theta\lim_{n\to\infty}\int_S\int_{\{0,1\}}\phi_n(s,da)d\nu_\theta(s)I_{\Theta_1}(\theta)d\mu(\theta)\\
		&=\int_\Theta\int_S\int_{\{0,1\}}\phi(s,da)d\nu_\theta I_{\Theta_1}(\theta)d\mu(\theta).
	\end{align}
	Note that $\Theta$ is a compact set in a metric space. Corollary 14.15 in \cite{aliprantis2006infinite} then ensures that the dual space of $\mathcal C(\Theta)$ is the set of finite Borel measures on $\Theta$. Combining these results and noting that $R(\theta,\phi)=\zeta\int_S\int_{\{0,1\}}\phi(s,da)d\nu^*_\theta I_{\Theta_0}(\theta)+(1-\int_S\int_{\{0,1\}}\phi(s,da)d\nu_\theta)I_{\Theta_1}(\theta)$, it follows that $R(\cdot,\phi_n)\to R(\cdot,\phi)$ in $\mathcal C(\Theta)$ with respect to the weak topology. This establishes that $g$ is continuous. Hence, $M_2$ is the continuous image of a compact set. Conclude that $M_2$ is compact.
\end{proof}

\section{Details on Examples 1 and 2}\label{sec:implementation}

\subsection{Example 1: Binary response game}
In this section, we provide details on Example \ref{ex:game} including the computation of the belief function, least favorable pair, and minimax test. Recall that $S=\{(0,0),(1,1),(1,0),(0,1)\}$. There exist 14 subsets to be considered (without considering the empty set and $S$). One can then compute the lower and upper bounds of the probability of each event by mimicking the calculation in \eqref{eq:belcalc}. 
The results are summarized in Table \ref{Table:Probbound}.
\begin{table}[ht]
	\begin{center}
		\caption{The lower and upper probability bounds}\label{Table:Probbound}
		\begin{tabular}
			{lll} \hline \hline $A$ & $\nu_{\theta}(A)=\min P(A)$ & $\nu_{\theta}^{*}=\max P(A)$\\
			\hline $A_1= \{ (0,0)\}$ & $ \frac{1}{4}$ & $ \frac{1}{4} $ \\

			$A_2= \{ (1,1) \} $ & $\Phi(\theta^{(1)}) \Phi(\theta^{(2)})$ & $\Phi(\theta^{(1)}) \Phi(\theta^{(2)})$ \\

			$A_3=\{(0,1)\}$ & $\frac{1}{4}-\Phi(\theta^{(1)}) \Phi(\theta^{(2)})+\frac{\Phi(\theta^{(2)})}{2}$ & $\frac{1}{2}(1-\Phi(\theta^{(1)}))$ \\

			$A_4=\{(1,0)\}$ & $\frac{1}{4}-\Phi(\theta^{(1)}) \Phi(\theta^{(2)})+\frac{\Phi(\theta^{(1)})}{2}$ & $\frac{1}{2}(1-\Phi(\theta^{(2)}))$ \\

			$A_5=\{(0,0),(1,1)\}$ & $\frac{1}{4}+\Phi(\theta^{(1)}) \Phi(\theta^{(2)})$ & $\frac{1}{4}+\Phi(\theta^{(1)}) \Phi(\theta^{(2)})$ \\

			$A_6=\{(0,0),(0,1)\}$ & $\frac{1}{2}-\Phi(\theta^{(1)}) \Phi(\theta^{(2)})+\frac{\Phi(\theta^{(2)})}{2}$ & $\frac{3}{4}-\frac{\Phi(\theta^{(1)})}{2}$ \\

			$A_7=\{(0,0),(1,0)\}$ & $\frac{1}{2}-\Phi(\theta^{(1)}) \Phi(\theta^{(2)})+\frac{\Phi(\theta^{(1)})}{2}$ & $\frac{3}{4}-\frac{\Phi(\theta^{(2)})}{2}$\\

			$A_8=A_6^{c}=\{(1,1),(1,0)\}$ & $\frac{1}{4}+\frac{\Phi(\theta^{(1)})}{2}$ & $\frac{1}{2}(1-\Phi(\theta^{(2)}))+ \Phi(\theta^{(1)}) \Phi(\theta^{(2)})$ \\

			$A_9=A_7^{c}=\{(1,1),(0,1)\}$ & $\frac{1}{4}+\frac{\Phi(\theta^{(2)})}{2}$ & $\frac{1}{2}(1-\Phi(\theta^{(1)}))+ \Phi(\theta^{(1)}) \Phi(\theta^{(2)})$\\

			$A_{10}= A_5^{c}=\{(1,0),(0,1)\}$ & $\frac{3}{4}-\Phi(\theta^{(1)}) \Phi(\theta^{(2)})$ & $\frac{3}{4}-\Phi(\theta^{(1)}) \Phi(\theta^{(2)})$\\

			$A_{11}=A_1^{c}=\{(1,1),(1,0),(0,1)\}$ & $\frac{3}{4}$ & $\frac{3}{4}$ \\

			$A_{12}=A_2^{c}=\{(0,0),(1,0),(0,1)\}$ & $1-\Phi(\theta^{(1)}) \Phi(\theta^{(2)})$ & $1-\Phi(\theta^{(1)}) \Phi(\theta^{(2)})$ \\

			$A_{13}=A_3^{c}=\{(0,0),(1,1),(1,0)\}$ & $\frac{1}{2}+\frac{\Phi(\theta^{(1)})}{2}$ & $\frac{3}{4}-\frac{\Phi(\theta^{(2)})}{2}+ \Phi(\theta^{(1)}) \Phi(\theta^{(2)})$ \\

			$A_{14}=A_4^{c}=\{(0,0),(1,1),(0,1)\}$ & $\frac{1}{2}+\frac{\Phi(\theta^{(2)})}{2}$ & $\frac{3}{4}-\frac{\Phi(\theta^{(1)})}{2}+\Phi(\theta^{(1)}) \Phi(\theta^{(2)})$ \\
			\ $S=\{(0,0),(1,1),(1,0),(0,1)\}$ & 1 & 1\\
			\hline\hline
		\end{tabular}
	\end{center}
\end{table}

\subsubsection{LFP}
The following proposition characterizes the LFP and minimax optimal test for testing $H_0:\theta=(0,0)$ against $H_1:\theta=\theta_1$, where $\theta_1<0$. 
\begin{proposition}\label{prop:entry_game}
	Let $(S,U,\Theta,G)$ be as defined in Example \ref{ex:game}. Suppose that $u$ follows the bivariate standard normal distribution. Let $\alpha\in (0,1/4)$, $\theta_0=(0,0)'$, and $\theta_1<0$. 
Then, across the three subcases below, the density of $Q_0$ is  $(q_0(0,0),q_0(1,1),q_0(1,0),q_0(0,1))=(\frac{1}{4},\frac{1}{4}, \frac{1}{4}, \frac{1}{4})$.

	\noindent Case I: If $\Phi(\theta_1^{(1)})(1-\Phi(\theta_1^{(2)})) \le \frac{1}{4}$ and $\Phi(\theta_1^{(2)})(1-\Phi(\theta_1^{(1)})) \le \frac{1}{4}$, the density of $Q_1$ is
	\begin{multline}
(q_1(0,0),q_1(1,1),q_1(1,0),q_1(0,1))\\
=\Big(\frac{1}{4},\Phi(\theta_1^{(1)})\Phi(\theta_1^{(2)}), \frac{3-4\Phi(\theta_1^{(1)})\Phi(\theta_1^{(2)})}{8},\frac{3-4\Phi(\theta_1^{(1)})\Phi(\theta_1^{(2)})}{8}\Big),
\end{multline}
and the minimax test is
\begin{align}
	\phi(s)=
		\begin{cases}
			2\alpha & s=(1,0) \text{ or }(0,1)\\
			0& \text{otherwise};
		\end{cases}
		\label{eq:game_minmax1} \end{align}

	\smallskip
	\noindent Case II: If $\Phi(\theta_1^{(1)})(1-\Phi(\theta_1^{(2)})) > \frac{1}{4}$, the density of $Q_1$ is
\begin{multline}
	(q_1(0,0),q_1(1,1),q_1(1,0),q_1(0,1))\\
	=\Big(\frac{1}{4},\Phi(\theta_1^{(1)})\Phi(\theta_1^{(2)}),\frac{1}{4}-\Phi(\theta_1^{(1)}) (\Phi(\theta_1^{(2)})-\frac{1}{2}),\frac{1}{2}(1-\Phi(\theta_1^{(1)}))\Big),
\end{multline}
and the minimax test is	
\begin{align}	
\phi(s)=
		\begin{cases}
			4\alpha & s=(1,0)\\
			0& \text{otherwise};
		\end{cases}
		\label{eq:game_minmax2} \end{align}

	\smallskip
	\noindent Case III: If $\Phi(\theta_1^{(2)})(1-\Phi(\theta_1^{(1)})) > \frac{1}{4}$, the density of $Q_1$ is
\begin{multline}
	(q_1(0,0),q_1(1,1),q_1(1,0),q_1(0,1))\\
	=\Big(\frac{1}{4},\Phi(\theta_1^{(1)})\Phi(\theta_1^{(2)}), \frac{1}{2}(1-\Phi(\theta_1^{(2)})),\frac{1}{4}-\Phi(\theta_1^{(2)}) (\Phi(\theta_1^{(1)})-\frac{1}{2})\Big),
\end{multline}
and the minimax test is	
\begin{align}
\phi(s)=
		\begin{cases}
			4\alpha & s=(0,1)\\
			0& \text{otherwise}.
		\end{cases}
		\label{eq:game_minmax3} \end{align}
\end{proposition}

\begin{proof}
	[\rm Proof of Proposition \ref{prop:entry_game}] First, observe that the upper and lower probabilities of $A_1,\cdots, A_4$ fully characterize the constraints in the convex program. To see this, observe that, for example,
	\begin{multline}
		\nu_\theta(A_5)=m_\theta(G(u|\theta)\subset \{(0,0),(1,1)\})\\
		=m_\theta(G(u|\theta)= \{(0,0)\})+m_\theta(G(u|\theta)= \{(1,1)\})=\nu_\theta(A_1)+\nu_\theta(A_2)=\frac{1}{4}+\Phi(\theta^{(1)}) \Phi(\theta^{(2)}),
	\end{multline}
	where we note that the additivity of $\nu_\theta$ for this event is due to the form of the correspondence in \eqref{eq:defG} and does not hold in general. One can compute $\nu_\theta(A_6)$ and $\nu_\theta(A_7)$ similarly. The upper bounds $\nu^*_\theta(A_j)$ for $j=8,\cdots,14$ can then be computed using the conjugacy of $\nu_\theta$ and $\nu^*_\theta.$ Similarly, the upper bounds $\nu^*_\theta(A_j)$ for $j=1,\cdots,7$ imply the lower bounds $\nu_\theta(A_j)$ for $j=8,\cdots,14$.

	In sum, it suffices to impose the constraints that arise from the upper and lower probabilities of $A_1,\cdots,A_4$. Further, $q_0(0,0)=q_1(0,0)=1/4$ regardless of the parameter value. This allows to simplify the convex program as
	\begin{align}
		\min_{(q_0,q_1)}-\ln&(\frac{q_0(1,1)}{q_0(1,1)+q_1(1,1)})(q_0(1,1)+q_1(1,1)) -\ln(\frac{q_0(1,0)}{q_0(1,0)+q_1(1,0)})(q_0(1,0)+q_1(1,0))\notag\\
		&\qquad-\ln(\frac{q_0(0,1)}{q_0(0,1)+q_1(0,1)})(q_0(0,1)+q_1(0,1))\notag\\
		\text{s.t.}~ & \frac{1}{4}-\Phi(\theta_1^{ (1)}) \Phi(\theta_1^{(2)})+\frac{\Phi(\theta_1^{(2)})}{2} \leq q_1(0,1) \leq \frac{1}{2}(1-\Phi(\theta_1^{(1)}))\label{eq:ex1ineq1}\\
		& \frac{1}{4}-\Phi(\theta_1^{(1)}) \Phi(\theta_1^{(2)})+\frac{\Phi(\theta_1^{(1)})}{2} \leq q_1(1,0) \leq \frac{1}{2}(1-\Phi(\theta_1^{(2)}))\label{eq:ex1ineq2}\\
		& q_1(1,1)=\Phi(\theta_1^{(1)}) \Phi(\theta_1^{(2)})\label{eq:ex1const1}\\
		& q_1(1,1)+q_1(1,0)+q_1(0,1)=\frac{3}{4}\label{eq:ex1const2}\\
		& q_0(1,1)=q_0(1,0)=q_0(0,1)=\frac{1}{4}.
	\label{eq:ex1const3} \end{align}
	Note that \eqref{eq:ex1const1}-\eqref{eq:ex1const3} imply that the values of $q_0(1,1),q_0(1,0),q_0(0,1)$, and $q_1(1,1)$ are determined uniquely. Hence, it remains to optimize the problem with respect to $q_1(1,0)$ and $q_1(0,1)$. For this, let $y=q_1(1,0)$. Then, one may write $q_1(0,1)=3/4-\Phi(\theta_1^{(1)}) \Phi(\theta_1^{(2)})-y$ due to \eqref{eq:ex1const2} and \eqref{eq:ex1const3}. Hence, the problem reduces to an optimization problem with a single control variable. Using this, define the Lagrangian by
	\begin{multline}
		\mathcal L(y,\lambda)\equiv-\ln \bigg(\frac{1/4}{1/4+y}\bigg)(\frac{1}{4}+y)-\ln \bigg(\frac{1/4}{\frac{1}{4}+\frac{3}{4}-\Phi(\theta_1^{(1)})\Phi(\theta_1^{(2)})-y}\bigg)(\frac{1}{4}+\frac{3}{4}-\Phi(\theta_1^{(1)}) \Phi(\theta_1^{(2)})-y)\\- \lambda_1 \Big(\frac{1}{2}-\frac{\Phi(\theta_1^{(2)})}{2}-y\Big)- \lambda_2 \Big(y-\frac{1}{4}-\frac{\Phi(\theta_1^{(1)})}{2}+\Phi(\theta_1^{(1)}) \Phi(\theta_1^{(2)})\Big).
	\end{multline}
	By Theorem 28.3 in \cite{Rockafellar:1970aa}, the saddle point of the Lagrangian characterizes the optimal solution of the original problem. The Karush-Kuhn-Tucker (KKT) conditions are as follows:
	\begin{align}
		&1-\ln\bigg(\frac{1/4}{1/4+y} \bigg) -1+\ln \bigg(\frac{1/4}{1-\Phi(\theta_1^{(1)})\Phi(\theta_1^{(2)})-y} \bigg)+\lambda_1-\lambda_2=0 \label{eq:kkt1}\\
		&\lambda_1 \Big(\frac{1}{2}-\frac{\Phi(\theta_1^{(2)})}{2}-y\Big)=0\label{eq:kkt2}\\
		&\lambda_2 \Big(y-\frac{1}{4}-\frac{\Phi(\theta_1^{(1)})}{2}+\Phi(\theta_1^{(1)}) \Phi(\theta_1^{(2)})\Big)=0\label{eq:kkt3}\\
		&\lambda_1,\lambda_2\ge 0.
	\label{eq:kkt4} \end{align}
	Below, we consider three cases depending on the values of the Lagrange multipliers.

	\noindent Case 1 ($\lambda_1=\lambda_2=0$): Suppose that $\lambda_1=0$ and $\lambda_2=0$.  Then, the solution from \eqref{eq:kkt1} is
	\begin{equation}
		y=\frac{3}{8}-\frac{\Phi(\theta_1^{(1)}) \Phi(\theta_1^{(2)})}{2}.\label{eq:y_q1}
	\end{equation}
	Substituting this into the complementary slackness conditions \eqref{eq:kkt2} and \eqref{eq:kkt3} yields
	\begin{equation}
		\frac{1}{2}-\frac{\Phi(\theta_1^{(2)})}{2}-\frac{3}{8}+\frac{\Phi(\theta_1^{(1)}) \Phi(\theta_1^{(2)})}{2} \ge0,
	\end{equation}
	and
	\begin{equation}
		\frac{3}{8}-\frac{\Phi(\theta_1^{(1)}) \Phi(\theta_1^{(2)})}{2}-\frac{1}{4}-\frac{\Phi(\theta_1^{(1)})}{2}+\Phi(\theta_1^{(1)})\Phi(\theta_1^{(2)})\ge 0,
	\end{equation}
which can be simplified as
	\begin{equation}\label{eq:unicondition11}
		\Phi(\theta_1^{(2)})(1-\Phi(\theta_1^{(1)})) \le \frac{1}{4},\qquad \Phi(\theta_1^{(1)})(1-\Phi(\theta_1^{(2)})) \le  \frac{1}{4}.
	\end{equation}
By $y=q_1(1,0)$, $q_1(0,1)=3/4-\Phi(\theta_1^{(1)}) \Phi(\theta_1^{(2)})-y$, and \eqref{eq:y_q1}, the least favorable pair is
	\begin{align}
		(q_0(0,0),q_0(1,1),q_0(1,0),q_0(0,1))&=\Big(\frac{1}{4},\frac{1}{4}, \frac{1}{4}, \frac{1}{4}\Big)\label{eq:sol1}\\
		(q_1(0,0),q_1(1,1),q_1(1,0),q_1(0,1))&=\Big(\frac{1}{4},\Phi(\theta_1^{(1)})\Phi(\theta_1^{(2)}),\label{eq:sol2} \frac{3-4\Phi(\theta_1^{(1)})\Phi(\theta_1^{(2)})}{8},\frac{3-4\Phi(\theta_1^{(1)})\Phi(\theta_1^{(2)})}{8}\Big).
	\end{align}
	Hence, the likelihood-ratio statistic is given by
	\begin{align}
		\Lambda(s)&=
		\begin{cases}
			1& s=(0,0)\\
			4\Phi(\theta_1^{(1)})\Phi(\theta_1^{(2)})& s=(1,1)\\
			\frac{3-4\Phi(\theta_1^{(1)})\Phi(\theta_1^{(2)})}{2}&s=(1,0)\\
			\frac{3-4\Phi(\theta_1^{(1)})\Phi(\theta_1^{(2)})}{2}&s=(0,1).
		\end{cases}
	\end{align}
	Hence, under $Q_0$, $\pi(s)$ is a discrete random variable supported on $\{4\Phi(\theta_1^{(1)})\Phi(\theta_1^{(2)}),1,\frac{3-4\Phi(\theta_1^{(1)})\Phi(\theta_1^{(2)})}{2}\}$ with probability mass function $(1/4,1/4,1/2)$. For $\theta^{(j)}<0,j=1,2$, one has
	\begin{align}
		4\Phi(\theta_1^{(1)})\Phi(\theta_1^{(2)})<1<\frac{3-4\Phi(\theta_1^{(1)})\Phi(\theta_1^{(2)})}{2}.
	\end{align}
	The largest value of the support is therefore $\frac{3-4\Phi(\theta_1^{(1)})\Phi(\theta_1^{(2)})}{2}$. Setting $C$ to a higher than this value yields $Q_0(\pi(s)>C)=0$, which violates the level-$\alpha$ requirement. Therefore, set $C=\frac{3-4\Phi(\theta_1^{(1)})\Phi(\theta_1^{(2)})}{2}$. Solving
	\begin{align}
		\alpha=E_{Q_0}[\phi(s)]=\gamma Q_0(\pi(s)\ge C)=\gamma Q_0(s=(1,0) \cup s=(0,1))=\frac{\gamma}{2},
	\end{align}
	one obtains $\gamma=2\alpha.$ This gives the minimax test in \eqref{eq:game_minmax1}.

	\noindent Case 2 ($\lambda_1=0$ and $\lambda_2>0$): Suppose that $\lambda_1=0$ and $\lambda_2>0$. By the complementary slackness condition \eqref{eq:kkt3}, the solution is obtained at the lower bound $y=\frac{1}{4}-\Phi(\theta_1^{(1)})\Phi(\theta_1^{(2)})+\frac{\Phi(\theta_1^{(1)})}{2}$. The KKT conditions then reduce to
	\begin{equation}
		-\ln\bigg(\frac{\frac{1}{4}}{\frac{1}{4}+y} \bigg) +\ln \bigg(\frac{\frac{1}{4}}{1-\Phi(\theta_1^{(1)})\Phi(\theta_1^{(2)})-y} \bigg)-\lambda_2=0.
	\end{equation}
	Substituting $y=\frac{1}{4}-\Phi(\theta_1^{(1)}) \Phi(\theta_1^{(2)})+\frac{\Phi(\theta_1^{(1)})}{2}$ into this yields
	\begin{equation}
		\lambda_2=\ln \bigg(\frac{\frac{1}{2}-\Phi(\theta_1^{(1)})\Phi(\theta_1^{(2)})+\frac{\Phi(\theta_1^{(1)})}{2}}{\frac{3}{4}-\frac{\Phi(\theta_1^{(1)})}{2}} \bigg).
	\label{eq:ratio} \end{equation}
	The difference between the numerator and denominator in the logarithm above is
	\begin{equation}
		\frac{1}{2}-\Phi(\theta_1^{(1)}) \Phi(\theta_1^{(2)})+\frac{\Phi(\theta_1^{(1)})}{2}-(\frac{3}{4}-\frac{\Phi(\theta_1^{(1)})}{2})=\Phi(\theta_1^{(1)})(1-\Phi(\theta_1^{(2)})-\frac{1}{4}.
	\end{equation}
	Therefore, $\lambda_2 > 0$ if and only if
	\begin{equation}\label{eq:unicondition21}
		\Phi(\theta_1^{(1)})(1-\Phi(\theta_1^{(2)}) > \frac{1}{4}.
	\end{equation}
	Similarly, the complementary slackness condition \eqref{eq:kkt2} is satisfied with $\lambda_1=0$ and $y=\frac{1}{4}-\Phi(\theta_1^{(1)}) \Phi(\theta_1^{(2)})+\frac{\Phi(\theta_1^{(1)})}{2}$.
	Note that the constraint $\frac{1}{2}-\frac{\Phi(\theta_1^{(2)})}{2}-y\ge 0$ is trivially satisfied because
	\begin{align}
	\frac{1}{2}-\frac{\Phi(\theta_1^{(2)})}{2}-y&=	\dfrac{1}{2}-\dfrac{\Phi(\theta_1^{(2)})}{2}-\bigg(\frac{1}{4}-\Phi(\theta_1^{(1)})\Phi(\theta_1^{(2)})+\frac{\Phi(\theta_1^{(1)})}{2} \bigg) \\
	&=(\frac{1}{2}-\Phi(\theta_1^{(1)}))(\frac{1}{2}-\Phi(\theta_1^{(2)}))\ge0,\label{eq:unicondition22}
	\end{align}
where the last inequality follows from $\theta_1^{(j)}\le 0$ for $j=1,2.$
	Hence, if  $\Phi(\theta_1^{(1)})(1-\Phi(\theta_1^{(2)})) > \dfrac{1}{4}$, the least favorable pair is
	\begin{align}
		(q_0(0,0),q_0(1,1),q_0(1,0),q_0(0,1))&=\Big(\frac{1}{4},\frac{1}{4}, \frac{1}{4}, \frac{1}{4}\Big)\label{eq:sol1}\\
		(q_1(0,0),q_1(1,1),q_1(1,0),q_1(0,1))&=\Big(\frac{1}{4},\Phi(\theta_1^{(1)})\Phi(\theta_1^{(2)}),\label{eq:sol2} \frac{1}{4}-\Phi(\theta_1^{(1)}) (\Phi(\theta_1^{(2)})-\frac{1}{2}),\frac{1}{2}(1-\Phi(\theta_1^{(1)}))\Big),
	\end{align}
	where we used $y=q_1(1,0)$, $q_1(0,1)=3/4-\Phi(\theta_1^{(1)}) \Phi(\theta_1^{(2)})-y$, and $y=\frac{1}{4}-\Phi(\theta_1^{(1)}) \Phi(\theta_1^{(2)})+\frac{\Phi(\theta_1^{(1)})}{2}$.
	Hence, the likelihood ratio statistic is given by
	\begin{align}
		\Lambda(s)&=
		\begin{cases}
			1& s=(0,0)\\
			4\Phi(\theta_1^{(1)})\Phi(\theta_1^{(2)})& s=(1,1)\\
			1-4\Phi(\theta_1^{(1)}) (\Phi(\theta_1^{(2)})-\frac{1}{2})&s=(1,0)\\
			2(1-\Phi(\theta_1^{(1)}))&s=(0,1).
		\end{cases}
	\end{align}
	By \eqref{eq:unicondition21} and $\theta^{(j)}_1<0,j=1,2$, it is straightforward to show
	\begin{align}
		4\Phi(\theta_1^{(1)})\Phi(\theta_1^{(2)})<1<2(1-\Phi(\theta_1^{(1)}))\le 1-4\Phi(\theta_1^{(1)}) (\Phi(\theta_1^{(2)})-\frac{1}{2}).
	\end{align}
	As in Case 1, we set $C=1-4\Phi(\theta_1^{(1)}) (\Phi(\theta_1^{(2)})-\frac{1}{2})$ and note that $Q_0(\pi(s)\ge C)=Q_0(s=(1,0))=1/4>\alpha$ by our assumption. This implies that we should set $\gamma=4\alpha.$ This gives the minimax test in \eqref{eq:game_minmax2}.

	\noindent Case 3 ($\lambda_1>0$ and $\lambda_2=0$): The argument is similar to the one for Case 2. Hence, we omit the proof.
\end{proof}

\subsubsection{Exact Conditional UMP Tests in the Entry Game}\label{ssec:ump_game_results}
The symmetric path used in the main text (in Section \ref{ssec:ump_game}) is chosen for exposition. The same
conditional-UMP logic can be applied to other directed paths, provided that
the path is restricted to a subset on which the LFP regime does not change
and the conditional likelihood ratios are monotone in a common statistic.

For example, if a path $w\mapsto\theta_w$ remains in Case I of
Proposition~\ref{prop:entry_game}, the relevant ordering statistic is the monopoly-versus-duopoly
comparison used in the main text. If the path remains in Case II, the
conditional ordering statistic is the number of $(1,0)$ outcomes, after
conditioning on the selection-invariant counts of $(0,0)$, $(1,1)$, and total
monopoly outcomes. If the path remains in Case III, the analogous statistic is
the number of $(0,1)$ outcomes. Thus, for any such fixed-regime path, the
high-level conditions of Proposition~\ref{prop:cond-ump} can be verified
with the corresponding conditioning and ordering statistics. We demonstrate this below.

For $\theta=(\theta^{(1)},\theta^{(2)})$ with $\theta^{(1)}<0$ and $\theta^{(2)}<0$, define
\[
    a(\theta)=\Phi(\theta^{(1)}),
    \qquad
    b(\theta)=\Phi(\theta^{(2)}),
\]
where $\Phi$ is the standard normal distribution function.  We write $a_w=a(\theta_w)$ and $b_w=b(\theta_w)$ along a directed path $w\mapsto \theta_w$.

The least favorable pair for testing $\theta_0$ against $\theta$ has three regimes.  Define
\begin{align}
    \Theta_I
    &=
    \left\{
    \theta<0:
    a(\theta)\{1-b(\theta)\}\leq \frac14,
    \quad
    b(\theta)\{1-a(\theta)\}\leq \frac14
    \right\},
    \label{eq:ThetaI}\\
    \Theta_{II}
    &=
    \left\{
    \theta<0:
    a(\theta)\{1-b(\theta)\}>\frac14
    \right\},
    \label{eq:ThetaII}\\
    \Theta_{III}
    &=
    \left\{
    \theta<0:
    b(\theta)\{1-a(\theta)\}>\frac14
    \right\}.\label{eq:ThetaIII}
\end{align}
For later use, let
\[
    p_M(\theta)
    =\frac34-a(\theta)b(\theta)
\]
be the probability, under parameter $\theta$, that the market produces a monopoly outcome. Also define the probabilities of the regions in which the monopoly outcome is uniquely determined:
\begin{align*}
    \ell_{10}(\theta)
    &=\frac14-a(\theta)b(\theta)+\frac{a(\theta)}2,
    \\
    \ell_{01}(\theta)
    &=\frac14-a(\theta)b(\theta)+\frac{b(\theta)}2.
\end{align*}
Thus $\ell_{10}(\theta)$ is the probability of the latent region in which $(1,0)$ is the unique equilibrium outcome, and $\ell_{01}(\theta)$ is the analogous probability for $(0,1)$.

For a sample $s^n=(s_1,\ldots,s_n)$, define the outcome counts
\[
    N_{ab}(s^n)=\sum_{i=1}^n 1\{s_i=(a,b)\},
    \qquad (a,b)\in S,
\]
and define the monopoly count $N_M(s^n)=N_{10}(s^n)+N_{01}(s^n).$

Let $\{\theta_w:w\in\cW_1\}$ be any directed path with $\theta_w^{(1)}<0$ and $\theta_w^{(2)}<0$ for all $w\in\cW_1$.  The path may be linear, for example $\theta_w=-w(\zeta_1,\zeta_2)$, or nonlinear.  The important requirement is that it stays in one of the three parameter regions on the chosen $\cW_1$. Then, the following characterization holds.

\begin{proposition}[Exact conditional UMP tests along fixed-regime paths]
\label{thm:entry-fixed-regime}
Consider the binary entry game described above, with null
$\theta_0=(0,0)$ and $U_i\stackrel{i.i.d.}{\sim}N(0,I_2)$ across markets.
Let $\{\theta_w:w\in\cW_1\}$ be a directed family of alternatives with
$\theta_w^{(1)}<0$ and $\theta_w^{(2)}<0$ for all $w\in\cW_1$. Suppose that
the path stays entirely in one of the three LFP regimes
\eqref{eq:ThetaI}--\eqref{eq:ThetaIII}. Then the corresponding conditional
LR test, defined below with critical values and boundary randomization
chosen to have conditional level $\alpha$ under the least favorable null law,
has exact conditional robust level and is conditionally UMP in power
guarantee on the directed family.

\begin{enumerate}[]
\item \textbf{Case I:} Suppose $\theta_w\in\Theta_I$ for all $w\in\cW_1$.
Let
\[
    K_n^I=N_{11}+N_{10}+N_{01},
    \qquad
    T_n^I=-N_{11}.
\]
Conditional on $K_n^I=k$, the least favorable null law is
\[
    N_{11}\mid K_n^I=k
    \sim
    \Bin\left(k,\frac13\right),
\]
and the least favorable alternative law is
\[
    N_{11}\mid K_n^I=k
    \sim
    \Bin\left(k,p_I(w)\right),
    \qquad
    p_I(w)=\frac{4a_wb_w}{3}<\frac13.
\]
Thus the exact conditional UMP test rejects for small values of $N_{11}$
conditional on $K_n^I$, or equivalently for large values of $T_n^I=-N_{11}$.

\item \textbf{Case II:} Suppose $\theta_w\in\Theta_{II}$ for all $w\in\cW_1$.  Let
\[
    K_n^{II}=(N_{00},N_{11},N_M),
    \qquad
    T_n^{II}=N_{10}.
\]
For $K_n^{II}=(n_{00},n_{11},m)$, the least favorable null law is
\[
    N_{10}\mid K_n^{II}=(n_{00},n_{11},m)
    \sim
    \Bin\left(m,\frac12\right),
\]
and the least favorable alternative law is
\[
    N_{10}\mid K_n^{II}=(n_{00},n_{11},m)
    \sim
    \Bin\left(m,p_{II}(w)\right),
\]
where
\[
    p_{II}(w)
    =
    \frac{\ell_{10}(\theta_w)}{p_M(\theta_w)}
    =
    \frac{\frac14-a_wb_w+\frac{a_w}{2}}{\frac34-a_wb_w}
    >\frac12.
\]
Thus the exact conditional UMP test rejects for large values of $N_{10}$ conditional on $(N_{00},N_{11},N_M)$.

\item \textbf{Case III:} Suppose $\theta_w\in\Theta_{III}$ for all $w\in\cW_1$.  Let
\[
    K_n^{III}=(N_{00},N_{11},N_M),
    \qquad
    T_n^{III}=N_{01}.
\]
For $K_n^{III}=(n_{00},n_{11},m)$, the least favorable null law is
\[
    N_{01}\mid K_n^{III}=(n_{00},n_{11},m)
    \sim
    \Bin\left(m,\frac12\right),
\]
and the least favorable alternative law is
\[
    N_{01}\mid K_n^{III}=(n_{00},n_{11},m)
    \sim
    \Bin\left(m,p_{III}(w)\right),
\]
where
\[
    p_{III}(w)
    =
    \frac{\ell_{01}(\theta_w)}{p_M(\theta_w)}
    =
    \frac{\frac14-a_wb_w+\frac{b_w}{2}}{\frac34-a_wb_w}
    >\frac12.
\]
Thus the exact conditional UMP test rejects for large values of $N_{01}$ conditional on $(N_{00},N_{11},N_M)$.
\end{enumerate}
\end{proposition}

\begin{proof}
We verify the conditional least favorable laws and the monotone
likelihood-ratio ordering in each fixed regime, and then apply
Proposition~\ref{prop:cond-ump}.

Consider Case I.  The statistic
\[
    K_n^I=N_{11}+N_{10}+N_{01}
\]
is selection-invariant: it counts markets with at least one entrant.
The statistic $N_{11}$ is also selection-invariant, because the duopoly
outcome is uniquely determined by the latent variables. Under the null,
conditional on $K_n^I=k$,
\[
    N_{11}\mid K_n^I=k
    \sim
    \Bin\left(k,\frac13\right).
\]
Under $\theta_w$, the probability of duopoly is $a_wb_w$, while the
probability of at least one entrant is $3/4$. Hence, conditional on
$K_n^I=k$,
\[
    N_{11}\mid K_n^I=k
    \sim
    \Bin\left(k,\frac{a_wb_w}{3/4}\right)
    =
    \Bin\left(k,\frac{4a_wb_w}{3}\right).
\]
Since $\theta_w^{(1)}<0$ and $\theta_w^{(2)}<0$, we have
$a_w<1/2$ and $b_w<1/2$, so $4a_wb_w/3<1/3$.  The likelihood ratio of
$\Bin(k,p_I(w))$ relative to $\Bin(k,1/3)$ is therefore decreasing in
$N_{11}$, or equivalently increasing in $T_n^I=-N_{11}$.  This verifies the
conditional least favorable law and monotone likelihood-ratio conditions in
Case I.

Next, consider Case II.  Condition on
\[
    K_n^{II}=(N_{00},N_{11},N_M)=(n_{00},n_{11},m).
\]
This conditioning statistic is selection-invariant: $N_{00}$ and $N_{11}$
are uniquely determined by the latent variables, and $N_M=N_{10}+N_{01}$
counts observations in the monopoly region, regardless of how the two
monopoly outcomes are selected.

Under the null, the model is complete and, conditional on the $m$ monopoly
observations, the two monopoly outcomes are equally likely. Hence
\[
    N_{10}\mid K_n^{II}=(n_{00},n_{11},m)
    \sim
    \Bin\left(m,\frac12\right).
\]

Under $\theta_w$, decompose the monopoly region into the unique $(1,0)$
region, the unique $(0,1)$ region, and the ambiguous monopoly region.  The
unique $(1,0)$ region has probability $\ell_{10}(\theta_w)$, and the total
monopoly region has probability
\[
    p_M(\theta_w)=\frac34-a_wb_w.
\]
Let $A_{10}$ denote the number of observations in the unique $(1,0)$ region
among the $m$ monopoly observations, and let $R_{10}$ denote the number of
ambiguous monopoly observations selected into $(1,0)$.  Note that
\[
    N_{10}=A_{10}+R_{10},
    \qquad
    R_{10}\geq 0.
\]
Therefore, for any upper-tail rejection rule in $N_{10}$, the rejection
probability is minimized by selecting all ambiguous monopoly observations
into $(0,1)$, so that $R_{10}=0$. Under this least favorable selection,
\[
    N_{10}=A_{10},
    \qquad
    A_{10}\mid K_n^{II}=(n_{00},n_{11},m)
    \sim
    \Bin\left(m,\frac{\ell_{10}(\theta_w)}{p_M(\theta_w)}\right).
\]
Thus the least favorable alternative conditional law is
\[
    N_{10}\mid K_n^{II}=(n_{00},n_{11},m)
    \sim
    \Bin(m,p_{II}(w)),
\]
where
\[
    p_{II}(w)
    =
    \frac{\ell_{10}(\theta_w)}{p_M(\theta_w)}
    =
    \frac{\frac14-a_wb_w+\frac{a_w}{2}}
         {\frac34-a_wb_w}.
\]
Moreover,
\[
    \ell_{10}(\theta_w)-\frac12p_M(\theta_w)
    =
    \frac12\left[
    a_w(1-b_w)-\frac14
    \right]
    >0,
\]
where the inequality is due to the definition of $\Theta_{II}$. Hence
$p_{II}(w)>1/2$. The likelihood ratio of $\Bin(m,p_{II}(w))$ relative to
$\Bin(m,1/2)$ is increasing in $N_{10}$. This verifies the high-level
conditions of Proposition~\ref{prop:cond-ump} for Case II.

The proof for Case III is identical after interchanging the roles of the two
players, and is therefore omitted.

In all three cases, the least favorable null conditional law does not depend
on $w$, and the least favorable alternative conditional likelihood ratios are
monotone in the stated ordering statistic. Proposition~\ref{prop:cond-ump}
therefore yields exact conditional robust level and conditional UMP optimality
in power guarantee along the fixed-regime path.
\end{proof}

\subsection{Example 2: (Binary) Roy model}\label{sec_apdx:roy}
In this section, we provide details on Example \ref{ex:roy} illustrating the calculation of the least favorable pair and minimax tests. Recall that $S=\{(0,0),(0,1),(1,0),(1,1)\}$, and the sharp identifying restrictions are given by
	\begin{align}
		\theta^{(1,0)}&\le P(\{(1,0)\})\label{eq:roy_sir1}\\
		\theta^{(0,1)}&\le P(\{(1,1)\})\label{eq:roy_sir2}\\
		\theta^{(0,0)}&= P(\{(0,0)\})+P(\{(0,1)\}).\label{eq:roy_sir3}
	\end{align}
\subsubsection{LFP}
We start with the following characterization of the LFP for the hypotheses considered in the text. For this, let $\bar c>0$ be a known constant.
\begin{proposition}\label{prop:roy_LFP}
Let $(S,U,G,\Theta)$ be defined as in Example \ref{ex:roy}. Let $m_\theta$ be a discrete distribution on $U$ whose distribution is uniquely determined by $\theta=(\theta^{(0,0)},\theta^{(0,1)},\theta^{(1,0)})'.$ Suppose (i) $\theta^{(0,0)}_0=\theta^{(0,0)}_1=\bar c>0$ and $\theta^{(1,0)}_0<1-\bar c-\theta^{(0,1)}_0$; and (ii) $\theta^{(1,0)}_1>1-\bar c-\theta^{(0,1)}_0$.

\noindent
Case 1: If $\theta_1^{(1,0)}< 1-\bar c-\theta^{(0,1)}_1$, the following are the densities of an LFP $(Q_0,Q_1)\in\mathcal P_{\theta_0}\times\mathcal P_{\theta_1}$:
\begin{align}
	(q_0(0,0),q_0(0,1),q_0(1,0),q_0(1,1))'&=(\frac{\bar c}{2},\frac{\bar c}{2},1-\bar c-\theta^{(0,1)}_0,\theta^{(0,1)}_0)'\\
	(q_1(0,0),q_1(0,1),q_1(1,0),q_1(1,1))'&=(\frac{\bar c}{2},\frac{\bar c}{2},\theta^{(1,0)}_1,1-\bar c-\theta^{(1,0)}_1)'.
\end{align}

\noindent
Case 2: If $\theta_1^{(1,0)}= 1-\bar c-\theta^{(0,1)}_1$,  the following are the densities of an LFP $(Q_0,Q_1)\in\mathcal P_{\theta_0}\times\mathcal P_{\theta_1}$:
\begin{align}
	(q_0(0,0),q_0(0,1),q_0(1,0),q_0(1,1))'&=(\frac{\bar c}{2},\frac{\bar c}{2},1-\bar c-\theta^{(0,1)}_0,\theta^{(0,1)}_0)'\\
	(q_1(0,0),q_1(0,1),q_1(1,0),q_1(1,1))'&=(\frac{\bar c}{2},\frac{\bar c}{2},\theta^{(1,0)}_1,\theta^{(0,1)}_1)'.
\end{align}
\end{proposition}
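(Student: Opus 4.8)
The plan is to solve the convex program \eqref{eq:opt_ex1} directly, taking advantage of the sparse structure of the belief function induced by $G$. First I would record the sharp identifying restrictions. Since $G(u)$ takes only the four values listed in Example \ref{ex:roy}, one computes $\nu_\theta(\{(1,0)\})=\theta^{(1,0)}$, $\nu_\theta(\{(1,1)\})=\theta^{(0,1)}$, and $\nu_\theta(\{(0,0),(0,1)\})=\nu^*_\theta(\{(0,0),(0,1)\})=\theta^{(0,0)}$; a short case check shows that every other subset of $S$ gives a constraint implied by these three together with $p\ge 0$ and $\sum_s p(s)=1$, i.e. $\{\{(1,0)\},\{(1,1)\},\{(0,0),(0,1)\}\}$ is a core-determining class. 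Hence $P\in\mathcal P_\theta$ iff $p(1,0)\ge\theta^{(1,0)}$, $p(1,1)\ge\theta^{(0,1)}$, and $p(0,0)+p(0,1)=\theta^{(0,0)}$. With $\theta^{(0,0)}_0=\theta^{(0,0)}_1=\bar c$ this also forces $p_j(1,0)+p_j(1,1)=1-\bar c$ for $j=0,1$. Combining (i) and (ii) shows $\mathcal P_{\theta_0}$ and $\mathcal P_{\theta_1}$ are disjoint (any $P\in\mathcal P_{\theta_1}$ has $P(\{(1,0)\})\ge\theta^{(1,0)}_1>1-\bar c-\theta^{(0,1)}_0\ge P'(\{(1,0)\})$ for any $P'\in\mathcal P_{\theta_0}$), so Lemma \ref{thm:neyman_pearson} and the characterization \eqref{eq:optimization}--\eqref{eq:opt_ex1} apply.

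Next I would note that the program decouples. The objective $\sum_s g(p_0(s),p_1(s))$ with $g(x,y)=(x+y)\ln\frac{x+y}{x}$ (a perspective function, hence jointly convex) is separable over $s$, and the constraint set factorizes into a block involving only $(p_j(0,0),p_j(0,1))$ (through $p_j(0,0)+p_j(0,1)=\bar c$) and a block involving only $(p_j(1,0),p_j(1,1))$. For the first block, Jensen gives $g(p_0(0,0),p_1(0,0))+g(p_0(0,1),p_1(0,1))\ge 2g(\bar c/2,\bar c/2)$ with equality at $p_j(0,0)=p_j(0,1)=\bar c/2$, which pins down the first two coordinates of $Q_0$ and $Q_1$. (As a pair of measures the minimizer of this block is not unique — any common split $(t,\bar c-t)$ attains the value — but the induced likelihood ratio equals $1$ on $\{(0,0),(0,1)\}$ regardless, consistent with the uniqueness statement in HS.)

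For the second block, write $c_j=p_j(1,0)$, so $p_j(1,1)=1-\bar c-c_j$, with $c_0\in[\theta^{(1,0)}_0,\,1-\bar c-\theta^{(0,1)}_0]$ and $c_1\in[\theta^{(1,0)}_1,\,1-\bar c-\theta^{(0,1)}_1]$ (a single point in Case 2). I would minimize $F(c_0,c_1)=g(c_0,c_1)+g(1-\bar c-c_0,\,1-\bar c-c_1)$ over this rectangle. Differentiating, $\partial_{c_0}F=\psi(c_0,c_1)-\psi(1-\bar c-c_0,\,1-\bar c-c_1)$ where $\psi(x,y)=\ln(1+y/x)-y/x$ is strictly decreasing in $y/x$, and $\partial_{c_1}F=\ln\frac{c_0+c_1}{c_0}-\ln\frac{(1-\bar c-c_0)+(1-\bar c-c_1)}{1-\bar c-c_0}$. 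Hypotheses (i)--(ii) force $c_1\ge\theta^{(1,0)}_1>1-\bar c-\theta^{(0,1)}_0\ge c_0$ on the feasible rectangle, so $c_1/c_0>1>(1-\bar c-c_1)/(1-\bar c-c_0)$; this makes $\partial_{c_0}F<0$ and $\partial_{c_1}F>0$ throughout, so the minimum sits at $c_0=1-\bar c-\theta^{(0,1)}_0$ and $c_1=\theta^{(1,0)}_1$. Reading off $p_0(1,0)=1-\bar c-\theta^{(0,1)}_0$, $p_0(1,1)=\theta^{(0,1)}_0$, $p_1(1,0)=\theta^{(1,0)}_1$, $p_1(1,1)=1-\bar c-\theta^{(1,0)}_1$ gives Case 1, and in Case 2 the constraint $\theta^{(1,0)}_1=1-\bar c-\theta^{(0,1)}_1$ turns $p_1(1,1)$ into $\theta^{(0,1)}_1$.

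The main obstacle I anticipate is the corner/monotonicity step: one must check that the signs of the two partials are pinned down by the disjointness hypothesis uniformly over the rectangle, and handle the boundary configurations in which a coordinate of $Q_0$ or $Q_1$ vanishes and $g$ blows up (which I would either exclude by a mild positivity assumption on $\theta_0^{(0,1)}$, or treat by continuity, since \eqref{eq:opt_ex1} was stated under $p_0(s)>0$). As an independent cross-check that sidesteps the program, one can take the claimed $(Q_0,Q_1)$, observe that the likelihood ratio $\Lambda=dQ_1/dQ_0$ equals $1$ on $\{(0,0),(0,1)\}$, equals $\theta^{(1,0)}_1/(1-\bar c-\theta^{(0,1)}_0)>1$ on $(1,0)$, and equals $(1-\bar c-\theta^{(1,0)}_1)/\theta^{(0,1)}_0<1$ (resp. $\theta^{(0,1)}_1/\theta^{(0,1)}_0<1$) on $(1,1)$, and then verify \eqref{p:lfp*}--\eqref{p:lfp} by evaluating both sides on the superlevel sets $S$, $\{(0,0),(0,1),(1,0)\}$, $\{(1,0)\}$, $\emptyset$; feasibility $Q_0\in\mathcal P_{\theta_0}$, $Q_1\in\mathcal P_{\theta_1}$ is immediate from (i)--(ii), and Lemma \ref{thm:neyman_pearson} together with the HS uniqueness of the LFP finishes it.
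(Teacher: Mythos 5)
Your proposal is correct, and it reaches the paper's answer by a genuinely different route. The paper also solves the convex program \eqref{eq:optimization}, but it does so by writing out the full KKT system: it concentrates out the $\{(0,0),(0,1)\}$ block by showing the stationarity conditions force $p_1(s)/p_0(s)$ to be constant there (so any common split works, $(\bar c/2,\bar c/2)$ being one choice), and then handles the $\{(1,0),(1,1)\}$ block by guessing the sign pattern of the Lagrange multipliers ($\chi_1=0,\chi_2>0,\chi_3>0$, with $\chi_4=0$ or $\chi_4>0$ distinguishing Cases 1 and 2) and verifying complementary slackness. You instead dispose of the first block with a one-line Jensen argument on the perspective function, and of the second block by showing that $\partial_{c_0}F<0$ and $\partial_{c_1}F>0$ uniformly on the feasible rectangle, so the minimizer must sit at the corner $(1-\bar c-\theta_0^{(0,1)},\theta_1^{(1,0)})$; the key inequality $c_1/c_0>1>(1-\bar c-c_1)/(1-\bar c-c_0)$ is exactly hypothesis (ii), and your sign computations for $\psi$ and the two partials check out. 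This monotonicity argument buys you something: it avoids having to guess which constraints are active and sidesteps the paper's somewhat terse "it can be shown that the system can be solved for multipliers with the right signs." Your closing cross-check --- exhibiting $(Q_0,Q_1)$, computing $\Lambda$, and verifying \eqref{p:lfp*}--\eqref{p:lfp} on the four superlevel sets --- is an independent and arguably cleaner certification that bypasses the optimization characterization entirely (I verified the arithmetic: e.g.\ $\nu^*_{\theta_0}(\{(1,0)\})=1-\bar c-\theta_0^{(0,1)}=q_0(1,0)$ and $\nu_{\theta_1}(\{(0,0),(0,1),(1,0)\})=\bar c+\theta_1^{(1,0)}=Q_1$ of that set). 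The boundary issue you flag (vanishing coordinates making $g$ blow up, or the ratios $c_1/c_0$ degenerating) is real but no worse than in the paper, which likewise assumes interior solutions (e.g.\ $p_0(0,0)\in(0,\bar c)$) without comment; your proposed fix by continuity or a mild positivity assumption on $\theta_0^{(0,1)}$ and $\theta_0^{(1,0)}$ is adequate.
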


Before giving a proof of the claim above, a few remarks are in order.
To simplify the analysis, we assume that $\theta^{(0,0)}_0=\theta^{(0,0)}_1$ are known. Recall that the sharp identifying restrictions in \eqref{eq:roy_sir1}-\eqref{eq:roy_sir3} imply
\begin{align}
	\theta^{(1,0)}_0\le P_0(\{(1,0)\})\le 1-\bar c-\theta^{(0,1)}_0.
\end{align}
The assumption $\theta^{(1,0)}_0<1-\bar c-\theta^{(0,1)}_0$ therefore makes the model incomplete at $\theta_0$. Finally, the assumption $\theta^{(1,0)}_1>1-\bar c-\theta^{(0,1)}_0$ is equivalent to $\nu_{\theta_1}(\{(1,0)\})>\nu_{\theta_0}(\{(1,0)\})$, which ensures that $\mathcal P_{\theta_0}\cap \mathcal P_{\theta_1}=\emptyset.$

\begin{proof}
The following convex program characterizes the LFP:
\begin{align}
	\min_{(p_0,p_1)\in \Delta^3\times\Delta^3}&\sum_{s\in\{(0,0),(0,1),(1,0),(1,1)\}}-\ln \big(\frac{p_0(s)}{p_0(s)+p_1(s)}\big)(p_0(s)+p_1(s))\\
	s.t.~& 	
	 \theta^{(1,0)}_0\le p_0(1,0)\notag\\
	&\theta^{(1,0)}_1\le p_1(1,0)\notag\\
	&\theta^{(0,1)}_0\le p_0(1,1)\notag\\
	&\theta^{(0,1)}_1\le p_1(1,1)\notag\\
	&\bar c= p_0(0,0)+p_0(0,1)\notag	\\
	&\bar c= p_1(0,0)+p_1(0,1)\notag	
\end{align}
First, we concentrate out $p_0(0,0),p_1(0,0),p_0(0,1),p_1(0,1)$ from the problem. The subset of the KKT conditions that involves these components are, for $s\in\{(0,0),(0,1)\}$
\begin{align}
	&-\frac{p_0(s)+p_1(s)}{p_0(s)}\frac{p_0(s)+p_1(s)-p_0(s)}{(p_0(s)+p_1(s))^2}(p_0(s)+p_1(s))-\ln \big(\frac{p_0(s)}{p_0(s)+p_1(s)}\big)-\lambda_1=0\\
	&-\frac{p_0(s)+p_1(s)}{p_0(s)}\frac{-p_0(s)}{(p_0(s)+p_1(s))^2}(p_0(s)+p_1(s))-\ln \big(\frac{p_0(s)}{p_0(s)+p_1(s)}\big)-\lambda_2=0\\
	&	p_0(0,0)+p_0(0,1)=\bar c	\label{eq:KKTeqc1}\\
	&	p_1(0,0)+p_1(0,1)=\bar c,	\label{eq:KKTeqc2}
\end{align}
for some $\lambda_1\ne 0$ and $\lambda_2\ne 0.$ The first two conditions can be simplified as
\begin{align}
	-\frac{p_1(s)}{p_0(s)}-\ln \big(\frac{p_0(s)}{p_0(s)+p_1(s)}\big)-\lambda_1&=0\\
	1-\ln \big(\frac{p_0(s)}{p_0(s)+p_1(s)}\big)-\lambda_2&=0.
\end{align}
This can be further simplified as
\begin{align}
\frac{p_1(s)}{p_0(s)}=\lambda_2-\lambda_1-1,~s\in\{(0,0),(0,1)\}
\end{align}
which implies $\frac{p_1(0,0)}{p_0(0,0)}=\frac{p_1(0,1)}{p_0(0,1)},$ and hence, if $p_0(0,0)\in (0,\bar c),$
\begin{align}
	\frac{p_1(0,0)}{p_1(0,1)}=\frac{p_0(0,0)}{p_0(0,1)}=\beta
\end{align}
for some $0< \beta<\infty$.
This, together with \eqref{eq:KKTeqc1} yields
\begin{align}
	(p_j(0,0),p_j(0,1))=(\frac{\beta \bar c}{1+\beta},\frac{\bar c}{1+\beta}), ~j=0,1.
\end{align}
For example, one can take the following as a solution:
\begin{align}
	(q_j(0,0),q_j(0,1))=(\frac{\bar c}{2},\frac{\bar c}{2}).
\end{align}
After concentrating out $p_0(0,0),p_1(0,0),p_0(0,1),p_1(0,1)$, the problem becomes
\begin{align}
	\min_{(p_0,p_1)\in \Delta^3\times\Delta^3}&\sum_{s\in\{(1,0),(1,1)\}}-\ln \big(\frac{p_0(s)}{p_0(s)+p_1(s)}\big)(p_0(s)+p_1(s))\\
	s.t.~& 	
	 \theta^{(1,0)}_0\le p_0(1,0) \notag\\
	&\theta^{(1,0)}_1\le p_1(1,0) \notag\\
	&\theta^{(0,1)}_0\le p_0(1,1)\notag\\
	&\theta^{(0,1)}_1\le p_1(1,1)\notag\\
	&p_0(1,0)+p_0(1,1)=1-\bar c\notag\\
	&p_1(1,0)+p_1(1,1)=1-\bar c.\notag
\end{align}
The KKT conditions are
\begin{align}
	&-\frac{p_1(1,0)}{p_0(1,0)}-\ln \big(\frac{p_0(1,0)}{p_0(1,0)+p_1(1,0)}\big)-\chi_1-\chi_5=0 \label{eq:KKT_Roy1}\\
	&-\frac{p_1(1,1)}{p_0(1,1)}-\ln \big(\frac{p_0(1,1)}{p_0(1,1)+p_1(1,1)}\big)-\chi_3-\chi_5=0 \label{eq:KKT_Roy2}\\
	&~~1-\ln \big(\frac{p_0(1,0)}{p_0(1,0)+p_1(1,0)}\big)-\chi_2-\chi_6=0						   \label{eq:KKT_Roy3}\\
	&~~1-\ln \big(\frac{p_0(1,1)}{p_0(1,1)+p_1(1,1)}\big)-\chi_4-\chi_6=0						   \label{eq:KKT_Roy4}\\
	&~~\chi_1(p_0(1,0)  - \theta^{(1,0)}_0)=0												   \label{eq:KKT_Roy5}\\
	&~~\chi_2(p_1(1,0)  - \theta^{(1,0)}_1)=0												   \label{eq:KKT_Roy6}\\
	&~~\chi_3(p_0(1,1)  - \theta^{(0,1)}_0)=0										   \label{eq:KKT_Roy7}\\
	&~~\chi_4(p_1(1,1)  - \theta^{(0,1)}_1)=0.									   \label{eq:KKT_Roy8}
\end{align}
where $\chi_j\ge 0$ for $j=1,\dots,4$, and the original inequality and equality constraints are also imposed.

\noindent
\textbf{Case 1: ($\chi_1=0,\chi_2>0,\chi_3>0,\chi_4=0$)}

Suppose $\chi_3>0$. Then, $p_0(1,1) =\theta^{(0,1)}_0$ by the complementary slackness condition \eqref{eq:KKT_Roy7}. It also implies $p_0(1,0)=1-\bar c-\theta^{(0,1)}_0$ by $ p_0(1,0)+ p_0(1,1)=1-\bar c$.
Further, $\chi_1=0$ because the constraints associated with $\chi_1$ and $\chi_3$ cannot bind simultaneously due to the assumption that $\theta_0^{(1,0)}<1-\bar c-\theta_0^{(0,1)}.$
Suppose further that $\chi_2>0$. Then, $p_1(1,0)=\theta^{(1,0)}_1$ and $p_1(1,1)=1-\bar c-\theta^{(1,0)}_1.$ We also assume that $\chi_4=0.$

Now note that \eqref{eq:KKT_Roy1}-\eqref{eq:KKT_Roy6} reduce to
\begin{align}
&-\frac{\theta^{(1,0)}_1}{1-\bar c-\theta^{(0,1)}_0}-\ln \big(\frac{1-\bar c-\theta^{(0,1)}_0}{1-\bar c-\theta^{(0,1)}_0+\theta^{(1,0)}_1}\big)-\chi_5=0	 \\
&-\frac{1-\bar c-\theta^{(1,0)}_1}{\theta^{(0,1)}_0}-\ln \big(\frac{\theta^{(0,1)}_0}{\theta^{(0,1)}_0+1-\bar c-\theta^{(1,0)}_1}\big)-\chi_3-\chi_5=0     \\
&~~1-\ln \big(\frac{1-\bar c-\theta^{(0,1)}_0}{1-\bar c-\theta^{(0,1)}_0+\theta^{(1,0)}_1}\big)-\chi_2-\chi_6=0						                     \\
&~~1-\ln \big(\frac{\theta^{(0,1)}_0}{\theta^{(0,1)}_0+1-\bar c-\theta^{(1,0)}_1}\big)-\chi_6=0				
\end{align}
It can be shown that, when $\theta_1^{(1,0)}>1-\bar c-\theta_0^{(0,1)}$, the system can be solved for $(\chi_2,\chi_3,\chi_5,\chi_6)$ that satisfies $\chi_j>0$ for $j=2,3,$ and hence the solution (the remaining components of LFP) is
\begin{align}
	(q_0(1,0),q_0(1,1))&=(1-\bar c-\theta^{(0,1)}_0,\theta^{(0,1)}_0)\\
	(q_1(1,0),q_1(1,1))&=(\theta^{(1,0)}_1,1-\bar c-\theta^{(1,0)}_1).
\end{align} 

\noindent
\textbf{Case 2: ($\chi_1=0,\chi_2>0,\chi_3>0,\chi_4>0$)}

The difference from Case 1 is that $\chi_4>0$ is assumed. By the complementary slackness condition, this implies $p_1(1,1)  = \theta^{(0,1)}_1$, which also equals $1-\bar c-\theta^{(1,0)}_1$ due to $\chi_2>0$ and $ p_1(1,0)+ p_1(1,1)=1-\bar c$. Therefore, the solutions are
\begin{align}
	(q_0(1,0),q_0(1,1))&=(1-\bar c-\theta^{(0,1)}_0,\theta^{(0,1)}_0)\\
	(q_1(1,0),q_1(1,1))&=(\theta^{(1,0)}_1,\theta^{(0,1)}_1).
\end{align}
The KKT conditions reduce to \begin{align}
&-\frac{\theta^{(1,0)}_1}{1-\bar c-\theta^{(0,1)}_0}-\ln \big(\frac{1-\bar c-\theta^{(0,1)}_0}{1-\bar c-\theta^{(0,1)}_0+\theta^{(1,0)}_1}\big)-\chi_5=0	 \\
&-\frac{\theta^{(0,1)}_1}{\theta^{(0,1)}_0}-\ln \big(\frac{\theta^{(0,1)}_0}{\theta^{(0,1)}_0+\theta^{(0,1)}_1}\big)-\chi_3-\chi_5=0     \\
&~~1-\ln \big(\frac{1-\bar c-\theta^{(0,1)}_0}{1-\bar c-\theta^{(0,1)}_0+\theta^{(1,0)}_1}\big)-\chi_2-\chi_6=0						                     \\
&~~1-\ln \big(\frac{\theta^{(0,1)}_0}{\theta^{(0,1)}_0+\theta^{(0,1)}_1}\big)-\chi_4-\chi_6=0.				
\end{align}
One may again show that there exists $(\chi_2,\dots,\chi_6)$ that solves the system with $\chi_j>0$ for $j=2,3,4,$ which is consistent with the assumption.
\end{proof}

\end{document}